\DeclareMathOperator*{\argmax}{arg\,max}
\DeclareMathOperator*{\argmin}{arg\,min}
\newtheorem{proposition}{Proposition}
\newtheorem{corollary}{Corollary}
\newtheorem{lemma}{Lemma}
\newcommand{\E}{\mathbb{E}}
\newcommand{\idone}[1]{\delta\!\!\left[ \vphantom{\Big(} {#1} \right]}
\newcommand{\epsilonVar}{{{\epsilon}}}
\newcommand{\optrv}{ { {r} } }
\newcommand{\groundTruth}{ r_\text{GT}  } 
\newcommand{\bestGuess}{ r_\star  } 
\newcommand{\groundTruthM}{ M_\star  } 
\newcommand{\informationGain}[3]{ \mathbb{I}({#1} \,;\, {#2} 
\if\relax\detokenize{#3}\relax\else
\; \vert \; {#3}
\fi 
)}
\newcommand{\functional}[3]{ {#1}\left(
{#2} 
\if\relax\detokenize{#3}\relax\else
\; \vert \; {#3}
\fi
 \right)}
\newcommand{\entropy}[2]{\functional{\mathbb{H}}{#1}{#2}}
\newcommand{\expect}[3]{\mathbb{E}_{#1}\left[{#2}
\if\relax\detokenize{#3}\relax\else
\; \vert \; {#3}
\fi 
\right]}
\title{Estimating $\alpha$-Rank by Maximizing Information Gain}
\author{
Tabish Rashid\footnote{Work done during an internship at MSR Cambridge.},\textsuperscript{\rm 1}
Cheng Zhang,\textsuperscript{\rm 2}
Kamil Ciosek\textsuperscript{\rm 2}\\
}
\begin{document}

\maketitle

\begin{abstract}
    Game theory has been increasingly applied in settings where the game is not known outright, but has to be estimated by sampling.
For example, meta-games that arise in multi-agent evaluation can only be accessed by running a succession of expensive experiments that may involve simultaneous deployment of several agents.
In this paper, we focus on $\alpha$-rank, a popular game-theoretic solution concept designed to perform well in such scenarios.
We aim to estimate the $\alpha$-rank of the game using as few samples as possible.
Our algorithm maximizes information gain between an epistemic belief over the $\alpha$-ranks and the observed payoff. This approach has two main benefits. 
First, it allows us to focus our sampling on the entries that matter the most for identifying the $\alpha$-rank. Second, the Bayesian formulation provides a facility to build in modeling assumptions by using a prior over game payoffs.
We show the benefits of using information gain as compared to the confidence interval criterion of ResponseGraphUCB \citep{rowland2019multiagent}, 
and provide theoretical results justifying our method.

\end{abstract}

\section{Introduction}

Traditionally, game theory is applied in situations where the game is fully known. 
More recently, empirical game theory addresses the setting where this is not the case, instead, the game is initially unknown and has to be interacted with by sampling \citep{wellman2006methods}.
One area in which this is becoming increasingly common is the ranking of trained agents relative to one another. Specifically, in the field of Reinforcement Learning game-theoretic rankings are used 
not just as a metric for measuring algorithmic progress \citep{balduzzi2018re}, but also as an integral component of many population-based training methods \citep{Muller2020A, lanctot2017unified, vinyals2019grandmaster}. In particular, for ranking, two popular solution concepts have recently emerged:   
Nash averaging \citep{balduzzi2018re, nash1951non} and $\alpha$-rank \citep{omidshafiei2019alpha}. 

In this paper, we aim to estimate the $\alpha$-rank of a game using as few samples as possible.
We use the $\alpha$-rank solution concept for two reasons. First, it admits a unique solution whose computation easily scales to $K$-player games. Second, unlike older schemes such as Elo \citep{elo1978rating}, $\alpha$-rank is designed with intransitive interactions in mind. 
Because measuring payoffs can be very expensive, it is important to do it by using as few samples as possible.
For example, playing a match of chess \citep{silver2017mastering}
can take roughly 40 minutes (assuming a typical game-length of 40 and up to 1 minute per move as used during evaluation), and playing a full game of Dota 2 can take up to 2 hours \citep{berner2019dota}.
\emph{Our objective is thus to accurately estimate the $\alpha$-rank using a small number of payoff queries.}

\citet{rowland2019multiagent} proposed ResponseGraphUCB (RG-UCB) for this purpose, inspired by pure exploration bandit literature. RG-UCB maintains confidence intervals over payoffs. When they don't overlap, it draws a conclusion about their ordering, until all comparisons relevant for the computation of $\alpha$-rank have been made. While this is provably sufficient to determine the true $\alpha$-rank with a high probability in the infinite-$\alpha$ regime, their approach has two important limitations. 
First, since the frequentist criterion is indirect, relying on payoff ordering rather than the $\alpha$-rank, the obtained payoffs aren't always used optimally. Second, it is nontrivial to include useful domain knowledge about the entries or structure of the payoff matrix. 
    
To remedy these problems, we propose a Bayesian approach. Specifically, we utilize a Gaussian Process to maintain an epistemic belief over the entries of the payoff matrix, providing a powerful framework in which to supply domain knowledge. This payoff distribution induces an epistemic belief over $\alpha$-ranks.
We determine which payoff to sample by maximizing information gain between the $\alpha$-rank belief and the obtained payoff. This allows us to focus our sampling on the entries that are expected to have the largest effect on our belief over possible $\alpha$-ranks. 

\textbf{Contributions:} Theoretically, we justify the use of information gain by showing a regret bound for a version of our criterion in the infinite-$\alpha$ regime. Empirically, our contribution is threefold. First, we compare to RG-UCB on stylized games, showing that maximizing information gain provides competitive performance by focusing on sampling the more relevant payoffs. 
Second, we evaluate another objective based on minimizing the Wasserstein divergence, which offers competitive performance while being computationally much cheaper. Finally, we demonstrate the benefit of building in prior assumptions.

\section{Background}
\label{sec:background}
\paragraph{Games and $\alpha$-Rank} 
A game with $K$ players, each of whom can play $S_k$ strategies is characterized by its expected payoffs $M \in \mathcal{R}^{S_1 \times ... \times S_K}$ \citep{fudenberg1991game}.
Letting $\mathcal{S} = S_1 \times ... \times S_K$ be the space of pure strategy profiles, the game also specifies a distribution over the payoffs associated with each player when $s \in \mathcal{S}$ is played.
The $\alpha$-rank of a game is computed by first defining an irreducible Markov Chain whose nodes are pure strategy profiles in $\mathcal{S}$. We denote the stochastic matrix defining this chain as $C$. The transition probabilities of the chain $C$ are calculated as follows:
Let $\sigma, \tau \in \mathcal{S}$ be such that $\tau$ only differs from $\sigma$ in a single player's strategy and let $\eta = (\sum_{k=1}^K (|S_k| - 1))^{-1}$ be the reciprocal of the total number of those distinct $\tau$.
Let $M_k(\sigma)$ denote the expected payoff for player $k$ when $\sigma$ is played.
Then, the probability of transitioning from $\sigma$ to $\tau$ which varies only in player $k$'s strategy is
\begin{gather*}
    C_{\sigma, \tau} =
    \begin{cases}
    \eta \frac{1 - \exp(-\alpha(M_k(\tau) - M_k(\sigma)))}{1 - \exp(-\alpha m(M_k(\tau) - M_k(\sigma)))} &  \text{if } \scriptstyle  M_k(\tau) \neq 
    M_k(\sigma), \\
    \frac{\eta}{m} & \text{otherwise}.
    \end{cases}
\end{gather*}
$C_{\sigma, \upsilon} = 0$ for all $\upsilon$ that differ from $\sigma$ in more than a single player's strategy, $C_{\sigma, \sigma} = 1 - \sum_{\tau \neq \sigma} C_{\sigma, \tau}$ to ensure a valid transition distribution, and
$\alpha \geq 0, m \in \mathbb{N}^{>0}$ are parameters of the algorithm.
We define the $\alpha$-rank $r \in \mathcal{R}^{|\mathcal{S}|}$  as the unique stationary distribution of the chain $C$ \citep{omidshafiei2019alpha, rowland2019multiagent} as $\alpha \to \infty$.
In practice, a large finite value of $\alpha$ is used, or a perturbed version of the transition matrix $C$ is used with an infinite $\alpha$ to ensure the resulting Markov Chain $C$ is irreducible.

\paragraph{Single Population $\bm{\alpha}$-Rank}
In this paper we focus on the infinite-$\alpha$ regime and restrict our attention to the 2-player single population case of $\alpha$-rank which differs slightly from above. 
Importantly, our method can be easily applied to multiple populations as described above in a straightforward way, but we focus on the single population case for simplicity.
Let $S = S_1$ and $M(\sigma, \tau)$ denote the payoff when the first player plays $\sigma$ and the second player plays $\tau$. Note that $S_1 = S_2$ since the single population case considers a player playing a game against an identical player.
In this particular setting, the $\alpha$-rank $r \in \mathcal{R}^{|S|}$ and the perturbed transition matrix $C \in \mathcal{R}^{S \times S}$ is calculated as follows:
\begin{gather*}
\small
C_{\sigma, \tau} = 
\begin{cases}
\textstyle
(|S| - 1)^{-1} (1 - \epsilon)  & \text{if } M(\tau, \sigma) > M(\sigma, \tau), \\
\textstyle
(|S| - 1)^{-1} \epsilon        & \text{if } M(\tau, \sigma) < M(\sigma, \tau), \\
\textstyle
0.5 (|S| - 1)^{-1}             & \text{if } M(\tau, \sigma) = M(\sigma, \tau),
\end{cases}
\end{gather*}
for $\sigma \neq \tau$. $C_{\sigma, \sigma} = 1 - \sum_{\tau \neq \sigma} C_{\sigma, \tau}$ again to ensure a valid transition distribution and $\epsilon$ is a small perturbation to ensure irreducibillity of the resulting chain. 
We abstract the above computation into the $\alpha$-rank function $f:\mathcal{M} \rightarrow \mathcal{R}^{|S|}$, where $\mathcal{M}$ is the space of 2-player payoff matrices with $S$ strategies for each player. 

\paragraph{Wasserstein Divergence} 
Let $p$ and $q$ be probability distributions supported on $\mathcal{X}$, and $c:\mathcal{X} \times \mathcal{X} \rightarrow [0, \infty)$ be a distance.
Define $\Pi$ as the space of all joint probability distributions with marginals $p$ and $q$. Wasserstein divergence \citep{villani2008optimal} with cost function $c$, is defined as:
$$ \textstyle
\mathcal{W}_c(p,q) := \min_{\pi \in \Pi} \int_{\mathcal{X} \times \mathcal{X}} c(x,y) d\pi(x,y).
$$
In this paper, we will utilize the Wasserstein distance between our belief distributions over $\alpha$-rank, and so we set
$\mathcal{X} = \Delta^{S-1}$, the $(S-1)$ probability simplex, and use $c(x,y) = \frac12 \| x - y\|_1$, i.e. the total variation distance.
We will drop the suffix and denote this simply as $\mathcal{W}$.

\section{Related Work}
There are many methods related to the ranking and evaluation of agents in games.
Elo \citep{elo1978rating} and TrueSkill \citep{herbrich2007trueskill, minka2018trueskill} both quantify the performance of an agent using a single number,
which means they are unable to model  \textit{intransitive} interactions.
\citet{chen2016modeling} extend TrueSkill to better model such interactions, while \citet{balduzzi2018re} do the same for Elo, improving its predictive power by introducing additional parameters. 
\citet{balduzzi2018re} also re-examines the use of Nash equilibrium, proposing to disambiguate across possible equillibria by picking the one with maximum entropy. However, it is well known that computing the Nash equilibrium is computationally difficult \citep{daskalakis2009complexity} and only computationally tractable for restricted classes of games.
In this paper, we focus on $\alpha$-rank \citep{omidshafiei2019alpha} since it has been designed with \textit{intransitive} interactions in mind, it is computationally tractable for $N$-player games and shows considerable promise as a component of self-play frameworks \citep{Muller2020A}.

Empirical Game Theory \citep{wellman2006methods} is concerned with situations in which a game can only be interacted with through sampling. The most related work to ours investigates sampling strategies and concentration inequalities for the Nash equilibrium as opposed to the $\alpha$-rank.
\citet{walsh2003choosing} introduce Heuristic Payoff Tables (HPTs) in order to choose the samples that provide the most information about the currently chosen Nash equilibrium, where information is quantified as the reduction in estimated error. 
This differs from our approach both in the use of $\alpha$-rank as opposed to the Nash equilibrium as our solution concept, and in the criterion used to select the observed payoff. \citet{tuyls2020bounds} provide concentration bounds for estimated Nash equilibria. \citet{jordan2008searching} find Nash equilibria from limited data by using information gain on distributions over strategies, a concept different from our information gain on distributions over ranks. We also utilize $\alpha$-rank as the solution concept, rather than Nash equilibria.

\citet{Muller2020A} utilise $\alpha$-rank as part of a PSRO \citep{lanctot2017unified} framework. They do not use an adaptive sampling strategy for deciding which entries to sample, but are a natural application for applying our algorithm (and RG-UCB).
\citet{yang2019alpha} introduce an approximate gradient-based algorithm which does not require access to the entire payoff matrix at once in order to compute $\alpha$-rank.
Although their method does not require the entire payoff matrix at every iteration, it is not designed for operating in the same incomplete information setting that we explore in this paper since they assume every entry can be cheaply queried with no noise.
\citet{srinivas2009gaussian} prove regret bounds for Bayesian optimization with GPs. We use their concentration result to derive our bounds as well as as inspiration for our information gain criterion.

\subsection{ResponseGraphUCB}

Closest to our work is ResponseGraphUCB (RG-UCB) introduced by \citet{rowland2019multiagent}, which can be viewed as a frequentist analogue to our method which also operates in the infinite-$\alpha$ regime.
RG-UCB first specifies an error threshold $\delta > 0$ and then samples payoffs until a stopping criteria determines the estimated $\alpha$-rank is correct with probability at least $1 - \delta$.
A key observation that RG-UCB relies on, is that in the infinite-$\alpha$ regime only the ordering between relevant payoffs is important. 
e.g. For pure strategy profiles $\sigma$ and $\tau$ (with payoffs $M_k(\sigma)$ and $M_k(\tau)$ respectively) that are used in the computation of the Markov Chain transition probabilities, determining whether $M_k(\sigma) > M_k(\tau)$ or $M_k(\sigma) < M_k(\tau)$ is enough to know the transition probability accurately (their magnitude difference $|M_k(\sigma) - M_k(\tau)|$ is unimportant).
RG-UCB maintains $(1-\delta)$ confidence intervals for all values of $M_k(\sigma)$, and determines the ordering between $\sigma$ and $\tau$ is correct when they do not overlap.
A strategy profile is chosen to be sampled until all of its ordering are correctly determined.
When all orderings are correctly determined the algorithm terminates.

Since the confidence intervals are constructed using frequentist concentration inequalities, we refer to RG-UCB as being a frequentist algorithm.
In contrast, our Bayesian perspective provides a principled method for incorporating prior knowledge into our algorithm whereas it is much more difficult to encode modelling assumptions and prior knowledge with RG-UCB.
The second important difference between our work and RG-UCB is that our information gain criterion is a direct objective, which selects the payoffs to sample based on how likely the received sample is to affect the $\alpha$-rank. On the other hand, RG-UCB works indirectly, reducing uncertainty about the orderings between individual payoffs without considering their impact on the final $\alpha$-rank, which makes it less efficient.
\citet{rowland2019multiagent} also theoretically justify the use of RG-UCB in the infinite-$\alpha$ regime by proving sample complexity results, whereas we provide asymptotic regret bounds for our approach which are commonly used to justify the sample efficiency of a Bayesian algorithm \citep{srinivas2009gaussian}. 
\citet{rowland2019multiagent} additionally provide a method for obtaining uncertainty estimates in the infinite-$\alpha$ regime, which is, however, not used as part of an adaptive sampling strategy.
\section{Method}
\label{method-section}
On a high level, our method works by maintaining an epistemic belief over $\alpha$-ranks and selecting payoffs that lead to the maximum reduction in the entropy of that belief. 
\begin{figure}[t]
    \centering
    \includegraphics[width=1\linewidth]{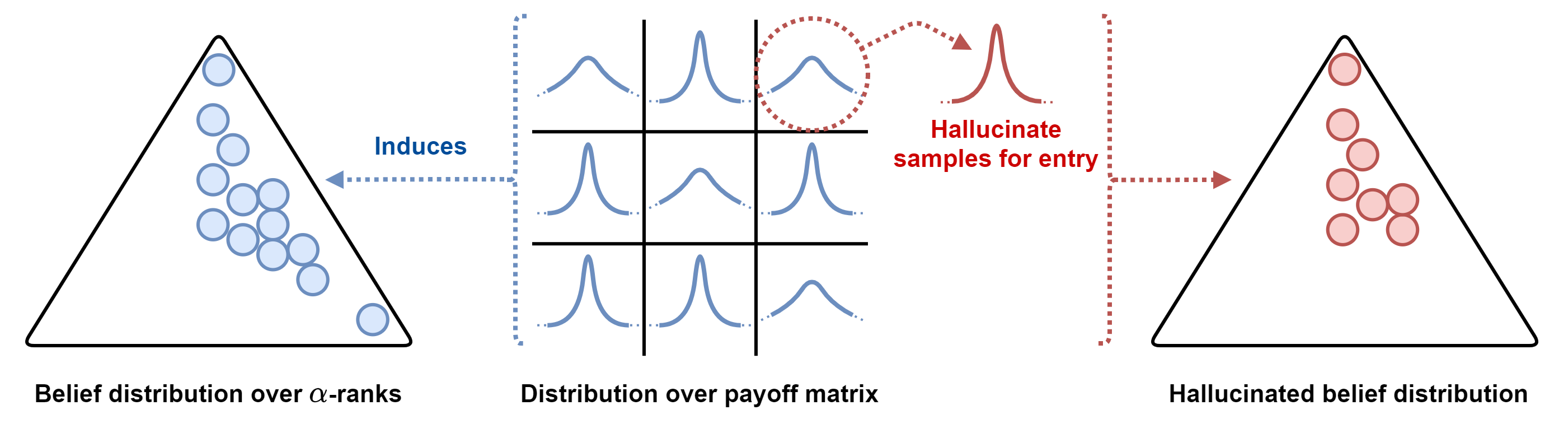}
    \caption{\textbf{Overview:} On the left, a belief over $\alpha$-ranks is induced by a belief over the payoff matrix shown in the middle. A hallucinated belief distribution is shown on the right.
    }
    \label{fig:high_level}
\end{figure}
Figure \ref{fig:high_level} provides a pictorial overview. In the middle of the figure, we maintain an explicit distribution over the entries of the payoff matrix. This payoff distribution induces a belief over $\alpha$-ranks, shown on the left. When deciding which payoff to sample, we examine hypothetical belief states after sampling, striving to end up with a belief  with the lowest entropy. One such hypothetical, or  `hallucinated' belief is shown on the right. We now describe our method formally, first describing the probabilistic model and then the  implementation.

\paragraph{Payoffs: Ground Truth and Belief}
We denote the unknown true payoff matrix as $\groundTruthM$. 
To quantify our uncertainty about what this true payoff is, we employ a Gaussian Process $M$, which also allows us to encode prior knowledge about payoff dependencies.
Our framework is sufficiently general to allow for other approaches such as Bayesian Matrix Factorization \citep{salakhutdinov2008bayesian} or probabilistic methods for Collaborate Filtering \citep{su2009survey} to be used.
We choose to use Gaussian Processes due to their flexibility in encoding prior knowledge and modelling assumptions, and their ubiquity throughout literature.

The GP models noise in the payoffs as $\tilde{M} = M + \epsilonVar$, where $\epsilonVar \sim \mathcal{N}(0, I \sigma^2_A)$.
When interacting with the game sequentially, the received payoffs are assumed to be generated as $m_t = \groundTruthM(a_t) + \epsilonVar'_t $. Here, $\epsilonVar'_t$ are i.i.d. random variables with support on the interval $[-\sigma_A, \sigma_A]$. While it may at first seem surprising that we use Gaussian observation noise in the GP model, while assuming a truncated observation noise for the actual observation, this does not in fact affect our theoretical guarantees. We provide more details in Section \ref{sec-theory}.
We denote the history of interactions at time $t$ by $H_t$. Because of randomness in the observations, $H_t$ is a random variable. The sequence of random variables $H_1, H_2, \dots$ forms a filtration. We use the symbol $h_t$ to denote particular realization of history so that $h_t = a_1,m_1,\dots,a_{t-1},m_{t-1}$.

\textbf{Belief over $\alpha$-ranks}
Our explicit distribution over the entries of the payoff matrix $M$ induces an implicit belief distribution over the $\alpha$-ranks. For all valid $\alpha$-ranks $r$, $P(r) = P(M \in f^{-1}(r))$ where $f^{-1}$ denotes the pre-image of $r$ under $f$. In other words, the probability assigned to an $\alpha$-rank $r$ is the probability assigned to its pre-image by our belief over the payoffs. Since $r$ is represented implicitly, we cannot query its mass function directly. Instead, we access $r$ via sampling. This is done by  first drawing a payoff from $m \sim M$ and then computing the resulting $\alpha$-rank $f(m)$.

\paragraph{Picking Payoffs to Query}
At time $t$, we query the payoff that provides us with the largest information gain about the $\alpha$-rank. Formally,
\footnotesize
\begin{align}
    &a_t = \argmax_a \informationGain{r}{(\tilde{M}_t(a),a)}{H_t = h_t} \nonumber \\
    &= \argmax_a \entropy{r}{H_t = h_t} \nonumber\\ & - \mathop{\E}_{\tilde{m}_t \sim \tilde{M}_t(a)} \left[\entropy{r}{H_t = h_t, A_t=a, \tilde{M}_t(a)=\tilde{m}_t}\right] \label{eq-ig-entropy} \\
    &= \argmin_a\mathop{\E}_{\tilde{m}_t \sim \tilde{M}_t(a)} \left [\entropy{r}{H_t = h_t, A_t=a, \tilde{M}_t(a)=\tilde{m}_t} \right].
    \label{eq:method_ig}
\end{align}
\normalsize
In Equation \eqref{eq-ig-entropy}, $\entropy{r}{H_t = h_t}$ is the entropy of our current belief distribution over $\alpha$-ranks, which does not depend on $a$ and can be dropped from the maximization, producing Equation \eqref{eq:method_ig}. The expectation in \eqref{eq:method_ig} has an intuitive interpretation as the expected negative entropy of our \textit{hallucinated} belief, i.e.\ belief obtained by conditioning on a sample $\tilde{m}_t$ from the current model. In essence, we are pretending to receive a sample for entry $a$, and then computing what our resulting belief over $\alpha$-ranks will be.
By picking the entry as in \eqref{eq:method_ig}, we are picking the entry whose sample will lead to the largest reduction in the entropy of our belief over $\alpha$-ranks in expectation.

\paragraph{Implementation}
\begin{algorithm}
    \small
	\caption{$\alpha$IG algorithm. $\alpha$IG(NSB) and $\alpha$IG(Bin) variants differ in entropy estimator (Line \ref{alg-line-entropy}).} 
	\begin{algorithmic}[1]
		\For {$t=1,2,\ldots T$} \label{alg-t}
		\For {$a=1,2,\ldots |\mathcal{S}|$} \label{alg-actions}
				\For {$i=1,2,\ldots N_e$}
				\State 
				\label{alg-line-halluc}
				$\tilde{m}_t \sim \tilde{M}_t(a)$
				\Comment {`Hallucinate' a payoff.}
				\State Obtain hallucinated posterior payoff:  $$P(\hat{M}_t \vert H_t = h_t, A_t=a, \tilde{M}_t(a)=\tilde{m}_t)$$ \vspace{-10pt}
			    \label{alg-line-posterior}
			    \State $D = \{r_1,\dots, r_{N_b}\}$, where $r_i \sim f(\hat{M}_t)$ i.i.d.
			    \label{alg-line-rank-distrib}
		        \State $\hat{h}^i_a = $ \textsc{estimate-entropy}( $ D $ )
		        \label{alg-line-entropy}
		        \EndFor
	        \State
	        $\hat{h}_a = \frac{1}{N_e} \sum_{i=1}^{N_e} \hat{h}^i_a $ 
	        \label{alg-line-average-entropy}
        \EndFor
		\State Query payoff  $a_t = \argmin_a \hat{h}_a$ \label{alg-line-select} \Comment Implements Eq. \eqref{eq:method_ig}.
	\EndFor
	\end{algorithmic}
	\label{alg-main}
\end{algorithm}
Our algorithm, which we refer to as $\alpha$IG, is summarized in Algorithm \ref{alg-main}. 
At a high-level, $\alpha$IG selects an action/payoff to query at each timestep (Line \ref{alg-t}).
In order to select a payoff to query as in Equation \eqref{eq:method_ig}, we must approximate the expectation for each payoff (Line \ref{alg-actions}).
In Line \ref{alg-line-halluc}, we use our epistemic model to obtain a `hallucinated' outcome $\tilde{m}_t$, as if we received a sample from selecting payoff $a$ at timestep $t$.
In Line \ref{alg-line-posterior}, we condition our epistemic model on this `hallucinated' sample $\tilde{m}_t$ in order to obtain our `hallucinated' posterior over payoffs $\hat{M}_t$.
In Line \ref{alg-line-entropy}, we empirically estimate the entropy of the resulting induced belief distribution over $\alpha$-ranks.
To approximate the expectation in \eqref{eq:method_ig}, we average out entropy estimates obtained from $N_e$ different possible hallucinated payoffs in Line \ref{alg-line-average-entropy}.
Finally, in Line \ref{alg-line-select}, we use these estimates to perform query selection as in \eqref{eq:method_ig} to select a payoff to query at timestep $t$. 

Our algorithm depends on an entropy estimator \textsc{estimate-entropy}, used in Line \ref{alg-line-entropy}. We present results for 2 different entropy estimators: simple binning and NSB. The simple binning estimator estimates the entropy using a histogram. 
For comparison, we also used NSB \citep{nemenman2002entropy}, an entropy estimator designed  to produce better estimates in the small-data regime.

\paragraph{Computational Requirements}
The main computational bottleneck of our algorithm is the calculations of $\alpha$-rank in Line \ref{alg-line-rank-distrib} of Algorithm \ref{alg-main}.
In order to perform query selection as in \eqref{eq:method_ig}, we must compute the $\alpha$-rank $|\mathcal{S}| \times N_e \times N_b$ times.
For our experiments on the 4x4 Gaussian game this results in $16 \times 10 \times 500 = 80,000$ computations of $\alpha$-rank (setting $N_e=10, N_b=500$), to select a payoff to query.
Relative to ResponseGraphUCB, our method thus requires significantly more computation in order to select a payoff to query. 
However, in Empirical Game Theory, it is commonly assumed that obtaining samples from the game is very computationally expensive (which is true in many potential practical applications \citep{berner2019dota, silver2017mastering, vinyals2019grandmaster}). 
The increased computation required by our method to select a payoff to sample should then have a negligible impact to the overall computation time required, but the increased sample efficiency could potentially lead to large speed-ups. 

We perform two simple optimizations when deploying the algorithm in practice.  
To save computational cost, we observe the same payoff $N_r$ times in Line \ref{alg-line-select}  rather than once, similar to rarely-switching bandits \citep{linear-bandits}.
Moreover, the number of samples $N_b$ we can use to estimate the entropy is limited due to the computational cost of computing $\alpha$-rank.
In order to obtain better differentiation between the entropy of beliefs arising from sampling different payoffs, we heuristically perform conditioning in Line \ref{alg-line-posterior} $N_{c}$ times. 
See Appendix \ref{sec:cond_payoff} for a more detailed discussion on this.

\section{Query Selection by Maximizing Wasserstein Divergence}
While the query objective proposed in \eqref{eq:method_ig} is backed both by an appealing intuition and a theoretical argument (see Section \ref{sec-theory}), it can be expensive to evaluate due to the cost of accurate entropy estimation. To address this difficulty, we also investigate an alternative involving the Wasserstein distance. The objective we consider is
\footnotesize
\begin{align}
    \argmax_a \E_{\tilde{m}_t \sim \tilde{M}_t} [&\mathcal{W}(P(r|H_t = h_t),\nonumber  \\ 
    &P(r|H_t = h_t, A_t=a, \tilde{M}_t(a)=\tilde{m}_t))].
    \label{eq:wasserstein}
\end{align}
\normalsize
Since the computation of Wasserstein distance from empirical distributions can be achieved by solving a linear program \citep{bonneel2011displacement}, Equation \eqref{eq:wasserstein} naturally lends itself to being approximated via samples. In our implementation, we use POT \citep{flamary2017pot} to approximate this distance.

\begin{figure}
    \centering
    \includegraphics[width=0.35\textwidth]{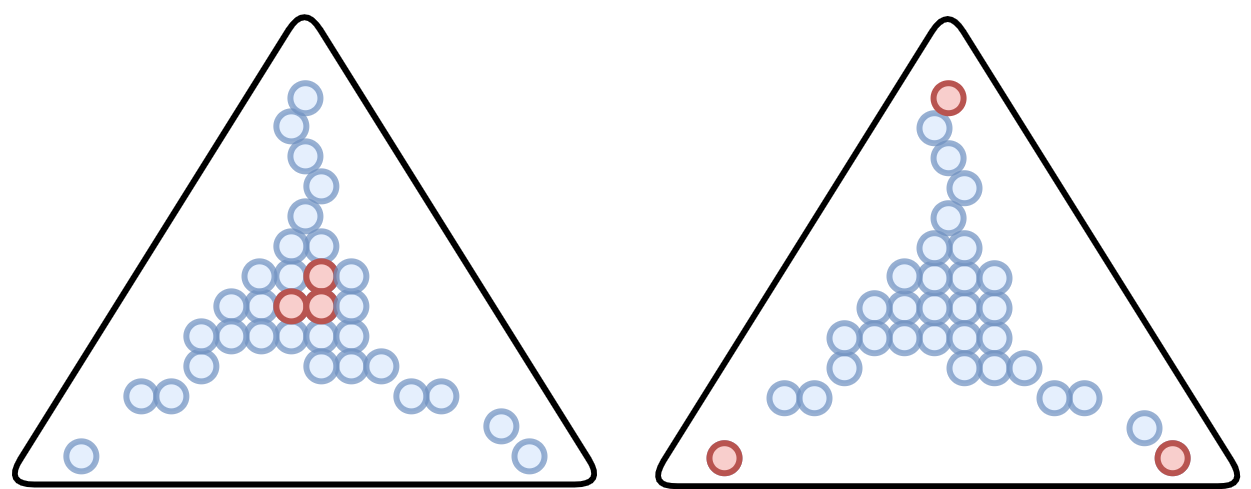}
    \caption{Diagram depicting the current belief (Blue) and 2 different hallucinated beliefs (Red). 
    We are assuming a discrete distribution over $\alpha$-ranks, where the belief is uniform across the relevant circles.}
    \label{fig:entropy_vs_wass_cartoon}
\end{figure}

The Wasserstein distance is built on the notion of cost, which allows a practitioner the opportunity to supply additional prior knowledge.
In our case, since $\alpha$-ranks are probability distributions, a natural way to measure accuracy is to use the total variation distance, which corresponds to setting the cost to $c(x,y) = \frac12 \| x - y\|_1 $. On the other hand, in cases where we are interested in finding the relative ordering of agents under the $\alpha$-rank, an alternative cost such as the Kendall Tau metric \citep{fagin2006comparing} could be used. While we emphasize the ability of the Wasserstein divergence to work with any cost, we leave the empirical study of non-standard costs for future work.

It is important to note that the objective in \eqref{eq:wasserstein} is qualitatively different to the information gain objective proposed in \eqref{eq:method_ig}.
Figure \ref{fig:entropy_vs_wass_cartoon} provides a diagram illustrating a major difference between the two objectives. 
The entropy for both belief distributions shown in red is the same.
In contrast, the Wasserstein distance in \eqref{eq:wasserstein} between the current belief in blue and the hallucinated belief in red is much smaller for the distribution on the left compared to the distribution on the right.
\section{Theoretical Results}
\label{sec-theory}
\paragraph{Notions of Regret} 
We quantify the performance of our method by measuring regret. Our main analysis relies on Bayesian regret \citep{information-directed-informs}, defined as
\begin{gather}
    \label{eq-regret-b}
    J_t^B = 1 - \expect{h_t}{P(\optrv = \bestGuess \vert H_t = h_t) }{}, 
\end{gather}
where we used $\bestGuess$ to denote the $\alpha$-rank with the highest probability under $r$ at time $t$.
In \eqref{eq-regret-b}, the expectation is over realizations of the observation model.
Since $J_t^B$, like all purely Bayesian notions, does not involve the ground truth payoff, we need to justify its practical relevance. We do this by benchmarking it against two notions of frequentist regret. 
The first measures how accurate the probability we assign to the ground truth $\groundTruth = f(\groundTruthM)$ is
\begin{gather}
    J_t^F = 1 - \expect{h_t}{P(\optrv = \groundTruth \vert H_t = h_t) }{}.
\end{gather}
The second measures if the mean of our payoff belief, which we denote $M_\mu$, evaluates to the correct $\alpha$-rank
\begin{gather}
    J_t^M = 1 - \expect{h_t}{\idone{ f(M_\mu) = \groundTruth}}{},
\end{gather}
where the symbol $\idone{\text{predicate}}$ evaluates to 1 or 0 depending on whether the predicate is true or false. In Section \ref{sec:results}, we empirically conclude that these three notions of regret are closely coupled in practice, changing at a comparable rate.

\paragraph{Regret Bounds}
As an intermediate step before discussing information gain on the $\alpha$-ranks, we first analyze the behavior of a query selection rule which maximizes information gain over the payoffs. 
\footnotesize
\begin{gather}
    \pi_{\text{IGM}}(a \vert H_t = h_t) = \argmax_a \informationGain{\tilde{M}_t}{(\tilde{M}_t(a),a)}{H_t = h_t}. 
    \label{eq-policy-ig-m}
\end{gather}
\normalsize
The following result shows that using sampling strategy $\pi_{\text{IGM}}$ for $T$ timesteps leads to a decay in regret of at least $ T e^{\mathcal{O}(-\sqrt[3]{\Delta^2 T})}$, proving it will incur no regret as $T \to \infty$.

\begin{proposition}[Regret Bound For Information Gain on Payoffs]
\label{proposition-regret-bound-payoffs}  
If we select actions using strategy $\pi_{ \text{IGM} }$, the regret at timestep $T$ is bounded as
\footnotesize
\begin{gather}
J^B_T \leq J^F_T = 1 - \expect{h_T}{P(\optrv = \groundTruth \vert H_T = h_T) }{} \leq 
        { T e^{g(T)} }  \\\;\; \text{where} \;\; 
        g(T) = \mathcal{O}(-\sqrt[3]{\Delta^2 T} ) \nonumber.
\end{gather}
\normalsize
\end{proposition}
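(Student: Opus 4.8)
The plan is to prove the two inequalities in turn, treating $J^B_T \le J^F_T$ as a short consequence of the definition of $\bestGuess$ and reserving the analytic work for the bound $J^F_T \le T e^{g(T)}$. Since $\bestGuess$ is by definition the posterior mode over $\alpha$-ranks, for every realized history $P(\optrv=\bestGuess\mid H_T=h_T)\ge P(\optrv=\groundTruth\mid H_T=h_T)$; taking the expectation over $h_T$ and subtracting from one reverses the inequality and yields $J^B_T\le J^F_T$ immediately. For the remaining bound I would exploit the defining feature of the infinite-$\alpha$ regime stressed in the background and in the discussion of RG-UCB: $f(M)$ depends on $M$ only through the signs of the finitely many relevant payoff differences that enter the transition matrix $C$. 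Hence if every relevant difference of $M$ has the same sign as the corresponding difference of $\groundTruthM$, the two chains coincide and $f(M)=\groundTruth$. Writing $\{f(M)\ne\groundTruth\}\subseteq\bigcup_i\{\text{relevant difference } i \text{ is mis-ordered}\}$ and applying a union bound gives, for each history, $1-P(\optrv=\groundTruth\mid H_T=h_T)\le\sum_i P(D_i \text{ mis-ordered}\mid H_T=h_T)$, where $D_i$ is the $i$-th relevant payoff difference. Taking the expectation over $h_T$ reduces the theorem to controlling the expected posterior probability of a single sign error.

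Each $D_i$ is a linear functional of the GP $M$, so under the posterior it is Gaussian with mean $\mu_{i,T}$ and standard deviation $\sigma_{i,T}$, and the mis-ordering probability is the Gaussian tail $\Phi(-\mu_{i,T}\operatorname{sign}(\Delta_i)/\sigma_{i,T})$. This is small exactly when (a) the posterior mean lies on the correct side of zero with margin comparable to the true gap, and (b) the posterior standard deviation $\sigma_{i,T}$ is small relative to that gap. To control (a) under the observation model driven by $\groundTruthM$ --- recalling that the actual noise is bounded in $[-\sigma_A,\sigma_A]$ and hence sub-Gaussian, so the Gaussian-likelihood mismatch is harmless --- I would invoke the concentration result of \citet{srinivas2009gaussian}, which guarantees that with high probability $|\mu_{i,T}-\Delta_i|\le\beta_T^{1/2}\sigma_{i,T}$ simultaneously for all $i$ and all $t\le T$, with a confidence radius $\beta_T$ that grows only logarithmically once a union bound over the adaptive steps is taken. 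To control (b) I would use that $\pi_{\text{IGM}}$ maximizes $\informationGain{\tilde{M}_t}{(\tilde{M}_t(a),a)}{H_t=h_t}$, which for a GP equals $\tfrac12\log(1+\sigma_{t-1}^2(a)/\sigma_A^2)$; greedily maximizing this quantity drives the posterior variances of the relevant entries down together, so that after $T$ queries $\sigma_{i,T}^2$ is bounded by the achievable variance-reduction rate for the GP, in the spirit of the lemma of \citet{srinivas2009gaussian} bounding $\sum_t\sigma_{t-1}^2(a_t)$ by the maximum information gain.

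Combining (a) and (b), on the concentration event each tail is at most $\exp\!\big(-\tfrac12(\Delta/\sigma_{i,T}-\beta_T^{1/2})^2\big)$, while off that event I pay the confidence budget; summing over the finitely many pairs and over the $T$ adaptive steps produces the prefactor $T$. The final bound then follows by optimizing the free confidence parameter against the ordering tail, trading the growth of $\beta_T$ and the variance-reduction rate so that the product collapses to $T\,e^{g(T)}$ with $g(T)=\mathcal{O}(-\sqrt[3]{\Delta^2 T})$. I expect this balancing to be the crux: the clean exponential-in-$T$ rate one would naively read off from a single fixed pair is degraded to the cube-root exponent precisely because the confidence radius must be inflated to stay valid across all $T$ adaptively chosen queries, and because information-gain maximization only certifies an aggregate variance reduction rather than a bound on any single relevant entry. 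Making the link between global information-gain maximization and the variance of the particular relevant differences quantitative, and selecting the confidence schedule that yields exactly the $\sqrt[3]{\Delta^2 T}$ exponent, is where essentially all of the technical effort will go; the two inequalities and the union-bound reduction are by comparison routine.
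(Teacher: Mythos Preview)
Your proposal is correct in outline and reaches the result, but it takes a noticeably longer route than the paper. The first inequality $J^B_T\le J^F_T$ is handled identically. For the second inequality, the paper does not work pair-by-pair or balance a confidence schedule. It uses the separability assumption in one stroke: whenever $\lVert M-\groundTruthM\rVert_\infty\le\Delta/2$, every relevant ordering is preserved, so $P(\optrv=\groundTruth\mid H_T=h_T)\ge P(\lVert M_T-\groundTruthM\rVert_\infty\le\Delta/2)$. From there it simply quotes \citet[Theorem~6]{srinivas2009gaussian} as a black box, together with a short ``worst-case constants'' lemma that (i) bounds $\sigma_T^{\max}$ for $\pi_{\text{IGM}}$ by observing that information-gain maximization on an independent kernel is the worst case and amounts to round-robin sampling, and (ii) bounds $\gamma^\star_T$ and the RKHS norm explicitly. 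The cube-root exponent is already present in Srinivas et al.'s statement (it comes from $\beta_t=2B+300\gamma_t\log^3(t/\delta)$, solved for $\delta$), so no balancing is needed.

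Your plan, by contrast, unpacks this black box: union bound over pairs, posterior Gaussian tails $\Phi(-\mu_{i,T}\operatorname{sign}(\Delta_i)/\sigma_{i,T})$, Srinivas-style concentration of $\mu_{i,T}$ about $\Delta_i$, variance decay of $\sigma_{i,T}$ from greedy information gain, then optimization of the confidence parameter. Each step is sound, and the variance-decay argument you sketch is exactly what the paper's worst-case-constants lemma supplies. But what you flag as ``the crux'' --- making the link between global information-gain maximization and individual-entry variance quantitative, and tuning the confidence schedule to extract the $\sqrt[3]{\Delta^2 T}$ rate --- is precisely the content already packaged in the cited theorem; the paper avoids redoing it. The upside of your route is that it is self-contained and makes transparent where the $T$ prefactor and the cube root originate; the upside of the paper's route is brevity, since the whole argument collapses to two lines once the $\ell_\infty$ reduction and the citation are in place.
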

The proof, and an explicit form of $g$ are found in supplementary material. We now proceed to our second result, where we maximize information gain on the $\alpha$-ranks directly. Consider a querying strategy that is an extension of \eqref{eq-ig-entropy} to $T$-step look-ahead, defined as
\footnotesize
\begin{gather}
    \pi_{\text{IGR}} = \argmax_{a_1,\dots,a_T} \informationGain{r}{(\tilde{M}_1(a_1),a_1), \dots, (\tilde{M}_T(a_T),a_T)}{}. 
    \label{eq-policy-ig-r}
\end{gather}
\normalsize
We quantify regret achieved by $\pi_{\text{IGR}}$ in the proposition below.

\begin{proposition}[Regret Bound For Information Gain on Belief over $\alpha$-Ranks]
If we select actions using strategy $\pi_{ \text{IGR} }$, regret is bounded as
\begin{gather*}
J^B_T = 1 - P(\optrv = \bestGuess \vert H_T = h_T) \to 0 ~~ \text{as}~~ T \to \infty.
\end{gather*}
\label{proposition-regret-bound-ranks-lim}
\end{proposition}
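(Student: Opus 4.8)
The plan is to reduce Bayesian regret to the residual entropy of the posterior over $\alpha$-ranks, and then to show that the information-gain-maximizing strategy drives this residual entropy to zero. First I would record the elementary but crucial fact that in the infinite-$\alpha$ regime $\optrv = f(M)$ takes only finitely many values: the single-population transition matrix depends on $M$ only through the signs of the relevant differences $M(\tau,\sigma) - M(\sigma,\tau)$, so there are at most $N<\infty$ distinct $\alpha$-ranks and $\Entropy(\optrv)\le\log N<\infty$. The first genuine step is then a deterministic bound linking regret to posterior entropy. For a realized history, write $p^\star(h_T) = P(\optrv = \bestGuess \vert H_T = h_T) = \max_{\optrv} P(\optrv \vert H_T = h_T)$. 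Since every term satisfies $\log(1/P(\optrv\vert h_T)) \ge \log(1/p^\star(h_T))$, we get $\entropy{r}{H_T=h_T} \ge \log(1/p^\star(h_T))$, hence $p^\star(h_T)\ge e^{-\entropy{r}{H_T=h_T}}$, and using $1-e^{-x}\le x$ we obtain $1-p^\star(h_T)\le \entropy{r}{H_T=h_T}$. Taking expectation over histories gives $J^B_T \le \entropy{r}{H_T}$, the expected posterior (conditional) entropy, which equals $\Entropy(\optrv) - \informationGain{r}{H_T}{}$ for whatever strategy generated the history. It therefore suffices to show that under $\pi_{\text{IGR}}$ the accumulated information gain $\informationGain{r}{H_T}{}$ converges up to $\Entropy(\optrv)$.

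Next I would control the information gain by a squeeze argument. On one side $\informationGain{r}{H_T}{}\le\Entropy(\optrv)$ always holds. On the other, because $\pi_{\text{IGR}}$ maximizes information gain about $\optrv$ over all length-$T$ open-loop action sequences, its value dominates that of any fixed reference sequence; I would take the reference to be round-robin sampling, which cycles through the fixed finite set of payoff entries and so observes each entry $\Theta(T)$ times. The value of $\pi_{\text{IGR}}$ is also monotone nondecreasing in $T$, since appending one action to the optimal length-$T$ sequence cannot decrease mutual information with $\optrv$; thus the limit exists. Combining monotonicity, the upper bound $\Entropy(\optrv)$, and domination over round-robin, it remains only to verify that the round-robin reference itself attains $\Entropy(\optrv)$ in the limit.

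The main obstacle, and the only genuinely analytic step, is to show that round-robin sampling identifies $\optrv$ asymptotically, i.e.\ that its posterior over $\alpha$-ranks concentrates on a single value so that $\entropy{r}{H_T=h_T}\to 0$. Here I would invoke posterior consistency of the Gaussian Process: sampling each entry infinitely often drives the posterior mean of every payoff to its true value and the posterior variance to zero, so the posterior over $M$ concentrates at the ground truth $\groundTruthM$. The map $f$ is locally constant away from the measure-zero set of payoff matrices exhibiting ties in the relevant comparisons; since $\groundTruthM$ is drawn from a continuous GP prior it avoids this set almost surely, so $f$ is continuous at $\groundTruthM$ and the pushforward posterior on $\optrv$ concentrates on $f(\groundTruthM)=\groundTruth$. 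Consequently $\entropy{r}{H_T=h_T}\to 0$ almost surely, and as entropy is bounded by $\log N$, bounded convergence yields $\entropy{r}{H_T}\to 0$, equivalently $\informationGain{r}{H_T}{}\to\Entropy(\optrv)$ along the round-robin reference. The squeeze then forces the same limit for $\pi_{\text{IGR}}$, and the entropy bound from the first step gives $J^B_T\le\entropy{r}{H_T}\to 0$. All the care is concentrated in this consistency-plus-genericity argument, in particular justifying that ties occur with prior probability zero and that GP posterior variances vanish under repeated sampling; the reduction to entropy and the squeeze are essentially bookkeeping.
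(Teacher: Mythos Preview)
Your argument is essentially correct and shares the paper's overall architecture---bound Bayesian regret by the posterior entropy of $r$, then show that $\pi_{\text{IGR}}$ drives this entropy to zero by comparing it to a reference strategy whose entropy is known to vanish---but the two proofs diverge in the choice of reference and in how its convergence is established. The paper takes $\pi_{\text{IGM}}$ (greedy information gain on the payoffs) as the reference; since GP posterior variances do not depend on observed values, $\pi_{\text{IGM}}$ is effectively open-loop, so $\pi_{\text{IGR}}$ dominates it just as it dominates your round-robin schedule. To control the reference entropy the paper invokes Proposition~\ref{proposition-regret-bound-payoffs}, which rests on the quantitative GP concentration bound of Srinivas et~al., and combines it with the grouping bound $\entropy{r}{h_T}\le h_b(p^\star)+(1-p^\star)\log|R|$ and an inversion of $h_b$. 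Your route is more elementary: you use the cruder inequality $1-p^\star\le\entropy{r}{h_T}$ and appeal directly to GP posterior consistency under round-robin sampling plus local constancy of $f$ at $\groundTruthM$. The trade-off is that the paper's machinery delivers an explicit rate (the expanded form of the proposition in the appendix, with $g(T)=\mathcal{O}(-\sqrt[3]{\Delta^2 T})$), whereas your argument yields only the limit statement. One point to tighten: you justify $\groundTruthM$ avoiding the tie set by saying it is ``drawn from a continuous GP prior,'' but in the paper's setup $\groundTruthM$ is a fixed ground-truth payoff, and the absence of ties is secured instead by the separability assumption $|\groundTruthM(i)-\groundTruthM(j)|\ge\Delta$; invoking that assumption (rather than genericity under the prior) is what your consistency step actually needs.
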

Proposition \ref{proposition-regret-bound-ranks-lim} provides a theoretical justification for querying the strategies that maximize information gain on the $\alpha$-ranks.
A more explicit regret bound (similar to Proposition \ref{proposition-regret-bound-payoffs}) and the proof are provided in Appendix \ref{sec:proofs}.
In practice, to avoid the combinatorial expense of selecting action sequences using $\pi_{ \text{IGR} }$, we use the greedy query selection strategy in equation \eqref{eq-ig-entropy}. While the regret result above does not carry over, this idealized setting at least provides some justification for information gain as a query selection criterion.
\section{Experiments}
\label{sec:results}
In this section, we describe our results on synthetic games, graphing the Bayesian regret $J^B_t$ described in Section \ref{sec-theory}. We also justify the use of Bayesian regret, showing that it is highly coupled with the ground truth payoff. We benchmark two versions of our algorithms, $\alpha$IG (Bins) and $\alpha$IG (NSB),  which differ in the employed entropy estimator. We compare to three baselines: \textbf{RG-UCB}, a frequentist bandit algorithm \citep{rowland2019multiagent}, \textbf{Payoff}, which maximizes the information gain about the payoff distribution, and \textbf{Uniform}, which selects payoffs uniformly at random.
\textbf{RG-UCB} represents the current SOTA in this domain, \textbf{Payoff} represents the performance of a Bayesian method that does not take into account the structure of the mapping between payoffs and $\alpha$-ranks, and \textbf{Uniform} provides a point of reference as the simplest/most naive method\footnote{We do not include Uniform on the regret graphs, since there is no reasonable value we could compute for it.}.
A detailed explanation of the experimental setup\footnote{Code is available at github.com/microsoft/InfoGainalpharank.} and details on the used hyperparameters are included in Appendix \ref{sec:experimental_setup}.

\definecolor{mygreen}{RGB}{0,0,0}
\definecolor{myred}{RGB}{0,0,0}
\definecolor{mypurple}{RGB}{0,0,0}

\paragraph{Good-Bad Games}
To investigate our algorithm, we study two environments whose payoffs are shown in Figure \ref{fig:xgood_ybad_colour_payoffs}.
We start with the relatively simple environment with 4 agents.
Figure \ref{fig:xgood_ybad_colour_payoffs} (Left) shows the expected payoffs, which we can interpret as the win-rate.
Samples are drawn from a Bernoulli distribution with the appropriate mean.
We refer to the environment as `2 Good, 2 Bad' since agents 1 and 2 are much stronger than the other 2 agents, winning $100\%$ of the games against them.  Since the ordering between agents 3 and 4 has no effect on the $\alpha$-rank, gathering samples to determine this ordering (highlighted in \textcolor{mypurple}{Purple}) does not affect the belief distribution over $\alpha$-ranks.
\begin{figure}
    \centering
    \includegraphics[height=2.2cm]{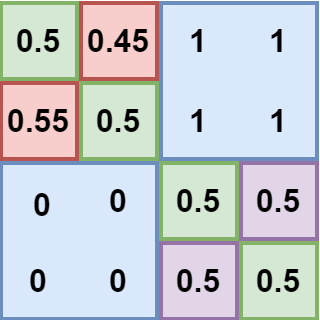}
    \hspace{15pt}
    \includegraphics[height=2.2cm]{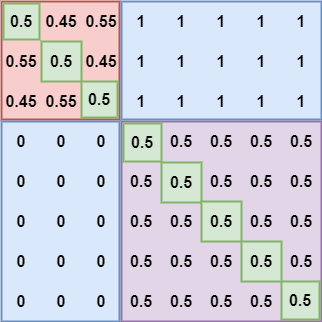}
    \caption{Payoff matrices for 2 Good, 2 Bad (Left) and 3 Good, 5 Bad (Right). Best viewed in color.}
    \label{fig:xgood_ybad_colour_payoffs}
\end{figure}
Furthermore, since we treat this as a 1-population game, the entries highlighted in \textcolor{mygreen}{Green} where each agent plays against themselves do not affect the $\alpha$-rank.
Entries that are necessary to determine the ordering between agents 1 and 2 are the most relevant for the $\alpha$-rank and are highlighted in \textcolor{myred}{Red}. 
Since agent 2 is slightly better than agent 1, the true $\alpha$-rank is $(0,1,0,0)$.
However, it can be difficult to determine the correct ordering between agents 1 and 2 without drawing many samples from these entries. The game thus provides a model for the common scenario of agents with clustered performance ratings.

\begin{figure*}
    \centering
    \includegraphics[width=0.21\textwidth]{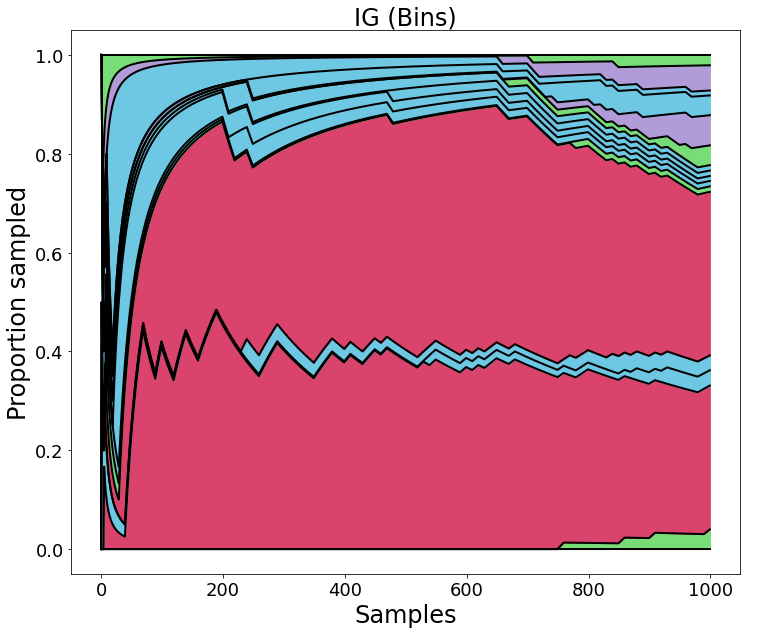}
    \includegraphics[width=0.21\textwidth]{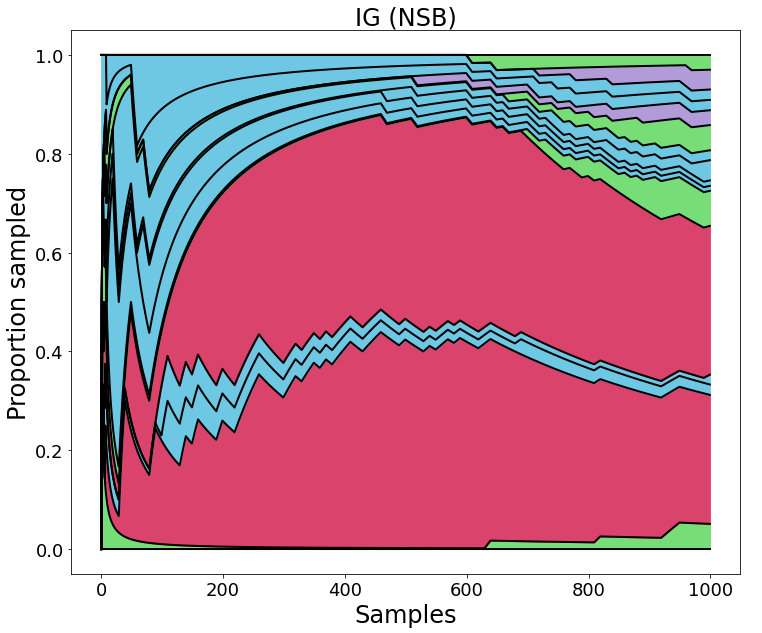}
    \includegraphics[width=0.21\textwidth]{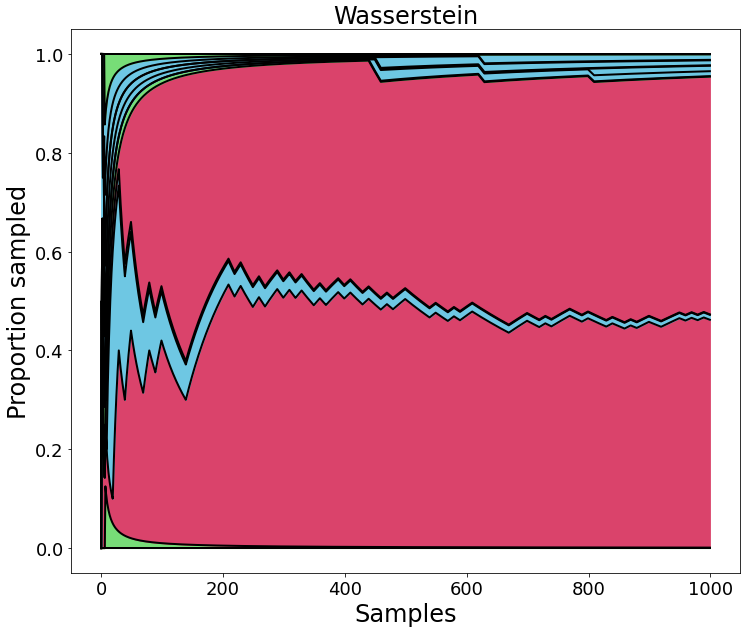}
    \includegraphics[width=0.21\textwidth]{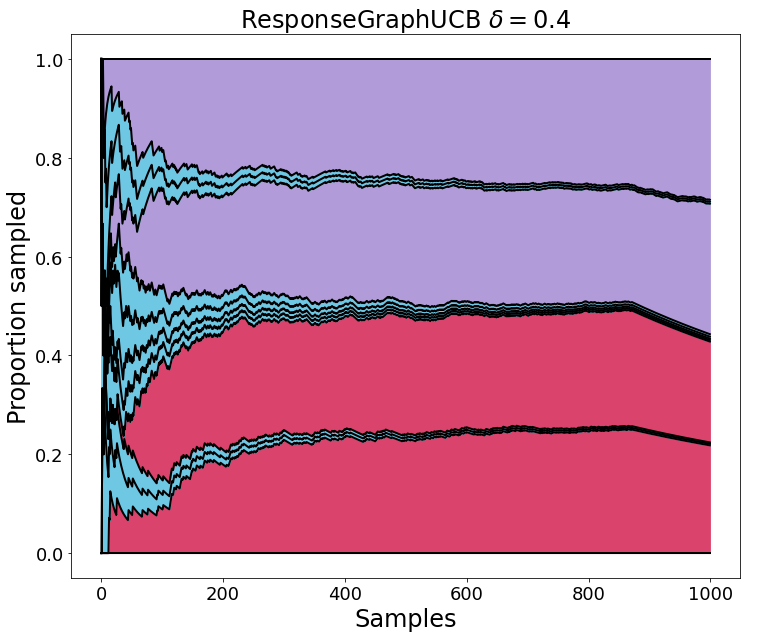}
    \caption{Proportion of entries sampled on 2 Good, 2 Bad for different methods and objectives.}
    \label{fig:2good2bad_entries}
\end{figure*}

\paragraph{Focusing on Relevant Payoffs} Figure \ref{fig:2good2bad_entries} presents the behavior of our method and RG-UCB on this task.
As expected, RG-UCB splits its sampling between the \textcolor{myred}{Red} entries and the \textcolor{mypurple}{Purple} entries, whereas our method concentrates its sampling much more significantly on the relevant entries, determining the ordering between agents 1 and 2.
This is because, in contrast to our method, RG-UCB aims to correctly determine the ordering between \textbf{all} entries used in the calculating of $\alpha$-rank, irrespective of whether they matter for the final outcome.

\paragraph{Wasserstein Payoff Selection Does Well} Comparing the Wasserstein Criterion with 
Information Gain payoff section, we can see that it enjoys better concentration of the sampling on the \textcolor{myred}{Red} entries, and improved performance towards the end of training. 
Appendix \ref{sec:2good_2bad_appendix} provides a more detailed analysis of this.

\begin{figure*}
    \centering
    \includegraphics[height=3.2cm]{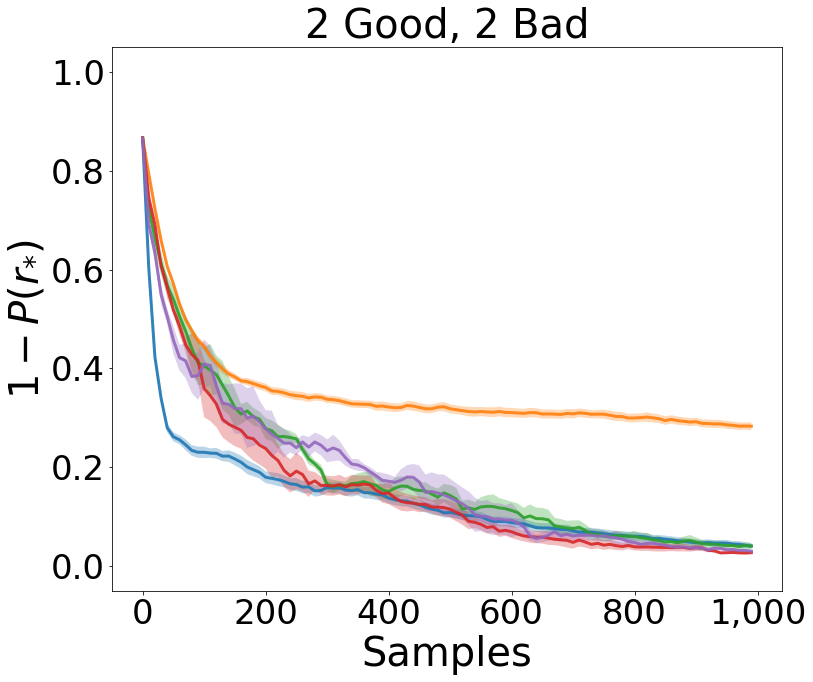}
    \includegraphics[height=3.2cm]{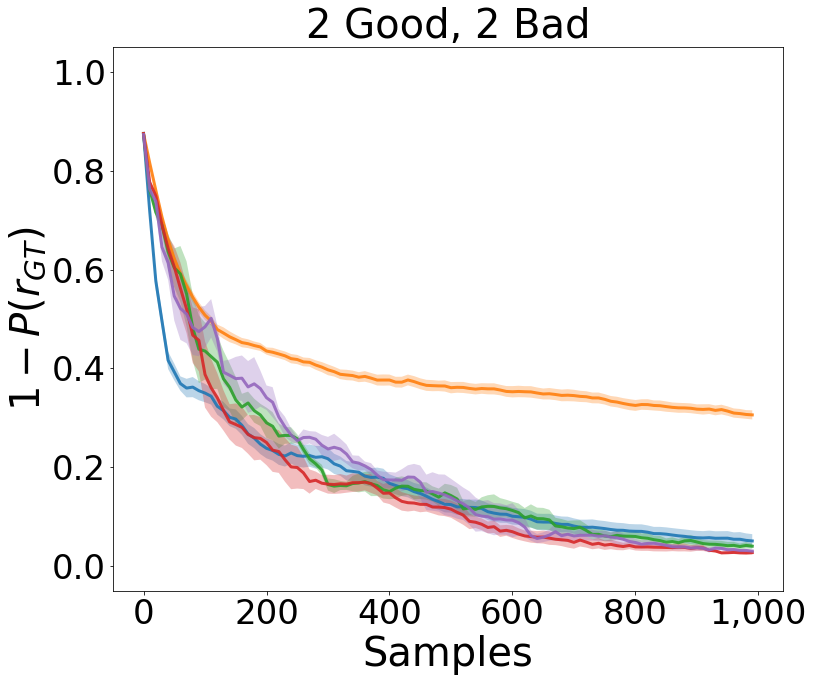}
    \includegraphics[height=3.2cm]{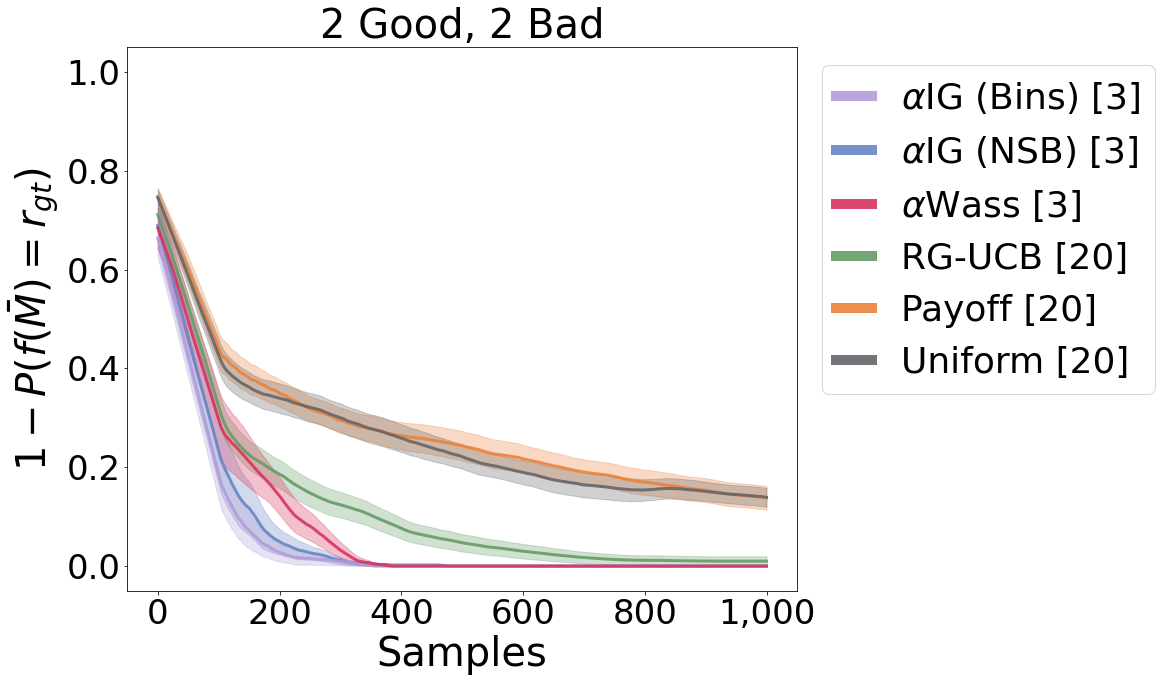}
    \caption{Results for 2 Good, 2 Bad. Graphs show the mean and standard error of the mean over multiple runs (shown in brackets) of 10 repeats each.}
    \label{fig:2good2bad_graphs}
\end{figure*}

\paragraph{Bayesian and Frequentist Regret Go Down} Figure \ref{fig:2good2bad_graphs} shows the resulting performance of the methods on this task, measured by the regret.
Due to the relative simplicity of the game, there is limited benefit to our method over RG-UCB, but there is a clear benefit over more naive methods that systematically or uniformly sample the entries.
We can see that the Bayesian regret $J^B_t$ and Frequentist regrets $J^F_t$ and $J^M_t$ are highly correlated, providing empirical justification for minimizing $J^B_t$ and validating that our method is concentrating on the ground truth.

\begin{figure*}
    \centering
    \includegraphics[height=3.2cm]{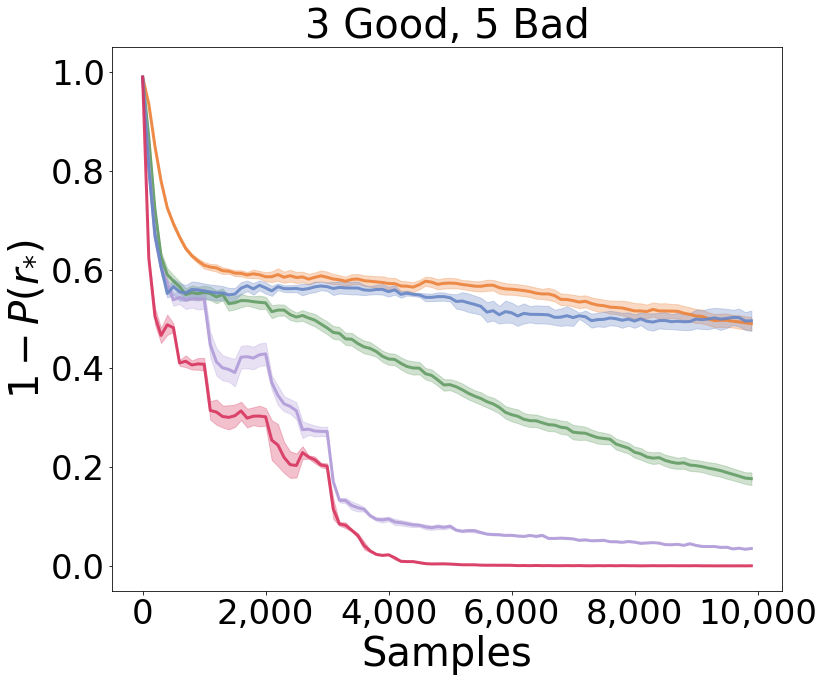}
    \includegraphics[height=3.2cm]{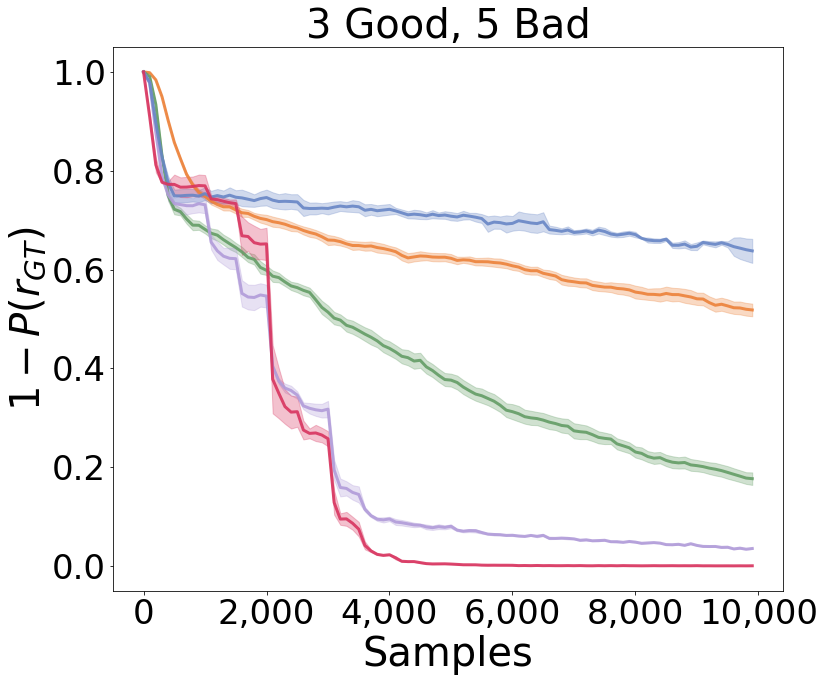}
    \includegraphics[height=3.2cm]{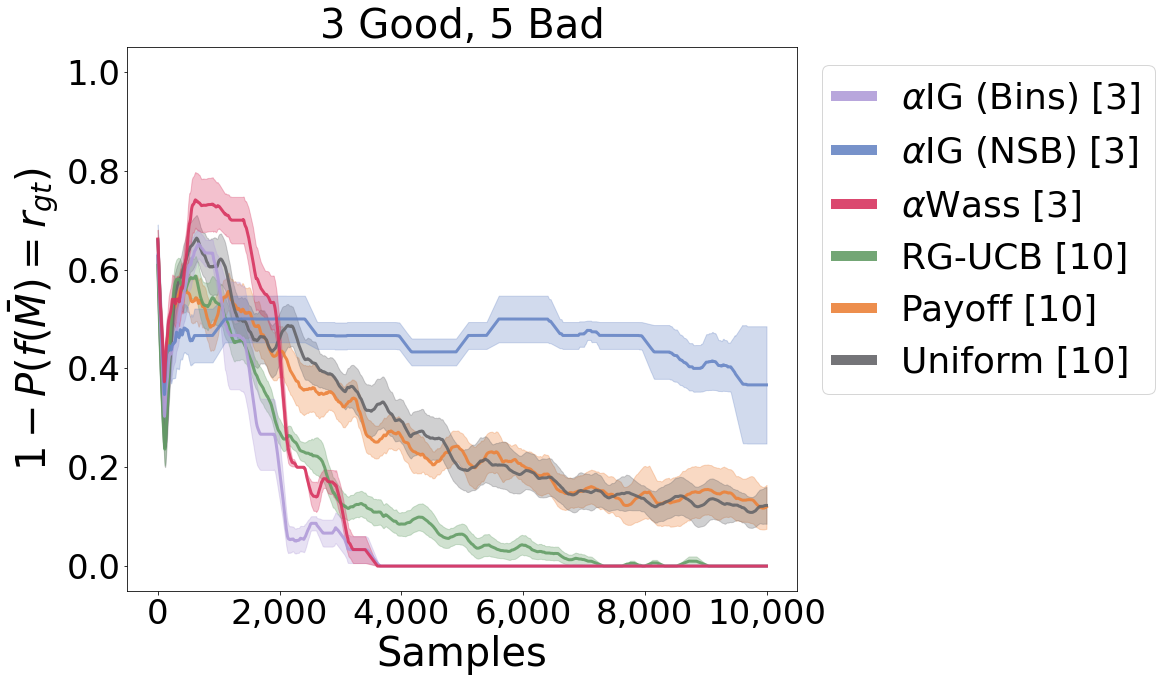}
    \caption{Results for 3 Good, 3 Bad. Graphs show the mean and standard error of the mean over multiple runs (shown in brackets) of 10 repeats each.}
    \label{fig:3good5bad_graphs}
\end{figure*}

\paragraph{Comparing Entropy Estimators}  We also investigate a larger scale version of 2 Good, 2 Bad with 3 good and 5 bad agents. 
Figure \ref{fig:3good5bad_graphs} shows the results, demonstrating a clear benefit for our method using the Binning estimator for the Information Gain or the Wasserstein objective.
The performance of the NSB entropy estimator is not surprising given the significantly larger nature of this task compared to `2 Good, 2 Bad'. 
A necessary part of the NSB estimator is an upper-bound on the total number of atoms in the distributions, for which we only have a crude approximation that grows exponentially with the size of the payoff matrix. 
\begin{figure*}
    \centering
    \includegraphics[width=0.21\textwidth]{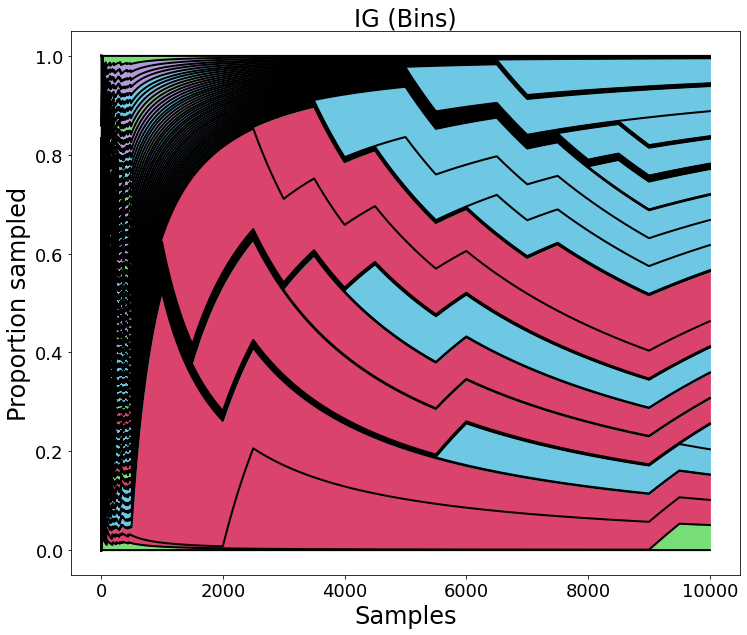}
    \includegraphics[width=0.21\textwidth]{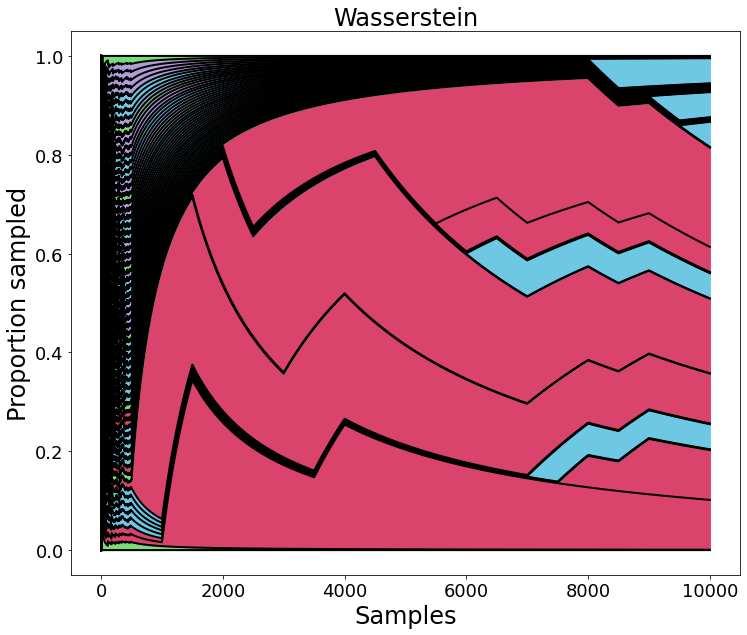}
    \includegraphics[width=0.21\textwidth]{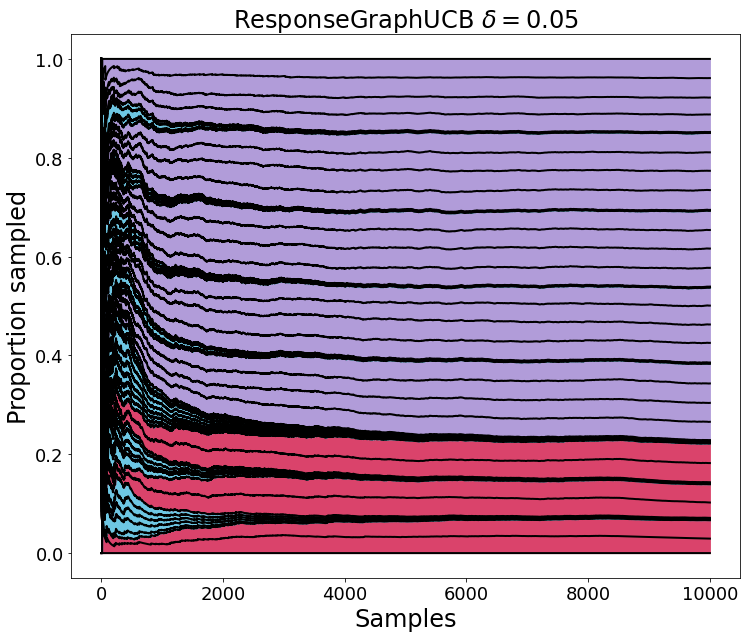}
    \caption{Proportion of entries sampled on 3 Good, 5 Bad.}
    \label{fig:3good5bad_entries}
\end{figure*}
Figure \ref{fig:3good5bad_entries} shows the proportion of entries sampled for $\alpha$IG (Bins), the Wasserstein objective, and RG-UCB.
Once again, RG-UCB spends a significant part of its sampling budget determining the ordering between agents that do not have an effect on the $\alpha$-rank of the game (in this task agents 3 to 8).
In contrast, our methods concentrate their sampling on the \textcolor{myred}{Red} entries that determine the payoffs between the top 3 agents, and hence the true $\alpha$-rank. In general, our algorithm does not depend as much on accurate estimates on entropy but on identifying the distribution with a lowest entropy, for which the NSB estimator isn't tuned.

\paragraph{Incorporating Prior Knowledge}

\begin{figure*}
    \centering
    \includegraphics[height=3.2cm]{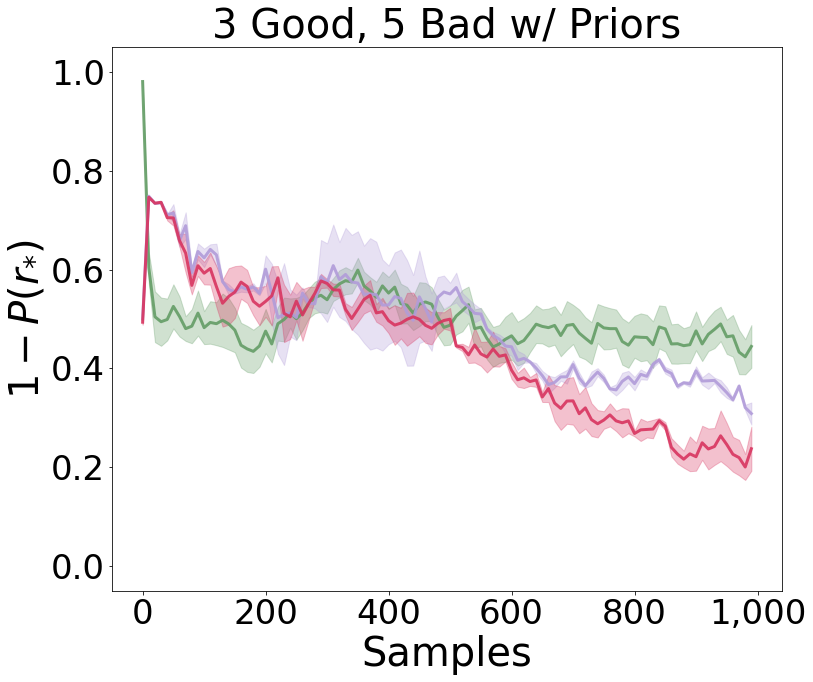}
    \includegraphics[height=3.2cm]{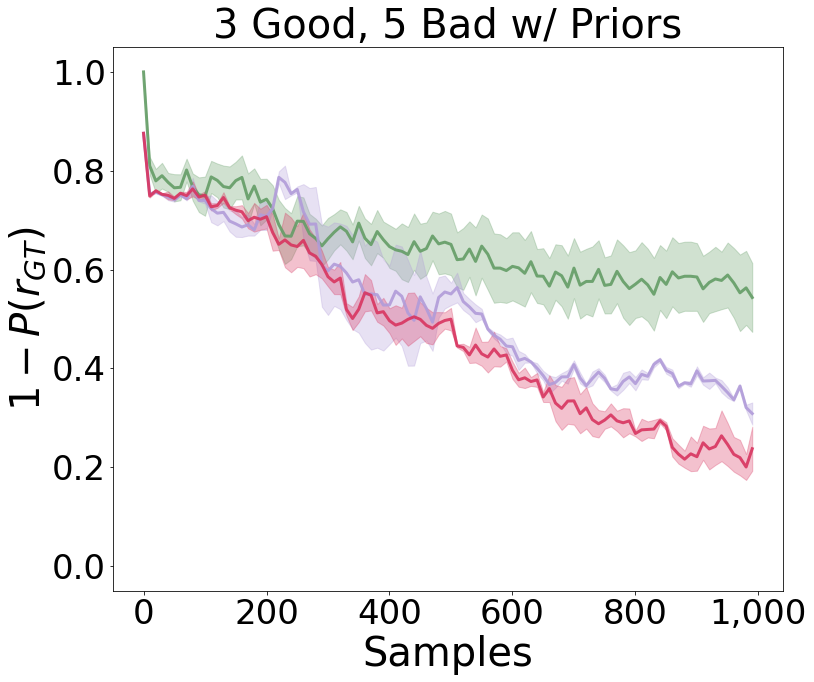}
    \includegraphics[height=3.2cm]{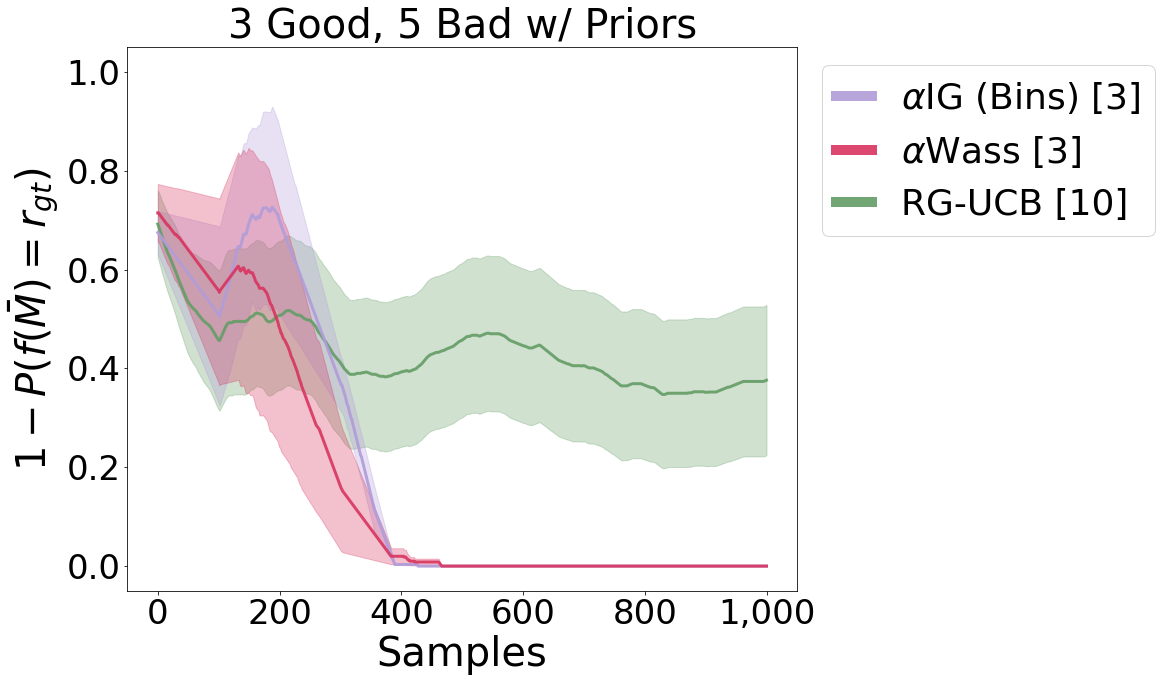}
    \caption{Results on 3 Good, 5 Bad when incorporating prior knowledge into the models.}
    \label{fig:3good5bad_kernel}
\end{figure*}

A large benefit of our Bayesian-based approach is the ability to incorporate prior knowledge and modelling assumptions into our model in a principled manner. 
To demonstrate the benefits, we incorporate the following prior knowledge into both our algorithm and RG-UCB:
1) $M(\sigma, \tau) + M(\tau, \sigma) = 1$. 2) Entries in their respective blocks are equal to each other (except for the top left block).
A detailed description of the setup is included in Appendix \ref{sec:3good5bad_kernel_setup}.
Figure \ref{fig:3good5bad_kernel} compares the performance of $\alpha$IG, $\alpha$Wass, and RG-UCB on 3 Good, 5 Bad when utilizing this prior knowledge.
We can see that our approach significantly outperforms RG-UCB on this task, further demonstrating the importance of our direct information gain objective. The results also show significantly improved sample efficiency over the results in Figure \ref{fig:3good5bad_graphs}, demonstrating that our $\alpha$IG and $\alpha$Wass are able to efficiently take advantage of the prior knowledge supplied.

\section{Conclusions}
We described $\alpha$IG, an algorithm for estimating the $\alpha$-rank of a game using a small number of payoff evaluations. $\alpha$IG works by maximizing information gain. It achieves competitive sample efficiency and allows a way of building in prior knowledge about the payoffs.

\section*{Acknowledgements}
We thank the Game Intelligence group at Microsoft Research Cambridge for their useful feedback, support, and help with setting up computing infrastructure. 
Tabish Rashid is supported by an EPSRC grant (EP/M508111/1, EP/N509711/1).

\bibliography{references}

\newpage
\onecolumn
\appendix

\section{Additional $\alpha$-Rank Background}

In this section we include a worked example on how to calculate the $\alpha$-Rank in the single-population setting in the infinite-$\alpha$ regime.
The purpose of this example is to help build intuition about $\alpha$-Rank in our particular setting. 
We will use 2 Good, 2 Bad matrix game, shown in Figure \ref{fig:2good_2bad_labelled} as an example, where the payoffs represent the expected win rate between agents.

\begin{figure}[h!]
    \centering
    \includegraphics[height=3.4cm]{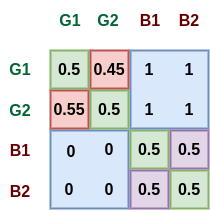}
    \caption{Payoff matrices for 2 Good, 2 Bad with the strategies for each player labeled.}
    \label{fig:2good_2bad_labelled}
\end{figure}

Note that in this example, we are considering a player playing the game against an identical opponent.
The strategy space is $S=\{G1, G2, B1, B2\}$, with $G1$ and $G2$ representing the \textit{good} agents, and $B1$ and $B2$ the \textit{bad} agents.
The good agents \textbf{always} win against the bad agents (hence $M(G1, B1) = 1$ and $M(B1, G1) = 0$ to use $G1$ and $B1$ as examples).
The $\alpha$-Rank is then a probability vector $r \in \Delta^{4-1} \subset \mathcal{R}^4$.

In order to compute the $\alpha$-Rank, we must first construct the Markov Chain whose nodes are elements of $S$.

Figure \ref{fig:2good_2bad_graph} is a diagram representing this Markov Chain.

\begin{figure}[h!]
    \centering
    \includegraphics[height=4.4cm]{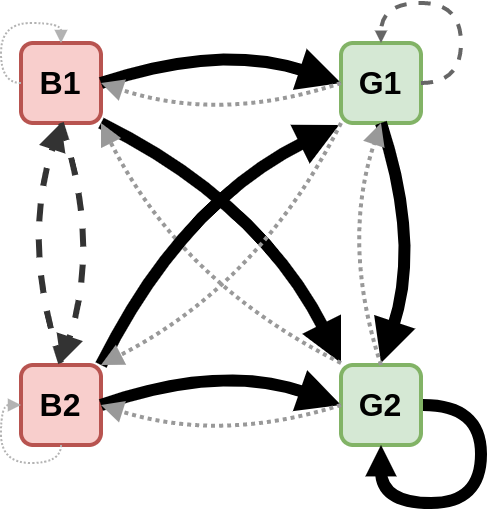}
    \caption{Markov Chain constructed to compute the $\alpha$-Rank. The transition probabilities between the nodes are represented using the width of the arrow.}
    \label{fig:2good_2bad_graph}
\end{figure}

The 4 nodes are the the elements of $S=\{G1, G2, B1, B2\}$, and the transition probabilities are as defined in Section \ref{sec:background}.
The width of the edges between the nodes in Figure \ref{fig:2good_2bad_graph} represent the magnitude of their transition probabilities.
Since the \textit{good} agents consistently beat the \textit{bad} agents there is a large transition probability from nodes $B1$ and $B2$ to nodes $G1$ and $G2$.
Likewise there is a very small transition probability from nodes $G1,G2$ to $B1, B2$.
Since strategies $B1$ and $B2$ are equally matched, there is an equal probability of transition from $B1$ to $B2$ as there is from transitioning from $B2$ to $B1$. 
Crucially though, that transition probability is significantly smaller than the transition probabilities from $B1$ and $B2$ to nodes $G1$ and $G2$.
All nodes' self-transition probabilities ensure that there is a valid transition matrix for the Markov Chain.
Importantly, $B1$ and $B2$ have small self-transition probabilities whereas $G2$ has a very large self-transition probability.

We know $(|S| - 1)^{-1} = \frac{1}{3}$, and letting $\epsilon>0$ be small, we can compute the exact transition matrix:

\[
\renewcommand\arraystretch{2}
\bordermatrix{ 
   & G1 & G2 & B1 & B2  \cr
G1 & (2-\epsilon)/3  &  (1-\epsilon)/3 & \epsilon/3 & \epsilon/3  \cr
G2 & (1-\epsilon)/3  &  1-\epsilon      & \epsilon/3 & \epsilon/3  \cr
B1 & (1-\epsilon)/3  & (1-\epsilon)/3 & (1+4\epsilon)/6 & 1/6  \cr
B2 & (1-\epsilon)/3  & (1-\epsilon)/3 & 1/6 & (1+4\epsilon)/6 
} \qquad
\]

Intuitively, we can see that the transition probabilities are all \textit{leading} to node $G2$. 
So we would expect to spend a large proportion of time at node $G2$, if we were \textit{traversing} the graph according to the transition probabilities. 
Hence, $G2$ is a \textit{stronger} strategy than the others and so we would expect the $\alpha$-Rank to reflect that. 

This is formalised in the computation of $\alpha$-Rank by considering the unique stationary distribution of the Markov Chain.
The $\alpha$-Rank is then $(0,1,0,0)$ ($G1=0,G2=1,B1=0,B2=0$) as $\epsilon \to 0$.
\section{Implementation Details}

\subsection{$\alpha$IG (Bins).}
For this binning entropy estimator we split $[0,1]$ into 101 equal bins of width $0.005$ (implemented by rounding to the nearest second decimal place).
We then estimated the entropy using a histogram.

\subsection{$\alpha$IG (NSB).}
The NSB estimator requires an upper bound on the total number of atoms, but since we do not know the true upper bound we utilize an estimate on the total number of possible $\alpha$-ranks, which we describe below. 
We use the open-source implementation provided in \citep{2020ndd}.

\subsection{Upper bound on number of $\alpha$-ranks}

In the infinite-$\alpha$ regime there are a finite number of possible $\alpha$-ranks.
This is because only the ordering between relevant entries in the payoff matrix changes the transition matrix of the Markov Chain produced in the computation of $\alpha$-rank \citep{rowland2019multiagent}.

Let there be $k$ populations each with $S$ strategies.
Then there are $S^k$ strategies considered and so the transition matrix of the Markov Chain has $S^k$ rows, one for each of the possible joint-strategies.

Each possible joint-strategy $\sigma$ can transition to at most $k(S-1)$ other strategies $\tau \neq \sigma$. The probability of a self-transition is uniquely determined based on these probabilities. 

This gives at most $2^{(k(S-1))}$ unique values for that row. 

There are then $[2^{(k(S-1))}]^{S^k} = 2^{S^k(k(S-1))}$ unique transition matrices.
Thus, the possible number of unique $\alpha$-ranks is upper-bounded by $2^{S^k(k(S-1))}$.
This bound is not tight, since there are many transition matrices with equal stationary distributions.

In our experiments with $K=1$ this gives $2^{S(S-1)}$.

\subsection{Conditioning of the belief distribution}
\label{sec:cond_payoff}
In our experiments we found that setting $N_{c}=1$ as suggested by theory is not always sufficient and use $N_{c}=100$ for all experiments.
\begin{figure}[h]
    \centering
    \includegraphics[width=0.32\textwidth]{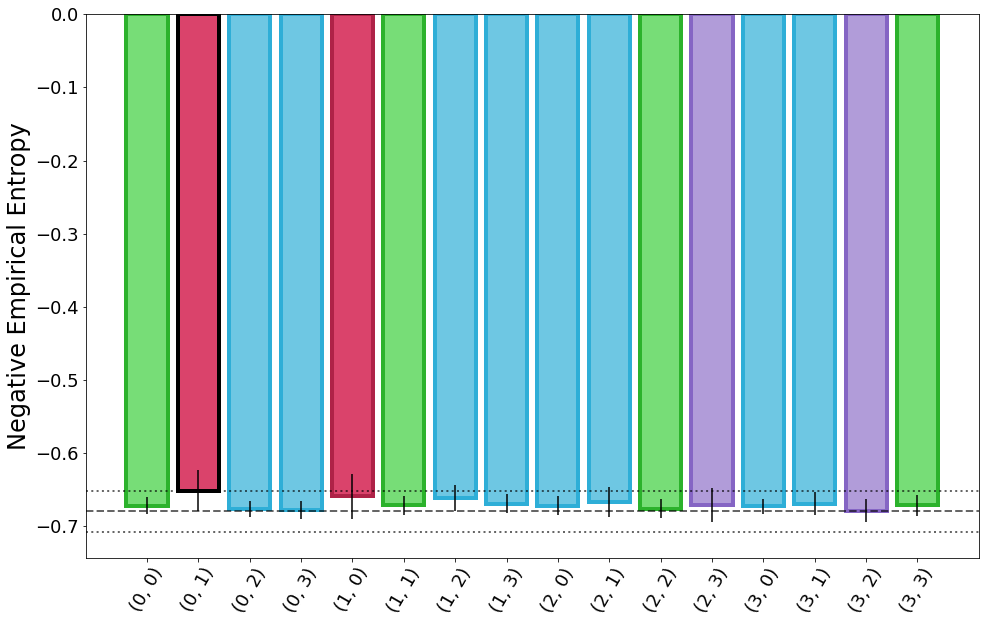}
    \includegraphics[width=0.32\textwidth]{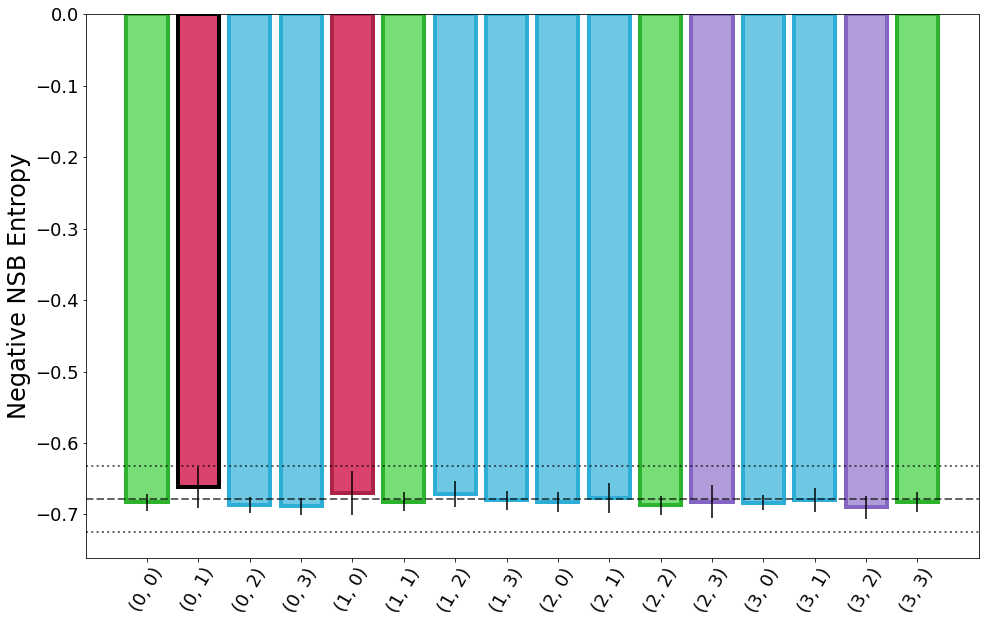}
    \includegraphics[width=0.32\textwidth]{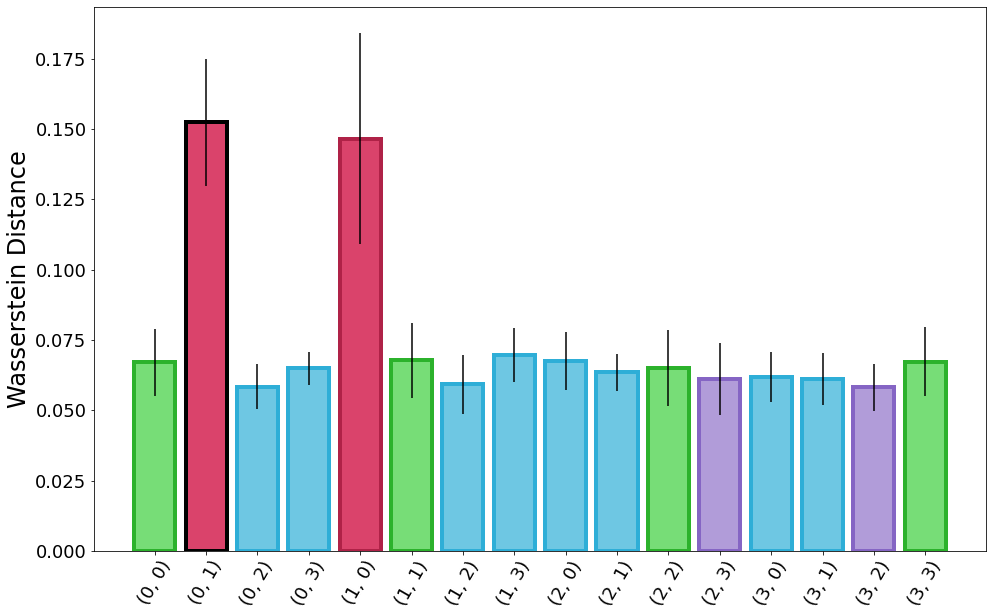}
    \includegraphics[width=0.32\textwidth]{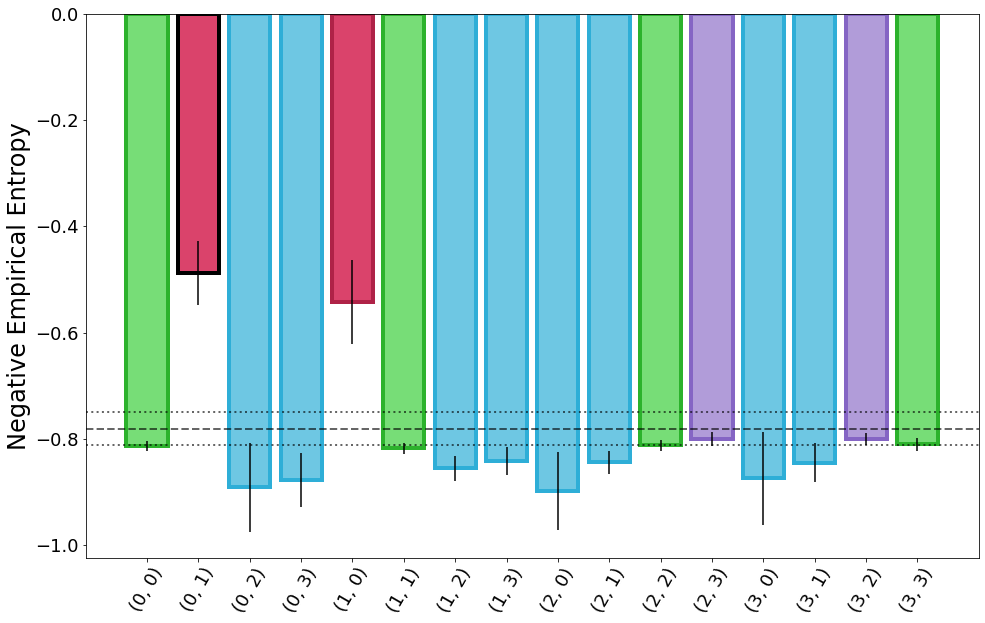}
    \includegraphics[width=0.32\textwidth]{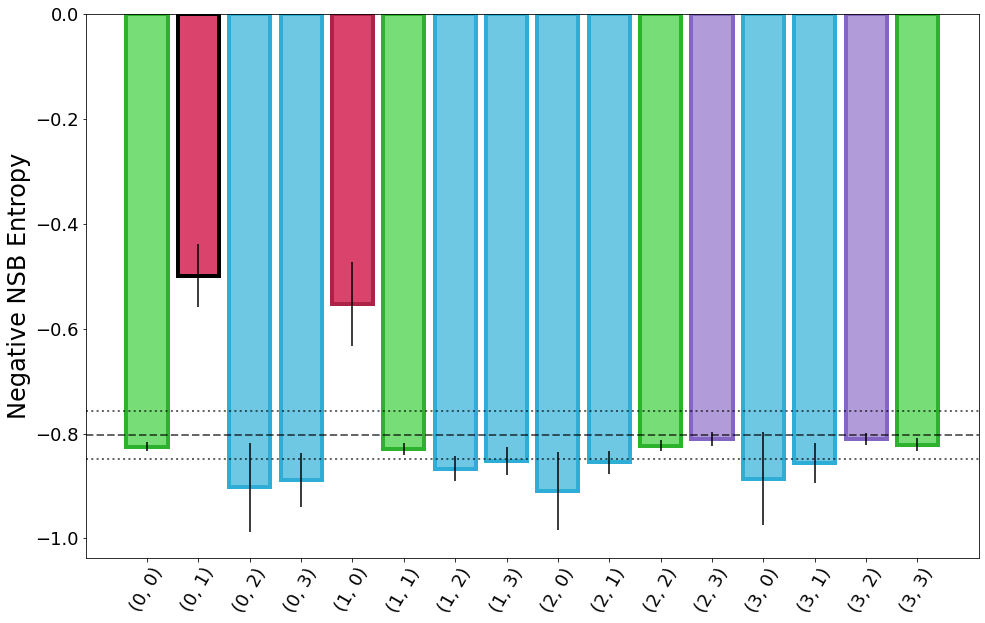}
    \includegraphics[width=0.32\textwidth]{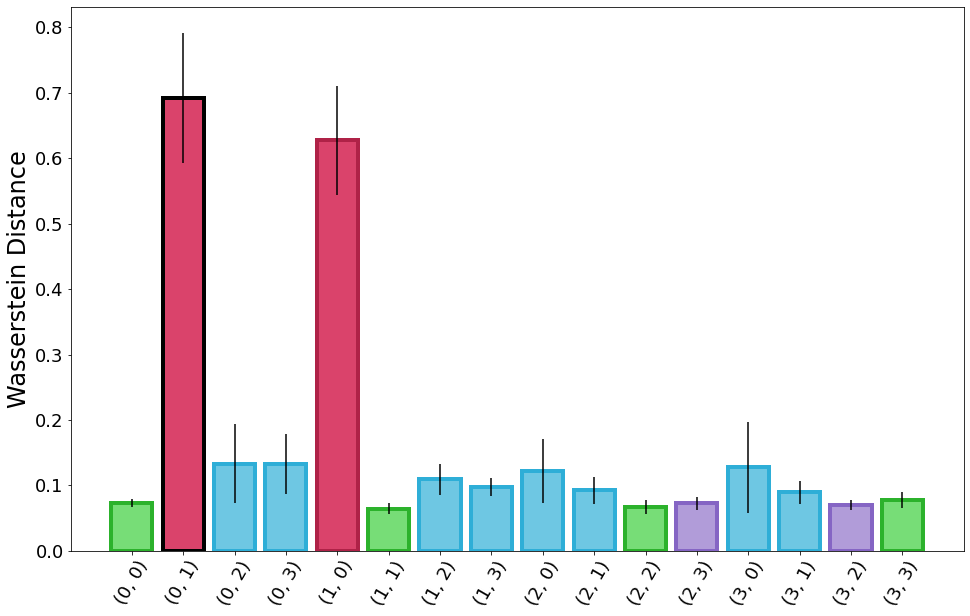}
    \caption{Comparing the values of the objectives for each entry after sampling 5 values for every entry.
    \textbf{Top} shows the results for $N_{c}=1$.
    \textbf{Bottom} shows the results for $N_{c}=100$.
    Mean and standard deviation are plotted across 10 seeds, maximum entry is highlighted in black. The mean and standard deviation of the (estimated) entropy of the current belief distribution is also plotted as a dashed horizontal line.
    }
    \label{fig:1cond_vs100cond_5_entries}
\end{figure}

\begin{figure}
    \centering
    \includegraphics[width=0.32\textwidth]{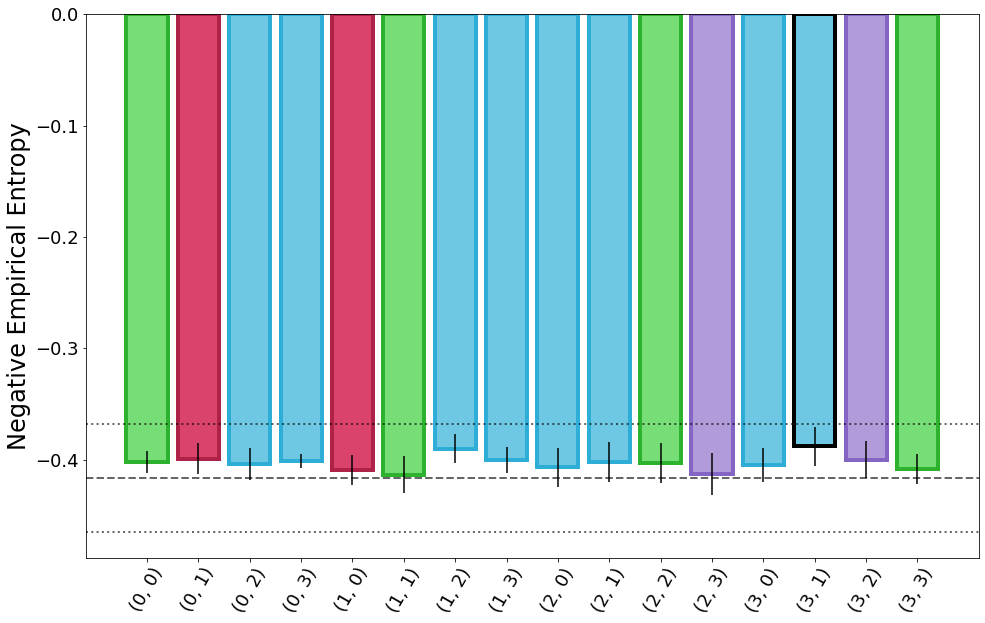}
    \includegraphics[width=0.32\textwidth]{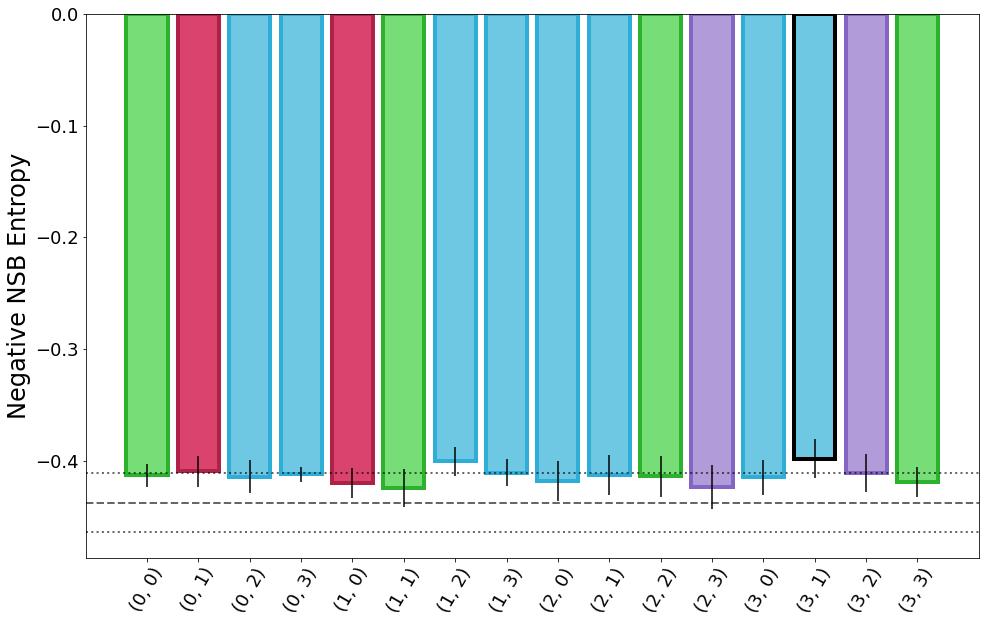}
    \includegraphics[width=0.32\textwidth]{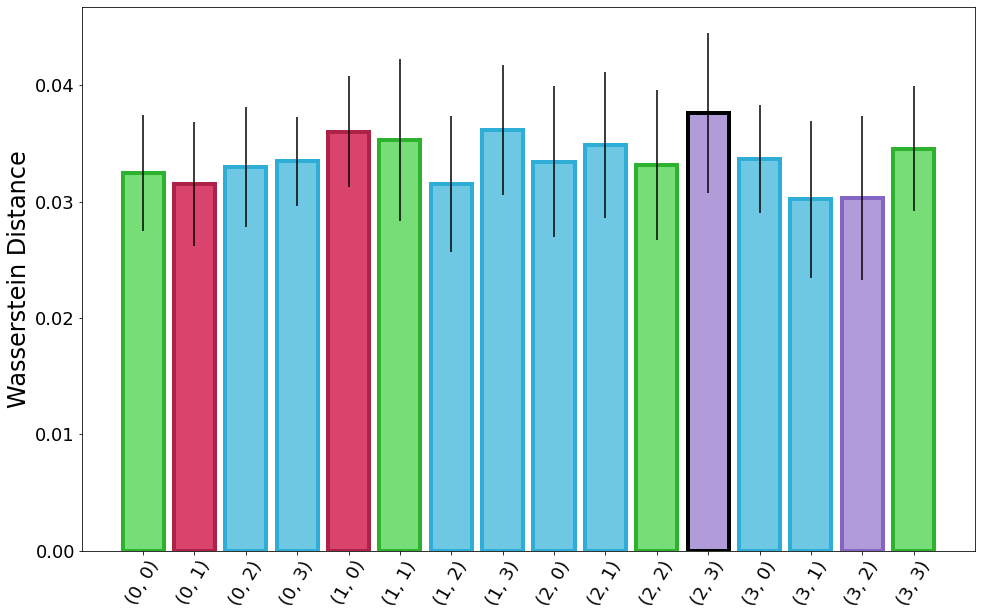}
    \includegraphics[width=0.32\textwidth]{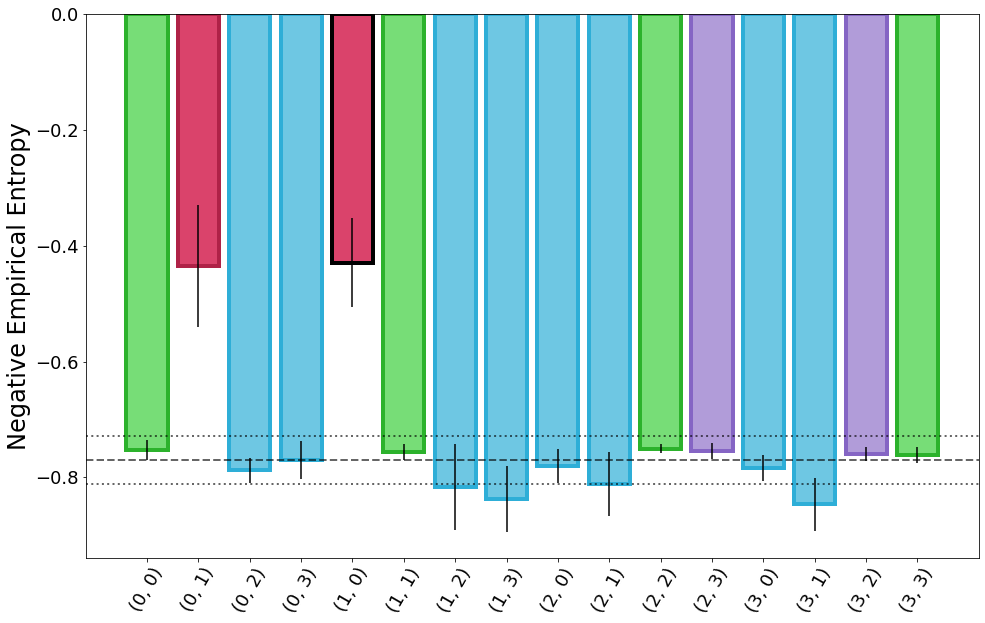}
    \includegraphics[width=0.32\textwidth]{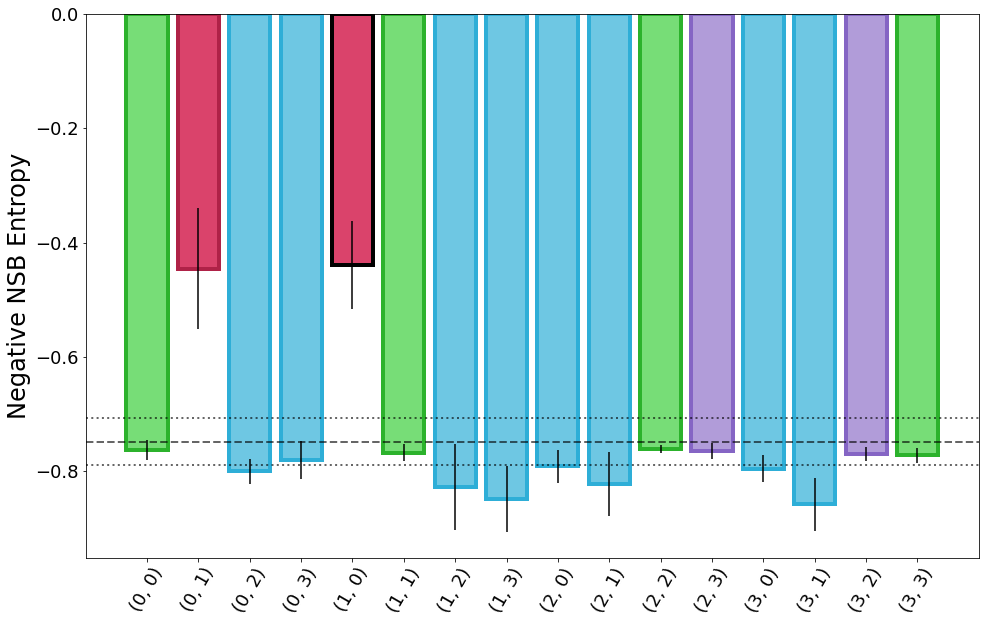}
    \includegraphics[width=0.32\textwidth]{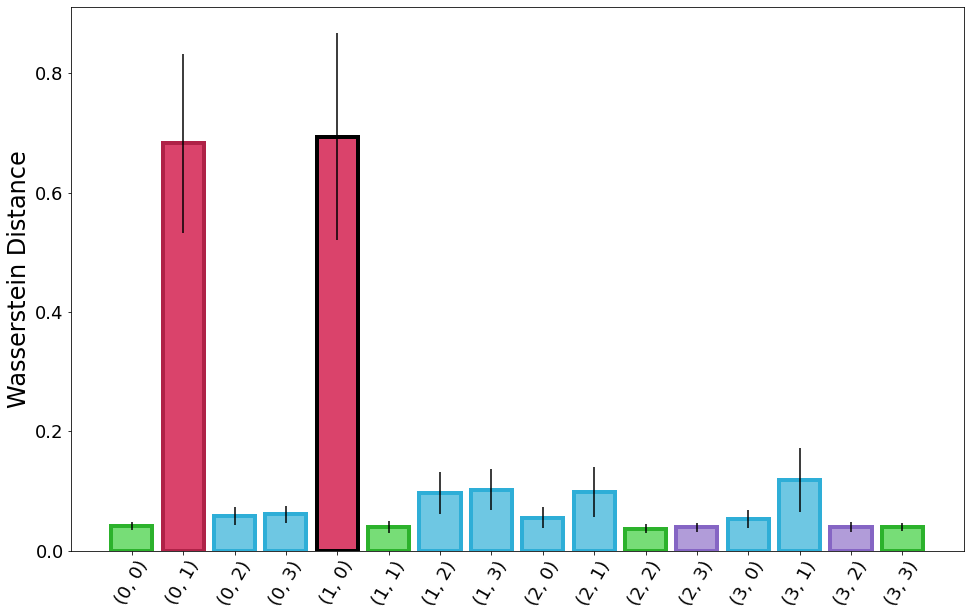}
    \caption{Comparing the values of the objectives for each entry after sampling 5 values for every entry, and then additionally sampling 250 values for the Red entries.
    \textbf{Top} shows the results for $N_{c}=1$.
    \textbf{Bottom} shows the results for $N_{c}=100$.
    Mean and standard deviation are plotted across 10 seeds, maximum entry is highlighted in black. The mean and standard deviation of the (estimated) entropy of the current belief distribution is also plotted as a dashed horizontal line.
    }
    \label{fig:1cond_vs100cond_5_255_entries}
\end{figure}

After drawing a sample $m'_t \sim M_t + \epsilon$, we then condition our belief distribution over $\alpha$-rank on this sample $N_{c}$ times and then approximate the Entropy of the resulting hallucinated belief distribution (or the Wasserstein distance between the current belief and the hallucinated belief).
Theory suggests that setting $N_{c}=1$ is sufficient, however empirically we found that this did not produce satisfactory results.
Figure \ref{fig:1cond_vs100cond_5_entries} shows that only conditioning once produces very little separation between the values for the different entries.
Additionally, we can see that there is very little separation between the current belief's entropy and the hallucinated belief's entropy.
In contrast, we can see that conditioning $100$ times produces significantly more separation.
Figure \ref{fig:1cond_vs100cond_5_255_entries} shows the same trend, after additionally sampling 250 values for the red entries.
The Wasserstein objective shows the same trend, that conditioning more than once produces significantly more separation. 
A Wasserstein Distance of 0 indicates that the two distributions are identical.
\section{Experimental Setup}
\label{sec:experimental_setup}

\subsection{$\alpha$-Rank}
In the computation of $\alpha$-rank we set $\epsilon = 10^{-6}$ in all of our experiments.
\subsection{Baselines}

\textbf{ResponseGraphUCB}, uses a Hoeffding Bound to construct the confidence interval: \\
$(\mu - \sqrt{\frac{\log(2/\delta)(b-a)}{2N}}, \mu + \sqrt{\frac{\log(2/\delta)(b-a)}{2N}})$.
Where $\delta$ is the confidence hyperparameter of the algorithm, $b$ is the maximum value an entry can take, $a$ is the minimum value, and $N$ is the number of times a value has been seen for an entry.
For all experiments we swept over $\delta \in \{0.4, 0.3,0.2,0.1,0.05,0.01,0.001\}$, and the final value is selected by considering the area under the curve for $1 - P(f(\bar{M}) = r_{GT})$.

\textbf{Uniform}. The entry to sample if picked uniformly from all possible entries.

\textbf{Payoff}. The entry which maximises the information gain between its sample and the payoff distribution is chosen. For an isotropic Gaussian this is equivalent to picking the entry which the lowest count, which results in systematic sampling of each entry.
For a non isotropic Gaussian the same procedure as \citep{srinivas2009gaussian} is used.

\subsection{Graphs}

\textbf{$1 - P(f(\bar{M}) = r_{GT})$}. At each timestep we compute the $\alpha$-rank of the mean payoff matrix.
Equality is determined if $|f(\bar{M}) = r_{GT})|_1 < 0.01$.
The choice of $0.01$ is largely arbitrary, we did not find the results to be sensitive to this. 

\textbf{$1 - P(r_{GT})$}. 100 times during training (evenly spaced), we sample $2000$ samples from the current belief distribution over $\alpha$-ranks. $P(r_{GT})$ is determined from these $2000$ samples (which are aggregated by rounding each value to the nearest $3 d.p.$) by counting the number of sampled $\alpha$-ranks $r$ such that $|r - r_{GT}|<0.01$.

\textbf{$1 - P(r_{*})$}, is determined similarly to $1 - P(r_{GT})$, except we use the $2000$ samples to calculate the mode.

For ResponseGraphUCB, we construct a distribution over the payoff entries as being uniform over the confidence intervals. 

\subsection{Environments}
\subsection{2 Good, 2 Bad}

Observations are sampled from $Ber(x)$, where $x$ is the value in the payoff matrix.

\textbf{$\alpha$IG(Bins), $\alpha$IG(NSB), $\alpha$Wass}.
Prior used is $\mathcal{N}(\mu_0, \sigma_0^2)$, with aleatoric noise $\sigma_A^2$.
$\mu_0 = 0.5$.
Swept over $\sigma_0^2 \in \{0.5, 1\}, \sigma_A^2 \in \{0.25,0.5,1\}$.
20 samples are used to approximate the expectation, $N_{e}=20$.
1000 samples are drawn from the belief distribution(s) to approximate the quantities inside the expectation, $N_{b}=1000$.
We set $N_{r}=10$.

For all 3 methods we set $\sigma_0^2=1$.
For $\alpha$IG (Bins) and $\alpha$IG (NSB) we set $\sigma_A^2=0.5$, and for $\alpha$Wass we set $\sigma_A^2=0.25$.

\textbf{ResponseGraphUCB}. We set $\delta=0.4$. Maximum value is 1, minimum value is 0.

\subsection{3 Good, 5 Bad}

Observations are sampled from $Ber(x)$, where $x$ is the value in the payoff matrix.

\textbf{$\alpha$IG(Bins), $\alpha$IG(NSB), $\alpha$Wass}.
Prior used is $\mathcal{N}(\mu_0, \sigma_0^2)$, with aleatoric noise $\sigma_A^2$.
$\mu_0 = 0.5$.
Swept over $\sigma_0^2 \in \{0.5, 1\}, \sigma_A^2 \in \{0.25,0.5,1\}$.
$N_{e}=10$.
$N_{b}=500$.
$N_{r}=500$.

For all 3 methods we set $\sigma_0^2=1$.
For $\alpha$IG (Bins) and $\alpha$IG (NSB) we set $\sigma_A^2=0.5$, and for $\alpha$Wass we set $\sigma_A^2=0.25$.

\textbf{ResponseGraphUCB}. We set $\delta=0.05$.

\subsection{4x4 Gaussian}

To match the games considered in our theoretical analysis, Observations are sampled from $\mathcal{N}(x,1)$ and then clipped to be within $1$ of $x$, where $x$ is the value of the entry in the payoff matrix.
The values of $x$ are uniformly drawn from $[0,1)$.

\textbf{$\alpha$IG(Bins), $\alpha$IG(NSB), $\alpha$Wass}.
Prior used is $\mathcal{N}(\mu_0, \sigma_0^2)$, with aleatoric noise $\sigma_A^2$.
$\mu_0 = 0.5$.
Swept over $\sigma_0^2 \in \{0.5, 1\}, \sigma_A^2 \in \{0.5,1\}$.
$N_{e}=10$.
$N_{b}=500$.
$N_{r}=100$.
For $\alpha$IG(Bins) we set $\sigma_0^2 = 1$ and $\sigma_A^2=1$.
For $\alpha$IG(NSB) we set $\sigma_0^2 = 0.5$ and $\sigma_A^2=0.5$.
For $\alpha$Wass we set $\sigma_0^2 = 1$ and $\sigma_A^2=0.5$.

\textbf{ResponseGraphUCB}. We set $\delta=0.3$.  Maximum value is 2, minimum value is -1.

\subsection{3 Good, 5 Bad Incorporating Prior Knowledge}
\label{sec:3good5bad_kernel_setup}

\textbf{$\alpha$IG(Bins), $\alpha$Wass}.
Aleatoric noise $\sigma_A^2$ is set to $0.5$ and the mean $\mu = 0.5$.
$\mu_0 = 0.5$.
Swept over $\sigma_A^2 \in \{0.25,0.5,1\}$.
$N_{e}=10$.
$N_{b}=500$.
$N_{r}=100$.

The kernel $K$ we use for the GP is specified as follows:\\
In order to encode the prior knowledge that elements within a block (except the top left block) are equal, we partition the payoff matrix. A strategy $\sigma = (x,y)$ is a member of block $b_1$ if $1 \leq x \leq 3$ and $1 \leq y \leq 3$, if it is a payoff between a Good agent and another Good agent.
Block $b_2$ if $1 \leq x \leq 3$ and $y > 3$, a payoff between a Good agent and a Bad agent.
Block $b_3$ if $1 \leq y \leq 3$ and $x > 3$, a payoff between a Bad agent and a Good agent.
Block $b_4$ otherwise, if it is a payoff between a Bad agent and a Bad agent.

The kernel $k$ encoding block-wise equality is then defined as:\\
$k(b_i, b_j) = 1$ for all $i \neq j$.\\
$k(b_i, b_i) = 0$ for $i \in \{2,3,4\}$.\\
$k(b_1,b_1) = 1$.

To additionally encode anti-symmetry (about the mean $\mu=0.5$) we then produce a new kernel $k'$ from $k$ as follows:\\
For a strategy $\sigma = (x,y)$, define $\sigma^t := (y,x)$ the transpose of $\sigma$.
We wish to encode that $M(\sigma) = 1 - M(\sigma^t)$.\\
$k'(\sigma, \tau) = k(\sigma, \tau) + k(\sigma^t, \tau^t) - k(\sigma, \tau^t) - k(\sigma^t, \tau)$.

The finished kernel $K$ is then defined as $K := (k')^T(k')$ to ensure it is positive definite. The entries are then divided by 500 to ensure a suitable magnitude for the variance.

\textbf{ResponseGraphUCB}. We set $\delta=0.05$.

In order to incorporate the same modelling assumptions into RG-UCB, for every real sample we receive from the payoff matrix, we pretend to receive an appropriate sample for the relevant entries.

After receiving a payoff $p \in \{0,1\}$ for strategy $\sigma$, we then pretend to receive:\\
$1 - p$ for $\sigma^t$ to encode anti-symmetry.\\
$p$ for all $\tau$ in the same block as $\sigma$ (except the top-left block), where blocks are defined the same as for the kernel $k$ specified above.

\section{Further Results}

\subsection{Gaussian Games}

\begin{figure}
    \centering
    \includegraphics[height=3.25cm]{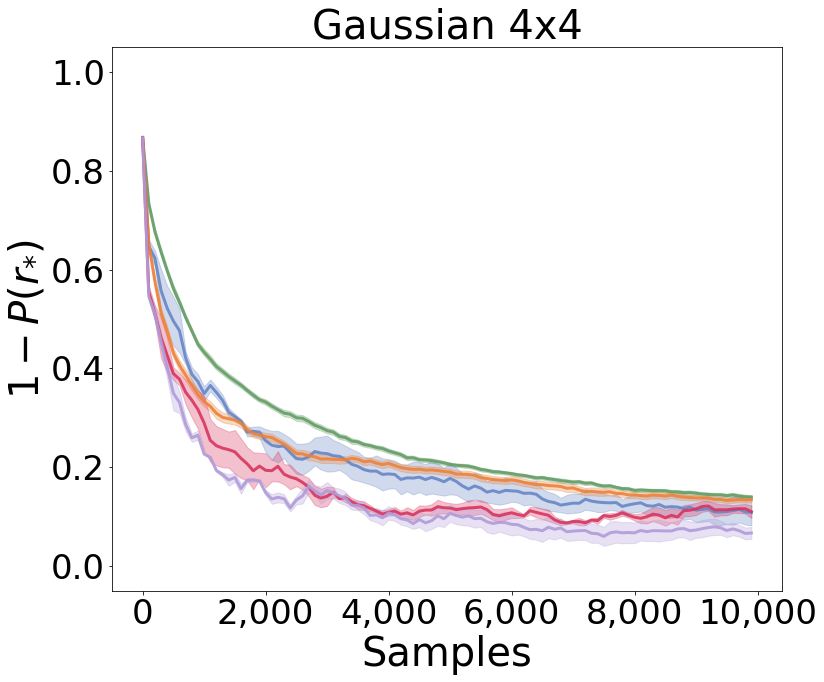}
    \includegraphics[height=3.25cm]{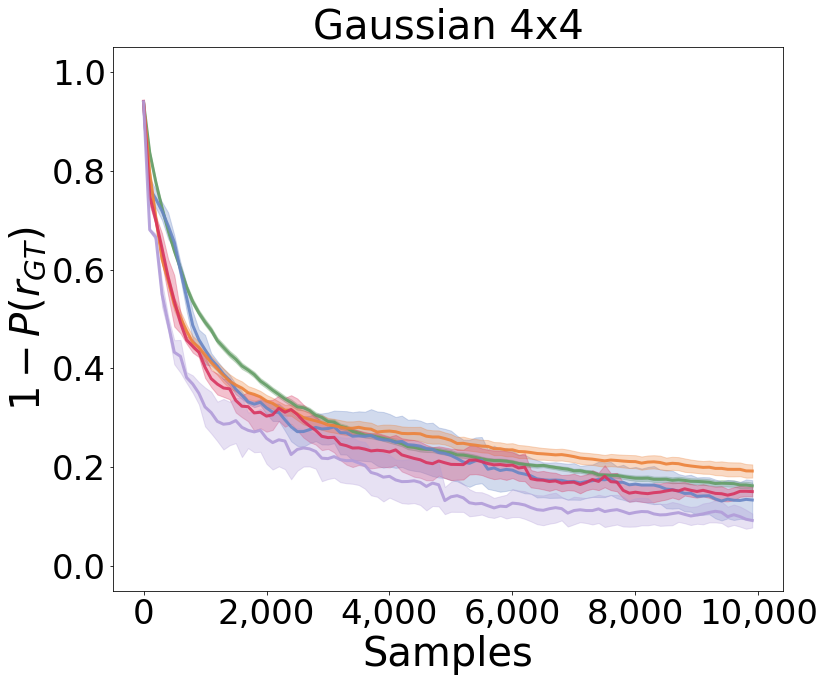}
    \includegraphics[height=3.25cm]{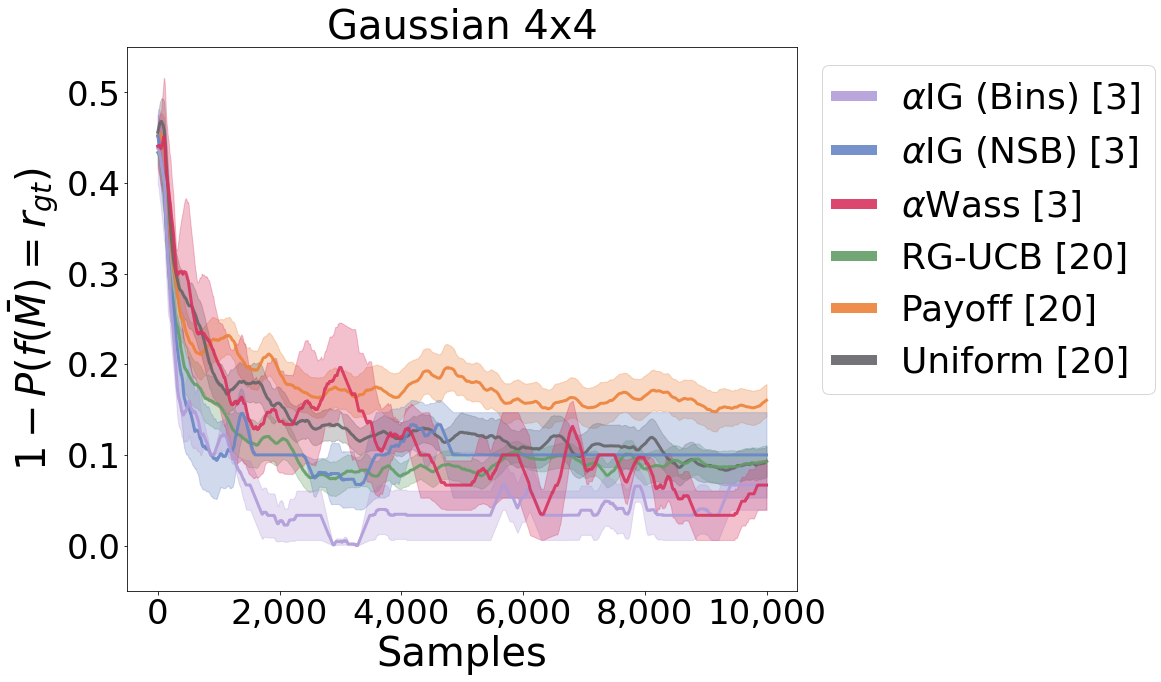}
    \caption{Results for the 4x4 Gaussian Game. Graphs show the the mean and standard error of the mean over multiple runs (shown in brackets) of 10 randomly sampled games.}
    \label{fig:4gauss_graphs}
\end{figure}

Figure \ref{fig:4gauss_graphs} shows the results on 4x4 games with Gaussian noise, demonstrating improved performance across all 3 regret metrics for the $\alpha$IG (Bins).
This is empirical confirmation of our theoretical results, and shows that our method achieves better performance compared to RG-UCB on general games. 

\subsection{2 Good, 2 Bad}
\label{sec:2good_2bad_appendix}
Figure \ref{fig:2good2bad_snapshot} shows the values used by the different objectives during training. The top row shows the values after sampling 5 values for each entry, showing a clear seperation between the \textcolor{myred}{Red} entries and the rest.
The bottom row shows the values after additionally sampling 250 values for the \textcolor{myred}{Red} entries.
We can then see a large difference between the Wasserstein and Entropy-based objectives.
As desired the Wasserstein-based objective shows a large separation between the \textcolor{myred}{Red} entries and the others, additionally assigning the smallest values to the irrelevant \textcolor{mygreen}{Green}, and \textcolor{mypurple}{Purple} entries.

\begin{figure}[h!]
    \centering
    \includegraphics[width=0.45\textwidth]{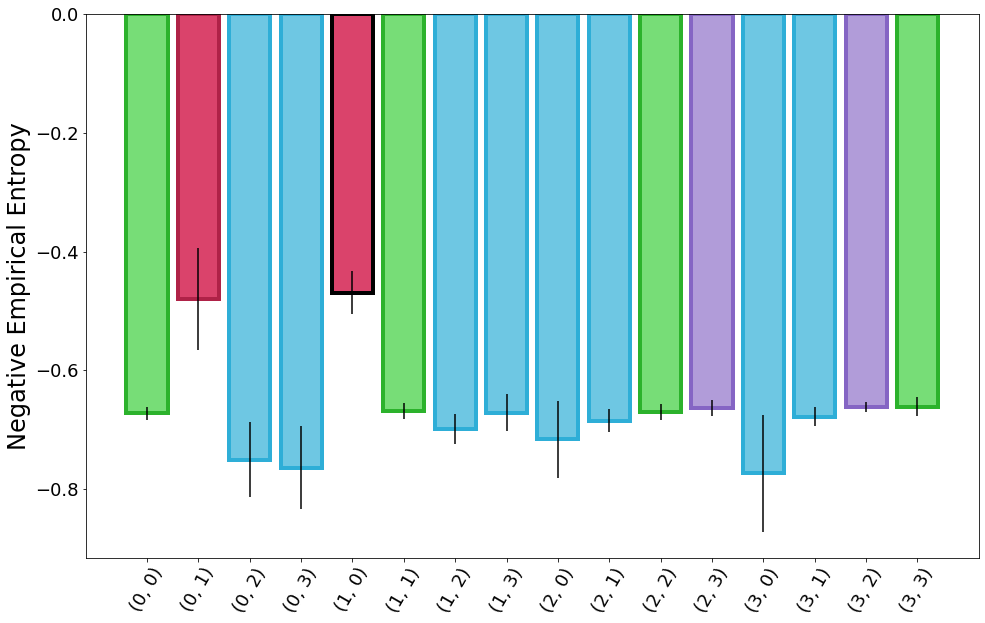}
    \includegraphics[width=0.45\textwidth]{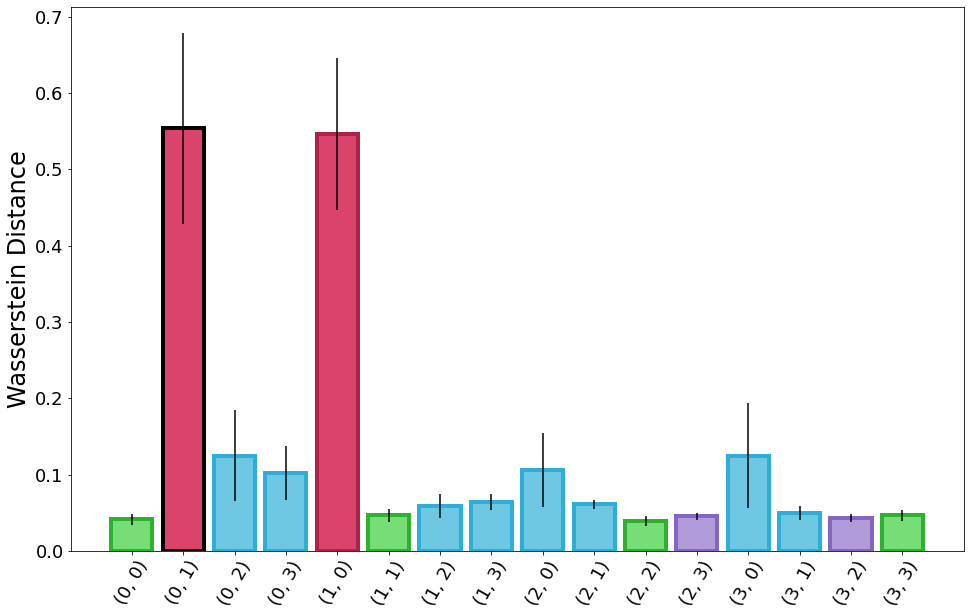}\\
    \includegraphics[width=0.45\textwidth]{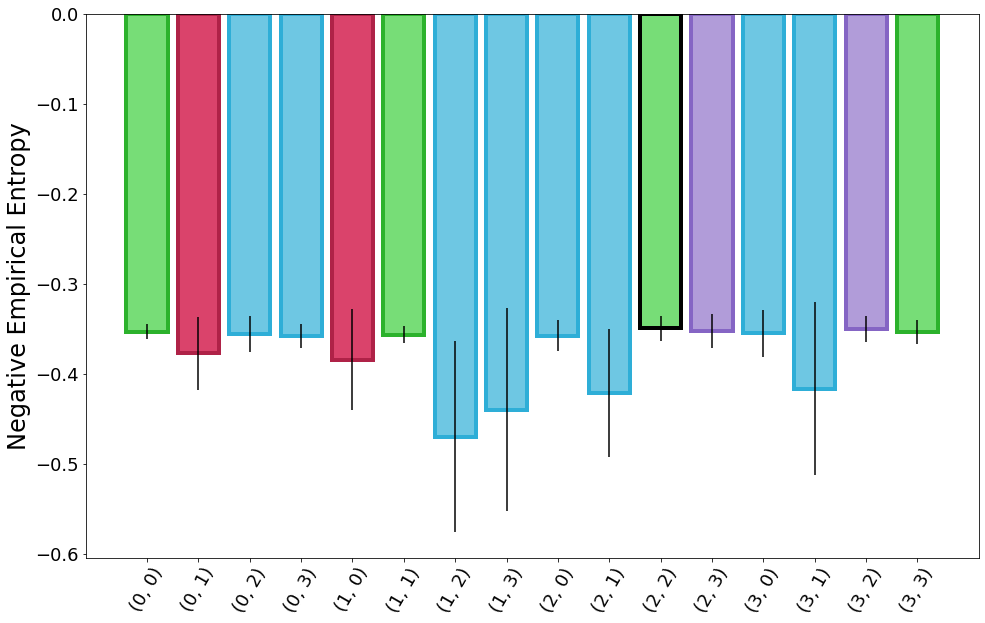}
    \includegraphics[width=0.45\textwidth]{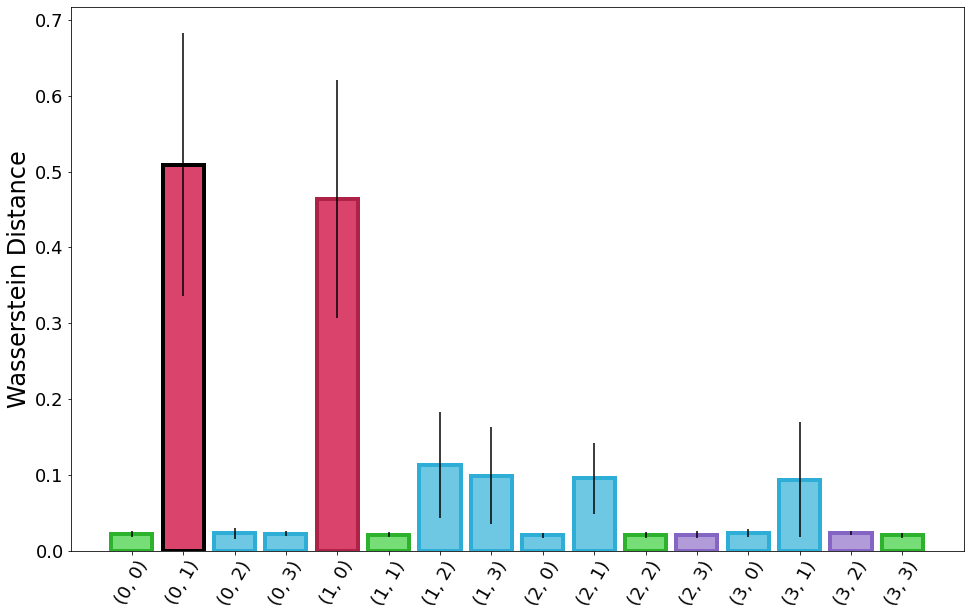}
    \caption{Value of the objectives for each entry after sampling 5 values for every entry (top) and additionally sampling 250 values for the red entries (below). Mean and standard deviation are plotted across 10 seeds, maximum entry is highlighted in black.}
    \label{fig:2good2bad_snapshot}
\end{figure}

\begin{figure}[h!]
    \centering
    \includegraphics[width=0.32\textwidth]{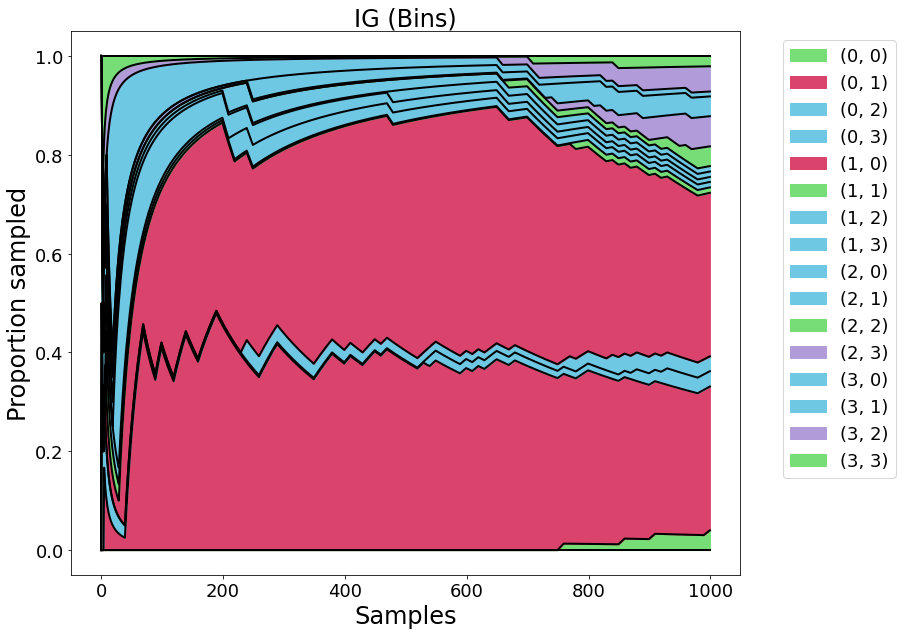}
    \includegraphics[width=0.32\textwidth]{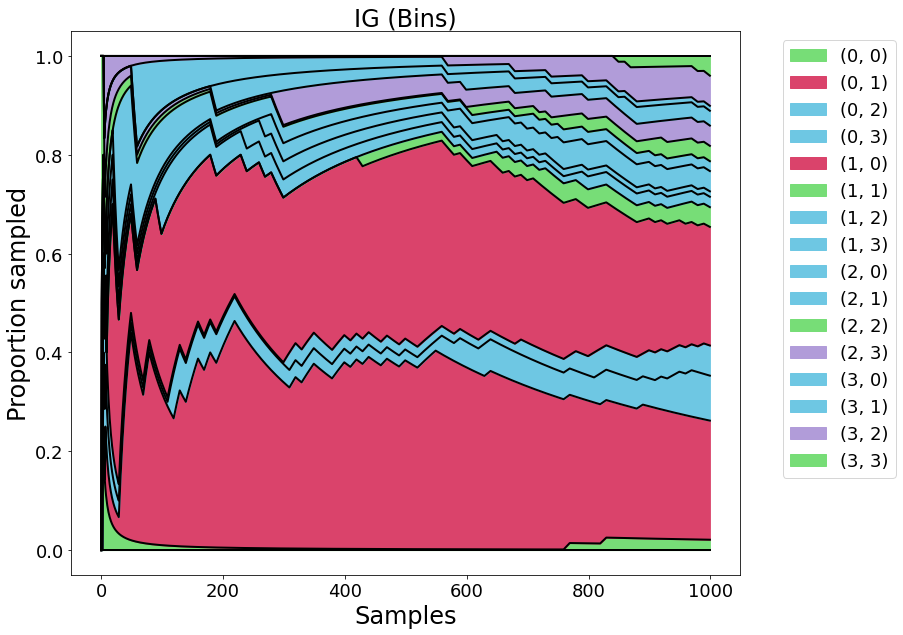}    
    \includegraphics[width=0.32\textwidth]{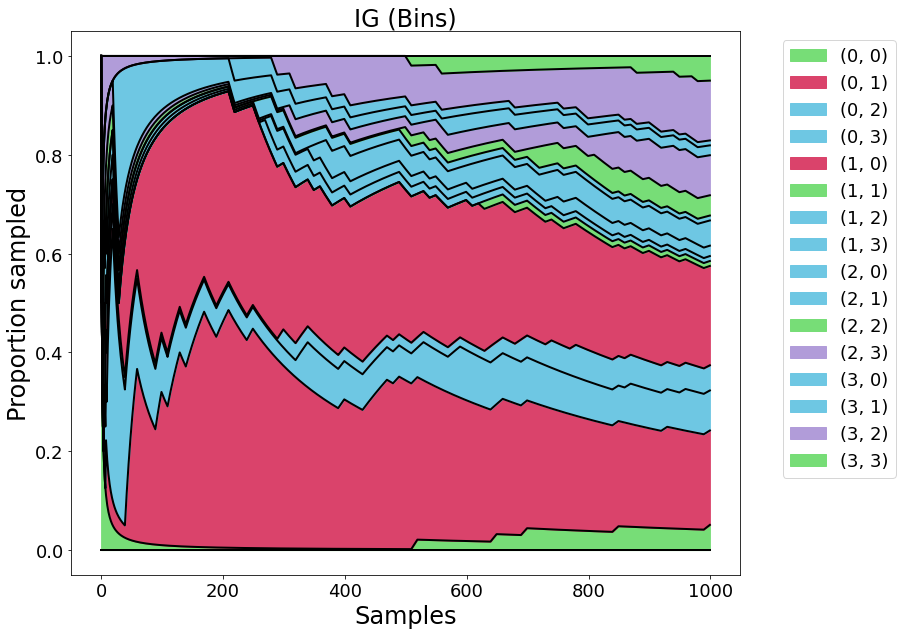}
       
    \includegraphics[width=0.32\textwidth]{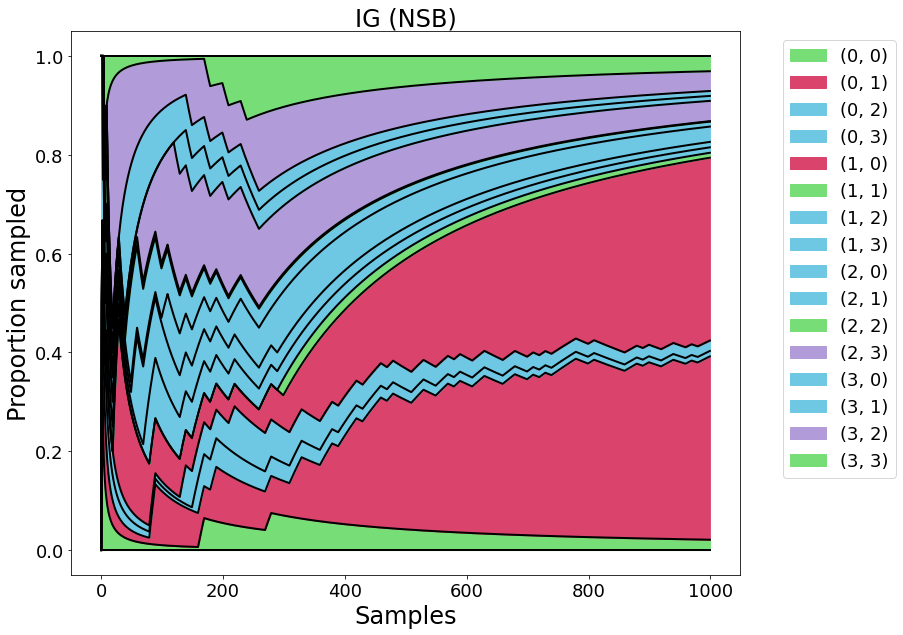}
    \includegraphics[width=0.32\textwidth]{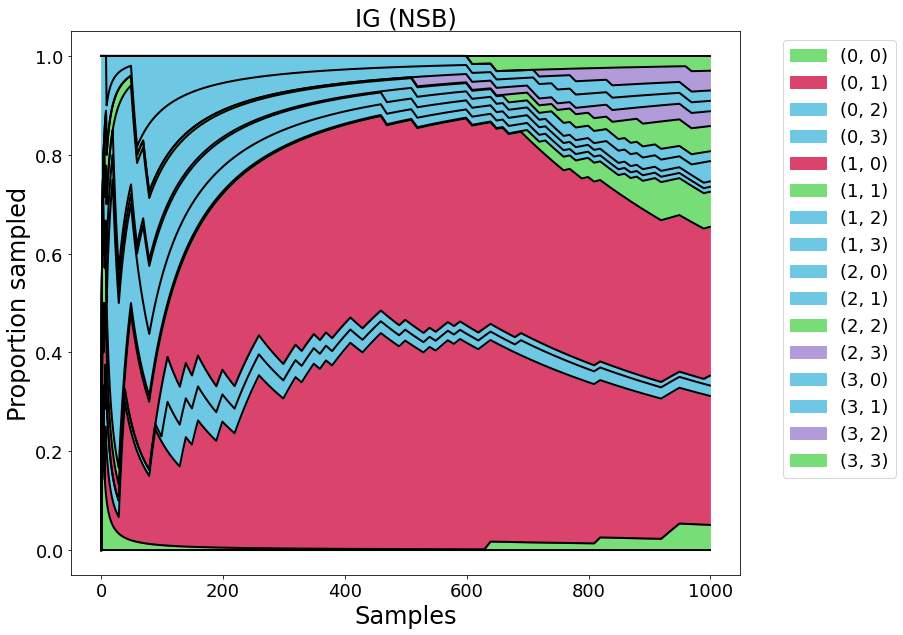}
    \includegraphics[width=0.32\textwidth]{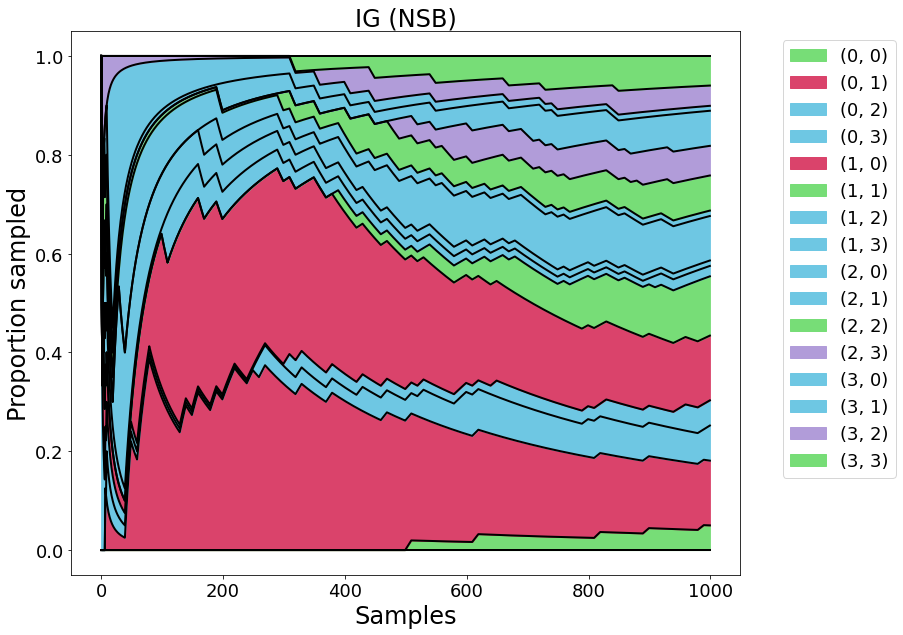}
    
    \includegraphics[width=0.32\textwidth]{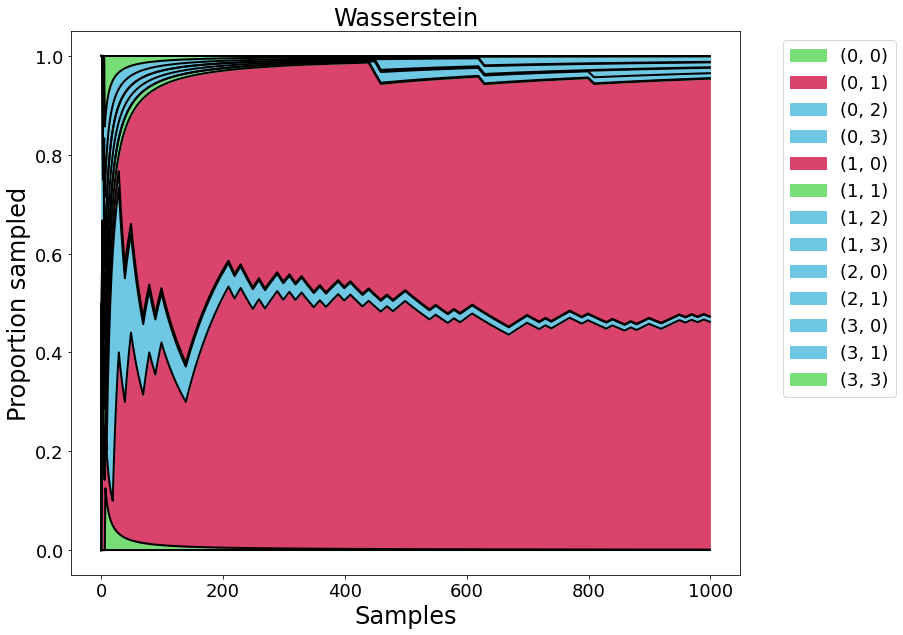}
    \includegraphics[width=0.32\textwidth]{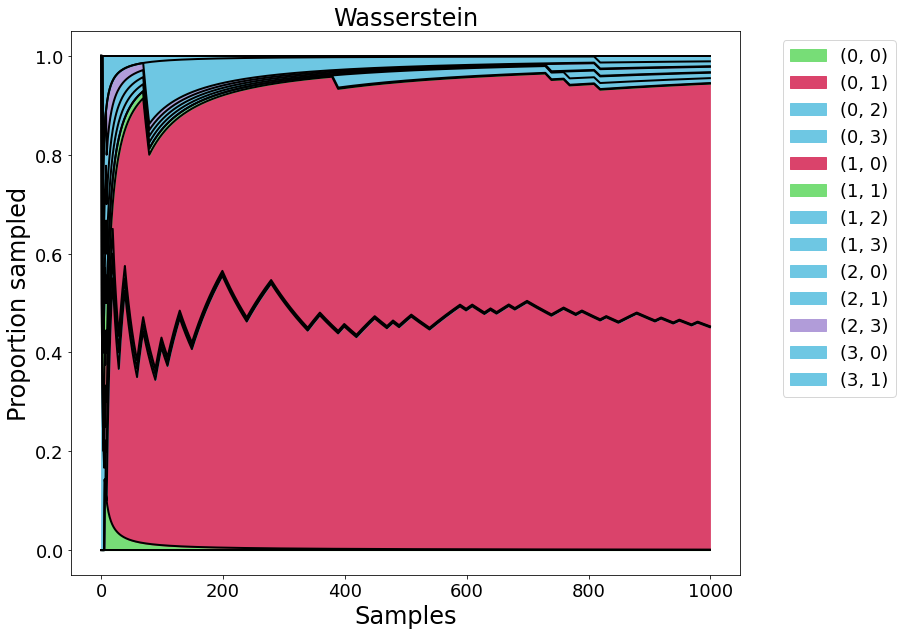}
    \includegraphics[width=0.32\textwidth]{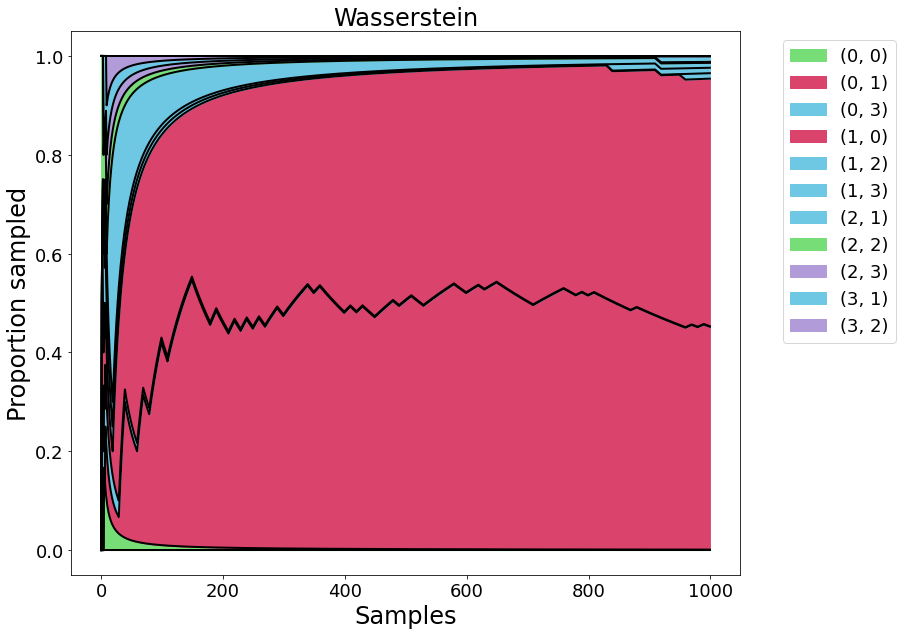}
    
    \includegraphics[width=0.32\textwidth]{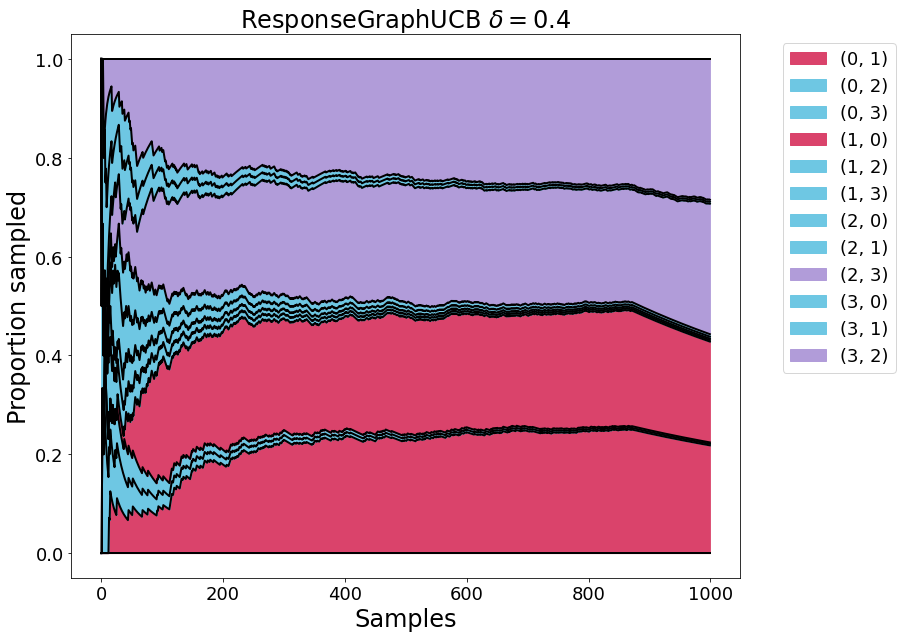}
    \includegraphics[width=0.32\textwidth]{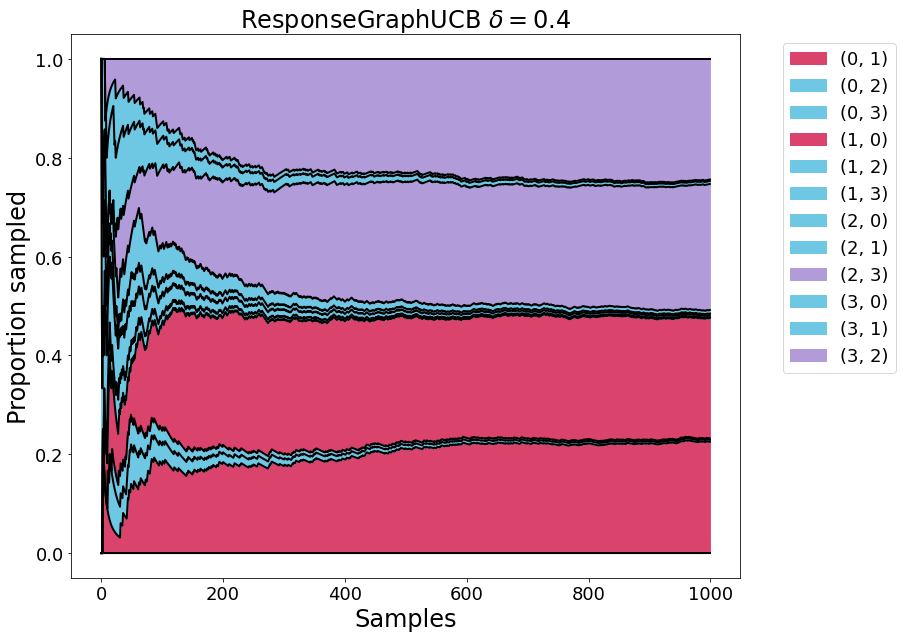}
    \includegraphics[width=0.32\textwidth]{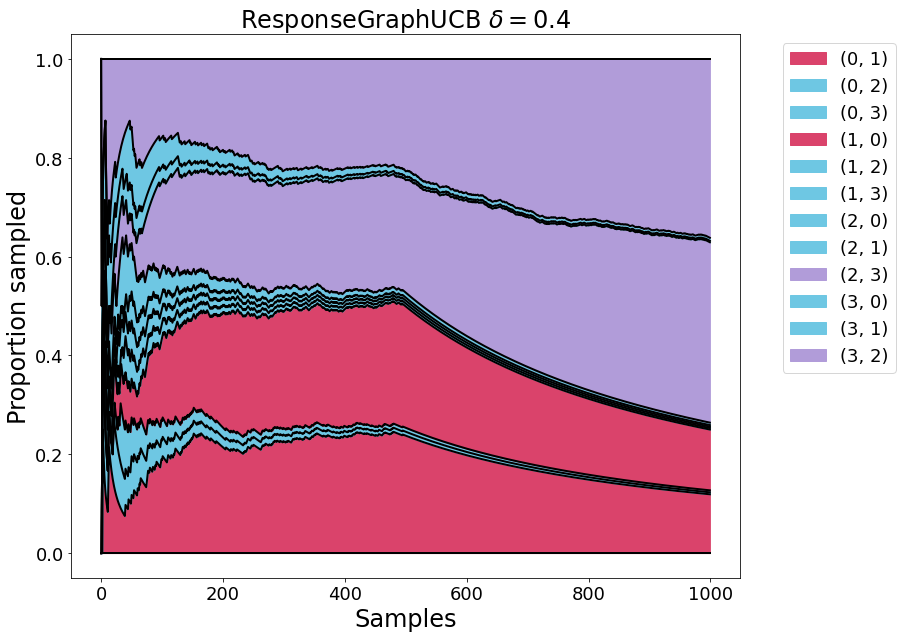}
    
    \caption{Proportion of entries sampled on 2 Good, 2 Bad for more seeds.}
    \label{fig:2good_2bad_entries_all}
\end{figure}

\subsection{3 Good, 3 Bad}

\begin{figure}[h!]
    \centering
    \includegraphics[width=0.45\textwidth]{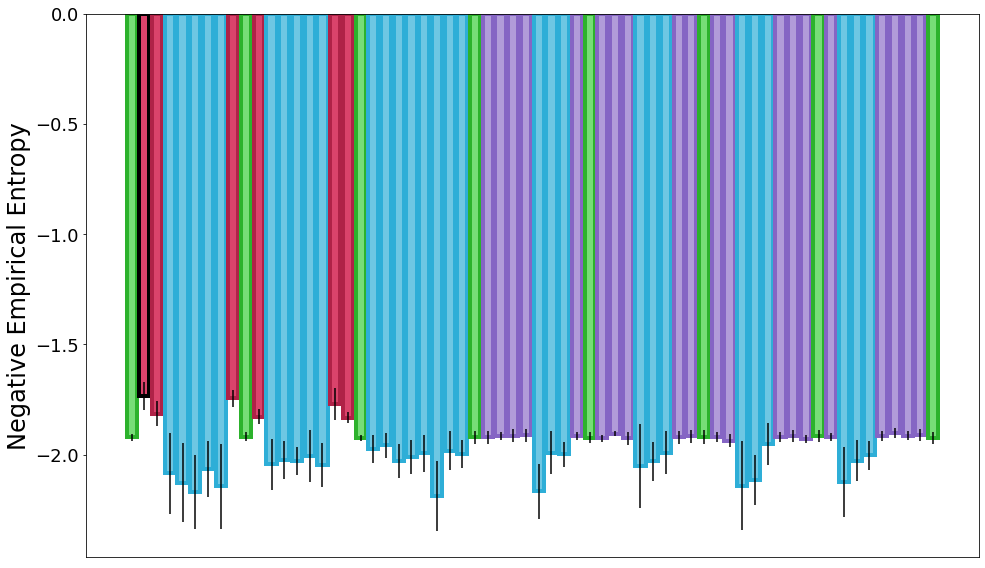}
    \includegraphics[width=0.45\textwidth]{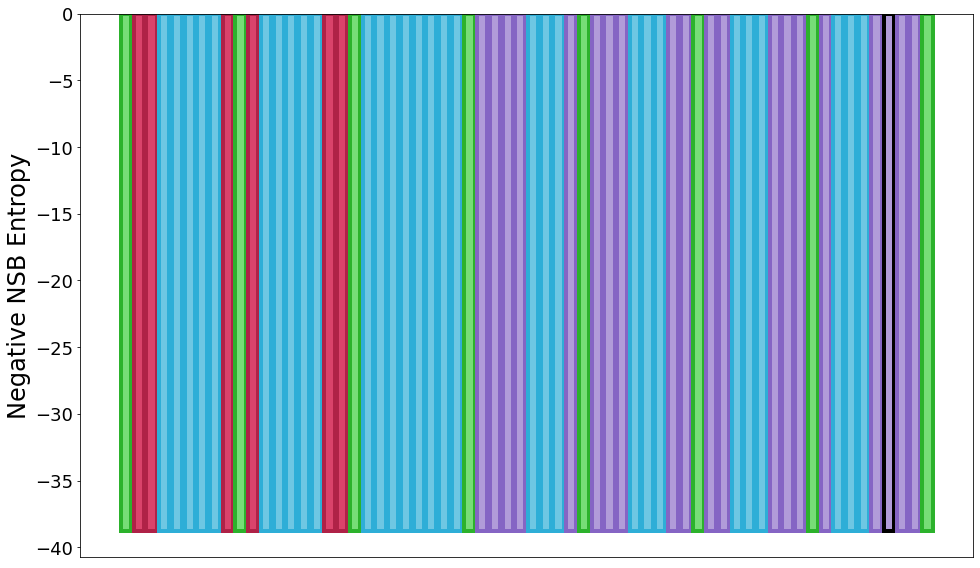}
    \includegraphics[width=0.45\textwidth]{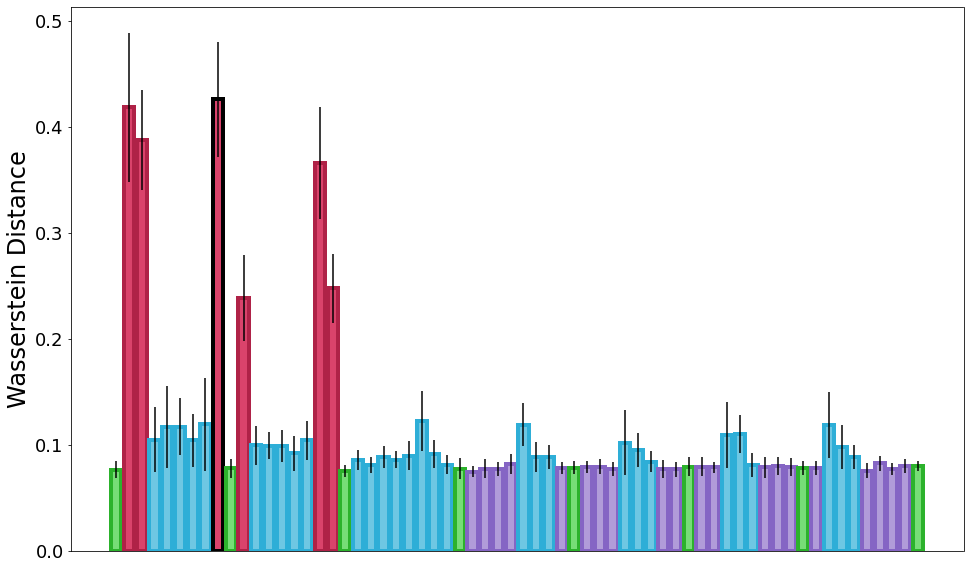}
    \caption{Values of the objectives for each entry on `3 Good, 5 Bad' after sampling 5 values for every entry. Mean and standard deviation across 10 seeds is shown, maximum highlighted in black.}
    \label{fig:3good_5_5}
\end{figure}

\begin{figure}[h!]
    \centering
    \includegraphics[width=0.45\textwidth]{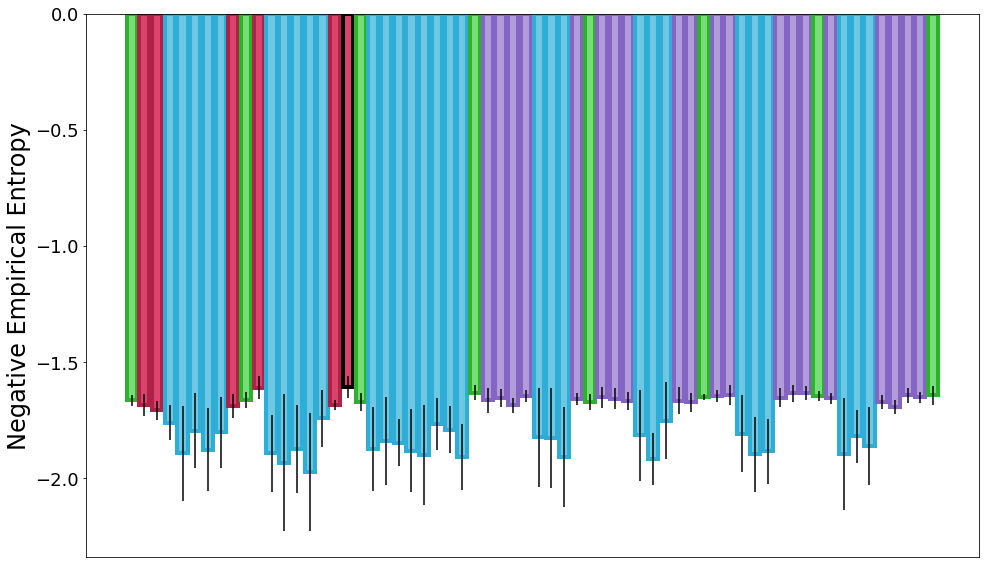}
    \includegraphics[width=0.45\textwidth]{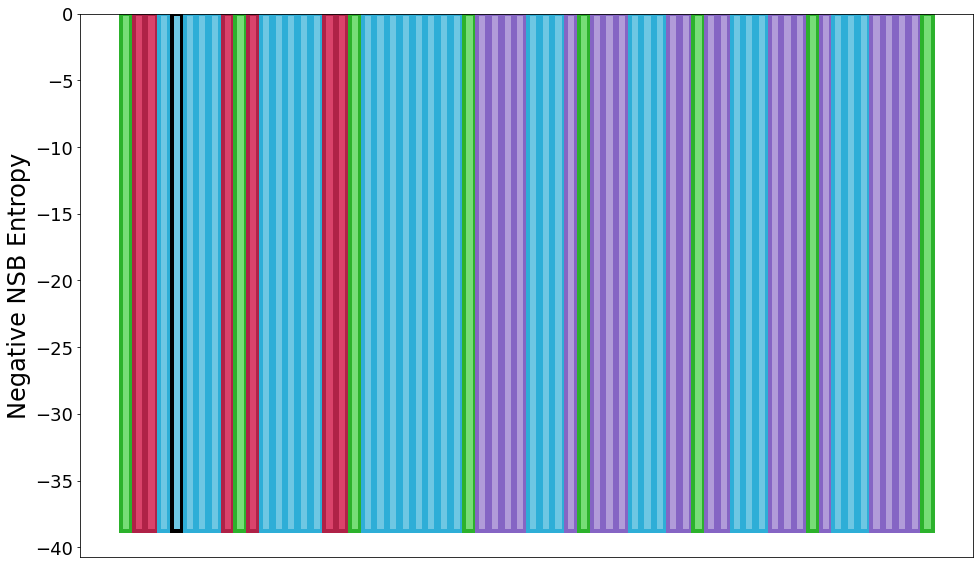}
    \includegraphics[width=0.45\textwidth]{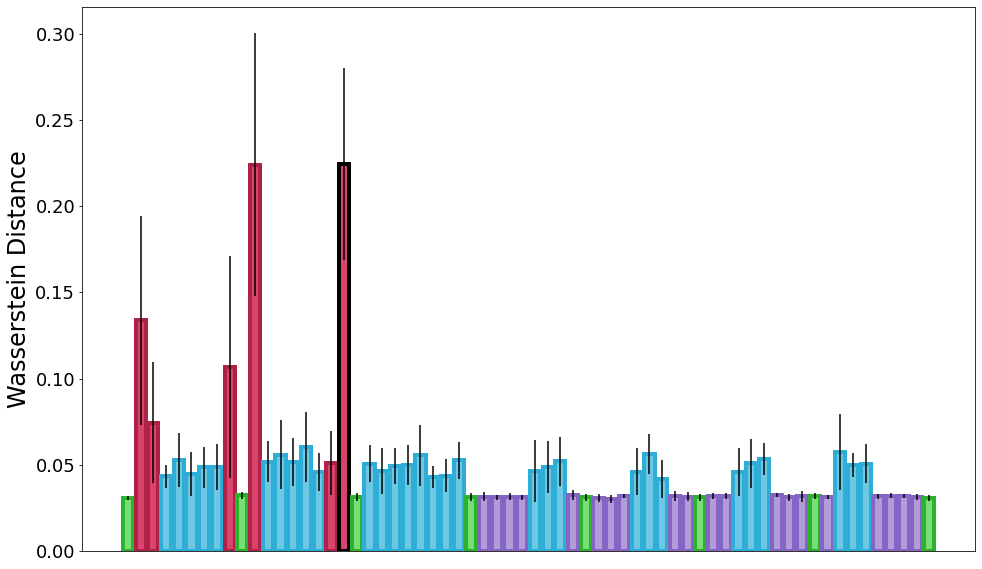}
    \caption{Values of the objectives for each entry on `3 Good, 5 Bad' after sampling 5 values for every entry, and additionally sampling 1000 values for the red entries. Mean and standard deviation across 10 seeds is shown, maximum highlighted in black.}
    \label{fig:3good_5_1005}
\end{figure}

\begin{figure}[h!]
    \centering
    \includegraphics[width=0.32\textwidth]{images/3good_5bad/entries/entries_ent_0_1.png}
    \includegraphics[width=0.32\textwidth]{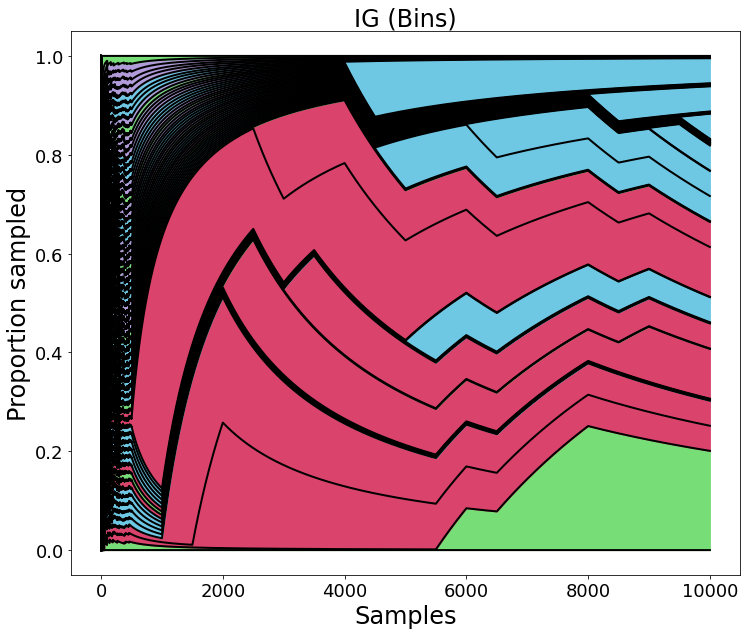}     \includegraphics[width=0.32\textwidth]{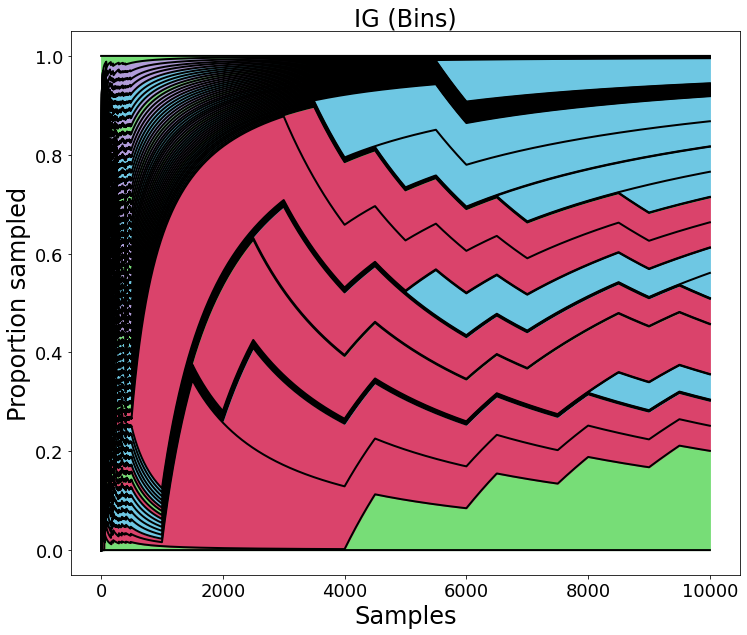}
       
    \includegraphics[width=0.32\textwidth]{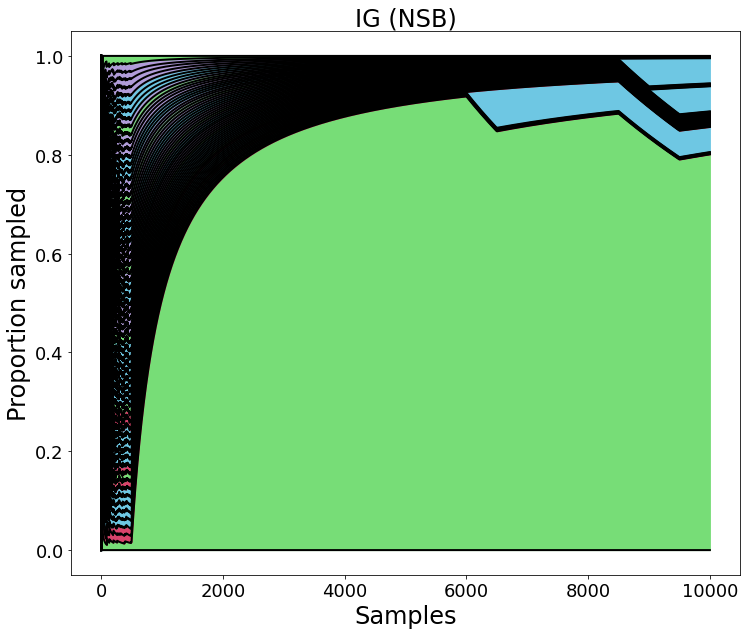}
    \includegraphics[width=0.32\textwidth]{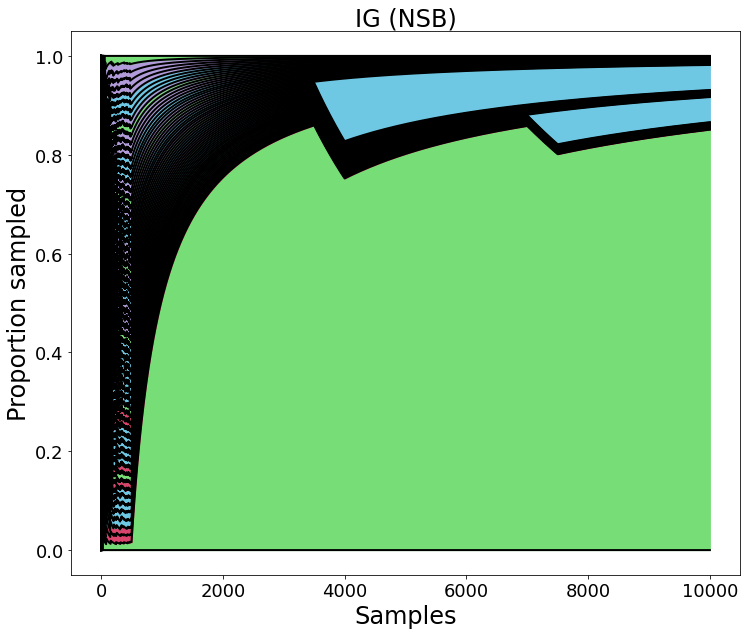}     \includegraphics[width=0.32\textwidth]{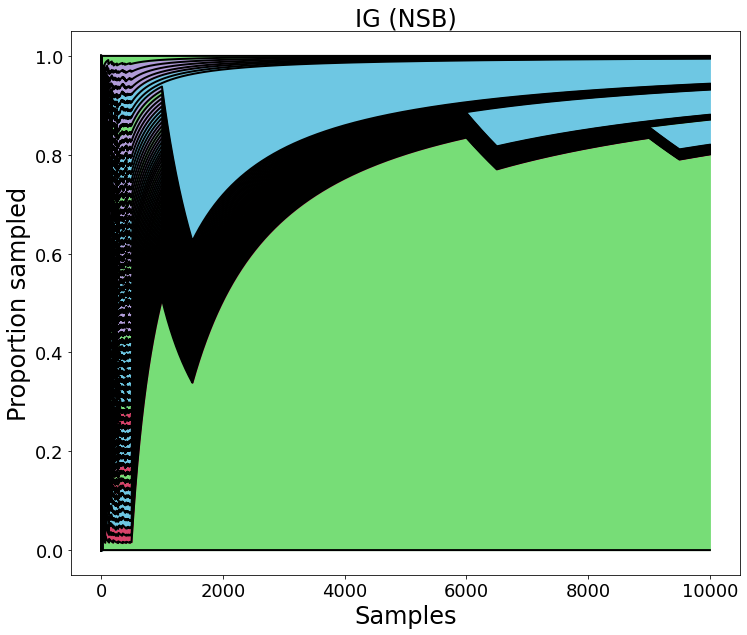}
    
    \includegraphics[width=0.32\textwidth]{images/3good_5bad/entries/entries_wass_0_1.png}
    \includegraphics[width=0.32\textwidth]{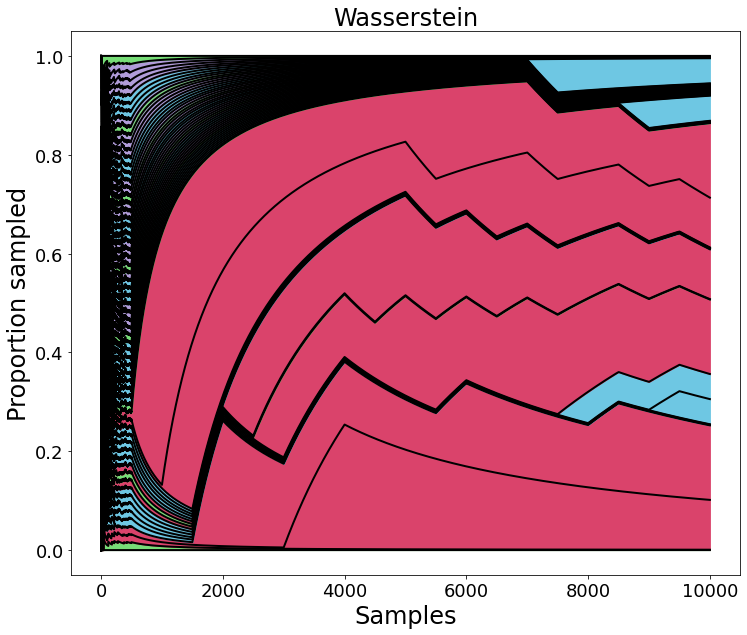}    \includegraphics[width=0.32\textwidth]{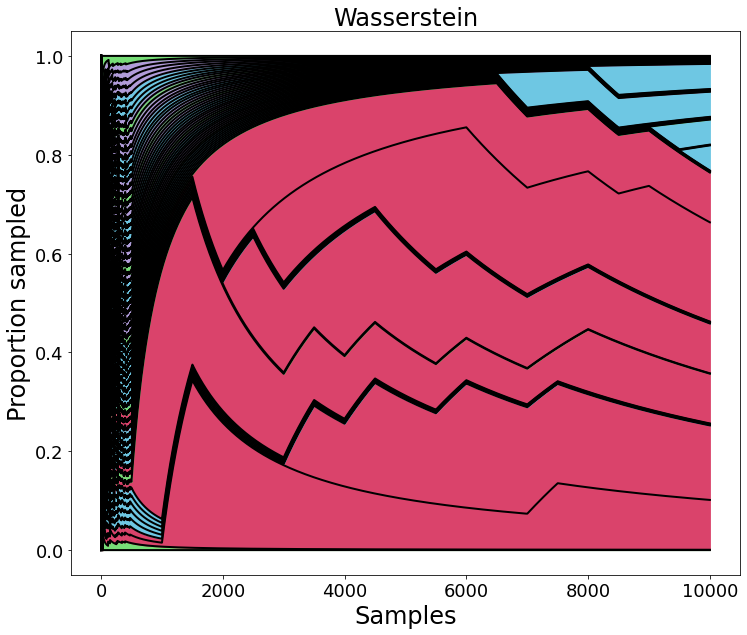}
    
    \includegraphics[width=0.32\textwidth]{images/3good_5bad/entries/entries_f_1.png}
    \includegraphics[width=0.32\textwidth]{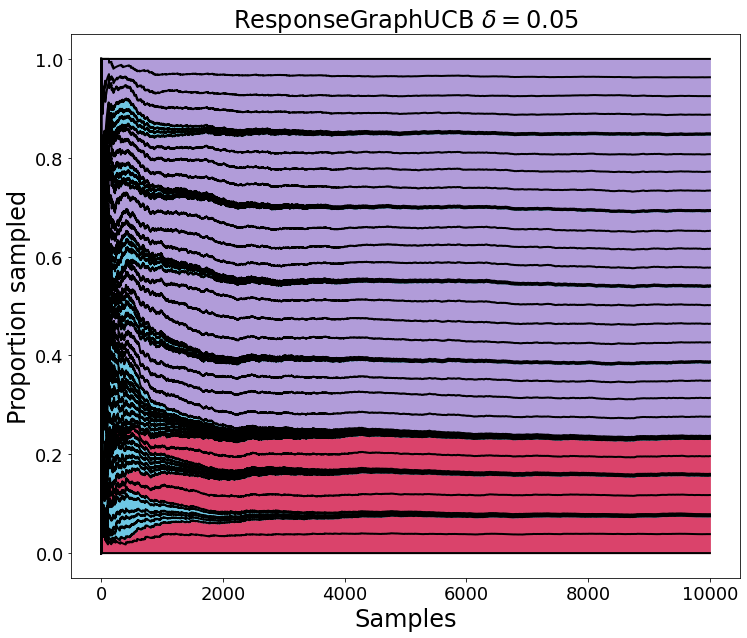}     \includegraphics[width=0.32\textwidth]{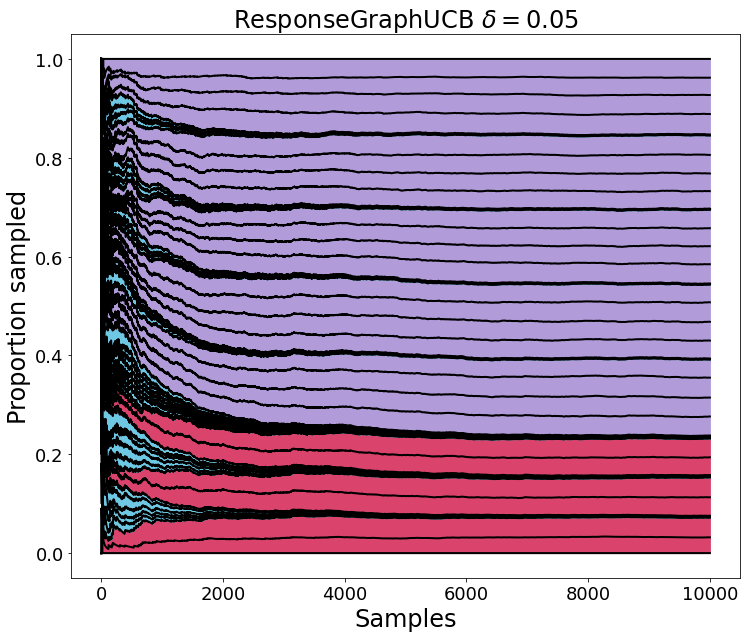}
    
    \caption{Proportion of entries sampled on 3 Good, 5 Bad for more seeds.}
    \label{fig:3good_5bad_entries_all}
\end{figure}
\clearpage
\section{Proofs}
\label{sec:proofs}

\paragraph{Notation}
As a reminder, we reintroduce notation that is relevant to this section.\\
$M_\star$ is the true payoff vector, which is unknown to us.\\
$M$ is our prior distribution over the entries of the payoff vector, represented by a GP.\\
The GP models noise in the observations of the payoff as $\tilde{M} = M + \epsilon$, where $\epsilon \sim \mathcal{N}(0,I\sigma_A^2)$.
We model the payoffs we receive from the real game at timestep $t$ when taking action $a_t$, as $m_t \sim M_\star(a_t) + \epsilon'_t$. Where, $\epsilon'_t$ is i.i.d. and has support on the interval $[-\sigma_A, \sigma_A]$.

\paragraph{Clipped Noise}
Note that it is important the observation noise $\epsilon'_t$ is clipped, since it allows us to apply Lemma \ref{lemma-gp-concentration} which is an existing result from \citet[Theorem 6]{srinivas2009gaussian}. In that paper, \citet{srinivas2009gaussian} assume the noise terms $\epsilon'_t$ are uniformly bounded by $\sigma_A$ which is equivalent to all $\epsilon'_t$ having support on the interval $[-\sigma_A, \sigma_A]$.
Since this assumption is shared by the seminal paper \citep{srinivas2009gaussian}, we do not believe it to be overly restrictive in our theoretical analysis.

\paragraph{Regret}

We quantify the performance of our method by measuring regret. Our main analysis relies on Bayesian regret \citep{information-directed-informs}, defined as
\begin{gather}
    \label{eq-regret-b}
    J_t^B = 1 - \expect{}{P(\optrv = \bestGuess \vert H_t = h_t) }{}, 
\end{gather}
where the expectation is taken over the following:
\begin{itemize}
    \item Our prior distribution, $M_t$, representing our uncertainty over the true unknown payoff vector $M_\star$ at timestep $t$.
    \item The randomness in the actions we have taken and the corresponding observations we have received up to timestep $t$. These are encoded by our history $H_t$, in which a particular realization is $h_t = a_1, m_1, ... ,a_t, m_t$. $a_t \sim A_t$, our distribution over actions to take at timestep $t$, and $m_t \sim M_\star(a_t) + \epsilon'_t$ our clipped noise model when interacting with the game.
\end{itemize}
In this formulation $\bestGuess$ is used to denote the $\alpha$-rank with the highest probability under $r$ at time $t$, where
$r$ is the distribution over $\alpha$-ranks according to the prior, $P(r) = P(M_t \in f^{-1}(r))$.
Since $J_t^B$, like all purely Bayesian notions, does not involve the ground truth payoff, we need to justify its practical relevance. We do this by benchmarking it against two notions of frequentist regret. 
The first measures how accurate the probability we assign to the ground truth $\groundTruth = f(\groundTruthM)$ is
\begin{gather}
    J_t^F = 1 - \expect{h_t}{P(\optrv = \groundTruth \vert H_t = h_t) }{}.
\end{gather}
The second measures if the mean of our payoff belief, which we denote $M_\mu$, evaluates to the correct $\alpha$-rank
\begin{gather}
    J_t^M = 1 - \expect{h_t}{\idone{ f(M_\mu) = \groundTruth}}{},
\end{gather}
where the symbol $\idone{\text{predicate}}$ evaluates to 1 or 0 depending on whether the predicate is true or false. 
For both these notions of regret the expectation is taken only over the history $h_t \sim H_t$.

\paragraph{Permutation Property}
We begin by explicitly stating a property of the infinite-$\alpha$ version of $\alpha$-rank. 
The function $f$ computing the $\alpha$-rank satisfies the permutation property, defined as   
\begin{gather}
\label{permutation-property}
\pi(M_1) = \pi(M_2) \;\; \Longrightarrow \;\; f(M_1) = f(M_2).
\end{gather}
Here, $\pi(M)$ denotes the ordering of the elements of the vector $M$ using the standard $\geq$ operation on real numbers. This is the same property exploited by frequentist analysis by \citet{rowland2019multiagent}. 
Letting $R := \mathcal{R}^{|\mathcal{S}|}$, be the space of all valid $\alpha$-ranks,
Property \eqref{permutation-property} implies that $R$ is a finite set and 
\begin{gather}
|R| \leq N!,
\label{number-of-alpha-ranks-bounded}
\end{gather}
where $N := |\mathcal{S}|$ the number of pure strategies/actions.
Note that our proofs consider the general multi-population case of $\alpha$-rank, and our not restricted to just the single population scenario.

\paragraph{Separability Assumption} Similarly to the work of \citet{rowland2019multiagent}, we limit ourselves to payoffs that are distinguishable in order to make $\alpha$-rank robust to small changes in the payoffs. We assume that there exists a constant $\Delta > 0$ such that for all payoff indices $i$, $j$
\begin{gather}
\label{eq-separable}
\left \vert \groundTruthM(i) - \groundTruthM(j) \right \vert \geq \Delta.
\end{gather}

\paragraph{Information Gain and Entropy}
We recall a formula for the information gain in terms of the entropy:
\begin{align}
    \informationGain{r}{(\tilde{M}_t(a),a)}{H_t = h_t} &= \entropy{r}{H_t = h_t} - \entropy{r}{H_t = h_t, A_t = a, \tilde{M}_t(a)} \\
    &=\entropy{r}{H_t = h_t} - \mathop{\E}_{\tilde{m}_t \sim \tilde{M}_t(a)} \left[\entropy{r}{H_t = h_t, A_t=a, \tilde{M}_t(a)=\tilde{m}_t}\right].
    \label{ig-as-entropy}
\end{align}

\paragraph{Regret Bound For Policy Maximizing Information Gain on Payoffs}
We now show a regret bound for a policy that maximizes information gain on the payoffs. 
Define:
\begin{gather}
    \pi_{\text{IGM}}(a \vert H_t = h_t) = \argmax_a \informationGain{\tilde{M}_t}{(\tilde{M}_t(a),a)}{H_t = h_t},
    \label{eq-policy-ig-m-appendix}
\end{gather}
as the policy which selects the action that maximises the information gain on the payoffs (given any history $H_t$ and prior $M_t$).
Let $H_T^{\text{IGM}}$ denote the history when following $\pi_{\text{IGM}}$ for $T$ timesteps.\\

{\bf Proposition 1} [Regret Bound For Information Gain on Payoffs]
If we select actions using strategy $\pi_{ \text{IGM} }$, the regret at timestep $T$ is bounded as
\begin{gather}
J^B_T \leq J^F_T \leq 1 - \expect{h_T \sim H_T^{\text{IGM}}}{P(\optrv = \groundTruth \vert H_T = h_T) }{} \leq 
        { T e^{g(T)} }  \;\; \text{where} \;\; 
        g(T) = \mathcal{O}(-\sqrt[3]{\Delta^2 T} )
        .
\end{gather}

\begin{proof}

We know that $P(r=r_{\star} | H_T = h_T) \geq P(\optrv = \groundTruth \vert H_T = h_T)$ since $r_{\star}$ is defined as the $\alpha$-rank with the highest probability under $r$ and time $t$.\\
Thus, $J^B_T \leq J^F_T$ for any history. 

Fix a history $h_T$. By assumption of separability, we have
\begin{gather}
P(\optrv = \groundTruth \vert H_T = h_t) \geq
P \left(\vert M_t - M_\star \vert_\infty \leq \frac{\Delta}{2}  \right).
\end{gather}
We now use concentration results for Gaussian Processes. Specifically, we invoke Corollary \ref{cor-sample}, stated later, together with an explicit formula for $g(T)$.

This proves $J^F_T \leq T e^{g(T)}$, ending our proof.
\end{proof}

\paragraph{Regret Bound For Policy Maximizing Information Gain on $\alpha$-Ranks}
We move on to show a bound for a policy that maximizes information gain on the $\alpha$-ranks.
Define:
\begin{gather}
    \pi_{\text{IGR}} = \argmax_{a_1,\dots,a_T} \informationGain{r}{(\tilde{M}_1(a_1),a_1), \dots, (\tilde{M}_T(a_T),a_T)}{},
    \label{eq-policy-ig-r-appendix}
\end{gather}
as the policy which maximizes information gain on the $\alpha$-ranks directly. 
Note that this is an extension of \eqref{eq-ig-entropy} to $T$-step look-ahead.

Let $H_T^{\text{IGR}}$ denote the history when following $\pi_{\text{IGR}}$ up to timestep $T$.
Denote by $h_b(p) = -(p\log p + (1-p) \log (1-p))$ the entropy of a Bernoulli random variable with parameter $p$, and denote $h_b^{-1}$ as the inverse of the restriction of $h_b$ to the interval $[1/2,1]$.\\

{\bf Proposition 2 Expanded~}[Regret Bound For Information Gain on Belief over $\alpha$-Ranks]
If we select actions using strategy $\pi_{ \text{IGR} }$, the bayesian regret is bounded as
\begin{gather}
J^B_T = 1 - \expect{h_T \sim H_T^{\text{IGR}}}{P(\optrv = \bestGuess \vert H_T = h_T)}{} \leq \nonumber\\ 
1 - \idone{T e^{g(T)} \leq (N \log N)^{-1}} \bigg[ h_b^{-1}(\expect{h_T \sim H_T^{\text{IGM}}}{ h_b(1 - Te^{g(T)}) + T e^{g(T)} N \log N}{}) \bigg],
\end{gather}
and $g(T)$ is as in Proposition \ref{proposition-regret-bound-payoffs}.

\begin{proof}
We start by bounding the entropy of the $\alpha$-rank distribution. 
Let the abbreviation $p^\star = P(\optrv = \bestGuess \vert H_T = H_T^{\text{IGM}})$.  

We have
\begin{align*}
    \entropy{r}{H_T^{\text{IGR}}} \hspace{-1.5cm}& \\
    & \overset{(a)}{\leq} \entropy{r}{H_T^{\text{IGM}}} \\ 
    & \overset{(b)}{\leq} \expect{h_T \sim H_T^{\text{IGM}}}{ h_b( p^\star ) + (1 - p^\star) \log(\vert R  \vert)}{} \\ 
    & \overset{(c)}{\leq} \expect{h_T \sim H_T^{\text{IGM}}}{ h_b( p^\star ) + (1 - p^\star) N \log N}{}.
\end{align*}
Here, (a) follows from the definition of $\pi_{\text{IGR}}$ and equation \eqref{ig-as-entropy}, (b) follows by Lemma \ref{lemma-entropy-ub} and (c) holds because $\vert R \vert \leq N!$ by Equation \eqref{number-of-alpha-ranks-bounded}. 
Combining the above with the bound $1 - p^\star \leq T e^{g(T)}$ from Proposition \ref{proposition-regret-bound-payoffs}, we have
\begin{align}
     \entropy{r}{H_T^{\text{IGR}}} &\leq \expect{h_T \sim H_T^{\text{IGM}}}{h_b( p^\star ) + (1 - p^\star) N \log N}{} \\
     &\leq \expect{h_T \sim H_T^{\text{IGM}}}{h_b( p^\star ) + T e^{g(T)} N \log N}{}.
\label{bound-entropy-rank}
\end{align}

Let us now assume that $T e^{g(T)} \leq 1/2$, since we are interested in the behaviour of our regret bound as $T \to \infty$, and we know that as $T \to \infty, T e^{g(T)} \to 0$.
If $T e^{g(T)} > 1/2$ then we can trivially bound our expression above by $1$.
Then $p^\star \geq 1 - T e^{g(T)} \geq 1/2 \implies h_b(p^\star) \leq h_b(1 - Te^{g(T)})$.

We now proceed to bound the probability of $\bestGuess$ in terms of the entropy of the $\alpha$-ranks. We have
\begin{gather*}
h_b( P(\optrv = \bestGuess \vert H_T^{\text{IGR}} = h_t) ) \leq \entropy{r}{H_T^{\text{IGR}}}.
\end{gather*}
This, together with \eqref{bound-entropy-rank} and $h_b(p^\star) \leq h_b(1 - Te^{g(T)})$ implies 
\begin{align}
h_b( P(\optrv = \bestGuess \vert H_T^{\text{IGR}} = h_t) ) \leq \expect{h_T \sim H_T^{\text{IGM}}}{h_b(1 - Te^{g(T)}) + T e^{g(T)} N \log N}{}.
\end{align}

Since the codomain of $h_b$ is $[0,1]$, we must introduce additional restrictions in order to be able to invert the function.

To ensure $h_b(1 - Te^{g(T)}) + T e^{g(T)} N \log N \leq 1$ we restrict our analysis to when $T e^{g(T)} \leq (N \log N)^{-1}$. 
Note that this subsumes our earlier restriction of $T e^{g(T)} \leq 1/2$.
Again, we can trivially bound our final expression above by 1, should this condition not be met.

We denote by $h_b^{-1}$ the inverse of the restriction of $h_b$ to the interval $[1/2,1]$.
Note that $h_b(x) \leq y \implies x \geq h_b^{-1}(y)$ for $x \in [1/2, 1], y \in [0,1]$.

\begin{gather}
     h_b( P(\optrv = \bestGuess \vert H_T^{\text{IGR}} = h_t) ) \leq  \expect{h_T \sim H_T^{\text{IGM}}}{h_b(1 - Te^{g(T)}) + T e^{g(T)} N \log N}{}
\end{gather}
$\implies$\\
\begin{gather}
     P(\optrv = \bestGuess \vert H_T^{\text{IGR}} = h_t) \geq h_b^{-1}(\expect{h_T \sim H_T^{\text{IGM}}}{ h_b(1 - Te^{g(T)}) + T e^{g(T)} N \log N}{}) 
\end{gather}
$\implies$\\
\begin{gather}
     1 - P(\optrv = \bestGuess \vert H_T^{\text{IGR}} = h_t) \leq 1 - h_b^{-1}(\expect{h_T \sim H_T^{\text{IGM}}}{ h_b(1 - Te^{g(T)}) + T e^{g(T)} N \log N}{}).
\end{gather}

Finally, we state our final regret bound incorporating our restrictions/assumptions we have made.

\begin{gather}
    1 - P(\optrv = \bestGuess \vert H_T^{\text{IGR}} = h_t) \leq 1 - \idone{T e^{g(T)} \leq (N \log N)^{-1}} \bigg[ h_b^{-1}(\expect{h_T \sim H_T^{\text{IGM}}}{ h_b(1 - Te^{g(T)}) + T e^{g(T)} N \log N}{}) \bigg].
\end{gather}

This then proves the expanded form of the proposition.

The simpler form of the Proposition in Section \ref{sec-theory} then follows.
This is because as $T \to \infty, T e^{g(T)} \to 0$ and both $h_b(1 - Te^{g(T)}), T e^{g(T)} N \log N \to 0$ thus ensuring that $h_b^{-1}( h_b(1 - Te^{g(T)}) + T e^{g(T)} N \log N) \to 1$.

\end{proof}

We use the following result by \citet[their Theorem 6]{srinivas2009gaussian}, which we specialize in our notation. We use the term Gaussian Process despite the fact that the index set is finite, since the model includes observation noise. 

\begin{lemma}[\citeauthor{srinivas2009gaussian}, Concentration for a Gaussian Process]
\label{lemma-gp-concentration}
Consider a Gaussian Process $M$, with $N$ indices. Assume M uses a zero-mean prior with constant variance $\sigma_0^2$ and observation noise $\sigma_A$. The posterior process $M_t$ is obtained by conditioning on $t$ observations. The observations are obtained as $m_t = M_\star(a_t) + \epsilon'_t$, where $ \epsilon'_t $ are i.i.d random variables with support bounded by $[-\sigma_0, \sigma_0]$. Denote the RKHS norm of $M_\star$ under the GP prior with $ \| M_\star \|_k$. Denote by $\gamma^\star_t$ the maximum information gain about $M$ obtainable in $t$ timesteps. Then, for any $\Delta > 0$, and for any timestep $t$, we have
\begin{gather}
P \left[\vert M - M_\star \vert_\infty \leq \frac{\Delta}{2}  \right] \geq 1 - t e^{-\sqrt[3]{\frac{\left( \frac{\Delta}{2 \sigma_t^{ \text{max} } }  \right)^2-2 \| M_\star \|_k}{300 \gamma^\star_t}}}.
\end{gather}

\end{lemma}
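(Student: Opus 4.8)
The plan is to obtain this bound as a direct specialization of the RKHS concentration result of \citet{srinivas2009gaussian} (their Theorem 6), adapted to our finite index set and to the requirement that the confidence half-width be at most $\Delta/2$. First I would recall their guarantee: choosing the confidence parameter $\beta_t = 2\|M_\star\|_k^2 + 300\,\gamma_t^\star \ln^3(t/\delta)$, with probability at least $1-\delta$ the event $|\mu_{t-1}(a) - M_\star(a)| \leq \beta_t^{1/2}\,\sigma_{t-1}(a)$ holds simultaneously for every index $a$ and every timestep. The union bound over the $N$ indices and over time is already internal to their proof, and their bounded-noise hypothesis is exactly what our assumption that $\epsilon'_t$ is supported on $[-\sigma_A,\sigma_A]$ supplies; this is precisely why clipping the observation noise is needed, as emphasized in the ``Clipped Noise'' remark earlier.

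The second step is to convert this per-index confidence-bar statement into the sup-norm event $|M - M_\star|_\infty \leq \Delta/2$. I would replace the index-dependent posterior standard deviation $\sigma_{t-1}(a)$ by its worst case $\sigma_t^{\max} := \max_a \sigma_{t-1}(a)$, so that the event is guaranteed whenever $\beta_t^{1/2}\,\sigma_t^{\max} \leq \Delta/2$, i.e. whenever $\beta_t \leq \big(\Delta / (2\sigma_t^{\max})\big)^2$.

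The third step is purely algebraic: I would substitute the expression for $\beta_t$ into this inequality and solve for the failure probability $\delta$. Isolating the logarithmic term gives $300\,\gamma_t^\star \ln^3(t/\delta) \leq (\Delta/(2\sigma_t^{\max}))^2 - 2\|M_\star\|_k$; dividing, taking a cube root, and exponentiating yields $\delta = t\,\exp\!\big(-\sqrt[3]{((\Delta/(2\sigma_t^{\max}))^2 - 2\|M_\star\|_k)/(300\gamma_t^\star)}\big)$, which is exactly the failure probability appearing in the statement. Setting $\delta$ to this value and invoking the $1-\delta$ guarantee from the first step then completes the argument.

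The main obstacle, and the point I would treat most carefully, is the identification of the posterior object $M$ in $|M - M_\star|_\infty$ with the mean-based confidence statement furnished by \citet{srinivas2009gaussian}: their theorem controls the posterior mean $\mu_{t-1}$ against $M_\star$, whereas the event used downstream in Proposition \ref{proposition-regret-bound-payoffs} concerns the belief $M_t$ itself. Reconciling these requires either reading $M$ as the posterior mean, or adding a short argument that a posterior draw concentrates within confidence bars of half-width $\beta_t^{1/2}\sigma_t^{\max}$; it is for this reason that $\sigma_t^{\max}$ is retained explicitly in the bound rather than absorbed. A secondary bookkeeping point is the apparent $\|M_\star\|_k$ versus $\|M_\star\|_k^2$ mismatch between our statement and the original $\beta_t$, which I would resolve by fixing the convention for the RKHS-norm term.
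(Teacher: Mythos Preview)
The paper does not actually prove this lemma: it is stated as a direct specialization of Theorem~6 of \citet{srinivas2009gaussian} and given without argument. Your proposal supplies exactly the derivation the paper omits, and the algebraic route you take (set the Srinivas confidence half-width $\beta_t^{1/2}\sigma_t^{\max}$ equal to $\Delta/2$, solve the resulting equation for $\delta$) is the natural and correct way to obtain the displayed bound from their theorem.

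Your two caveats are well taken and in fact go beyond what the paper addresses. The mean-versus-posterior-draw issue (Srinivas et~al.\ control $\mu_{t-1}$, whereas the event here is written for the belief $M$) is never resolved in the paper; downstream it is used as if $M$ were the posterior mean. Likewise the $\|M_\star\|_k$ versus $\|M_\star\|_k^2$ discrepancy is real: Theorem~6 uses $2B$ with $B=\|f\|_k^2$, so the paper's statement appears to drop a square. Neither issue affects the structure of your argument, and flagging them is appropriate.
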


The above lemma requires knowledge of the RKHS norm and the maximum obtainable information gain.

\begin{lemma}[Worst-Case Constants]
\label{lemma-cosntants}
For any  kernel, we have
\begin{gather*}
\| M_\star \|_k \leq \frac1{\sigma_0^{-2}} \| M_\star \|_2^2 \;\; \text{and} \;\; \gamma^\star_t \leq \frac12 \log \det(I + \sigma_A^{-2} K). 
\end{gather*}
Moreover, for a strategy that maximizes information gain on payoffs, we have
\begin{gather*}
\textstyle \sigma_t^{ \text{max} } \leq \frac{ \sigma_A  \sigma_0}{ \sqrt{ \sigma_A^2 + (\frac{T}{N} - 1) \sigma_0^2} }.
\end{gather*} 
\end{lemma}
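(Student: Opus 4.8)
The plan is to treat the three inequalities as three essentially independent calculations, each obtained by specializing to the \emph{least informative} admissible prior, namely the kernel with independent entries $K=\sigma_0^2 I$. The guiding observation is that prior correlations can only help: they share information across entries, so relative to the independent kernel they can only shrink posterior variances and enlarge information gain. Hence, to obtain a bound valid for any kernel with constant prior variance $\sigma_0^2$, it suffices to analyze $K=\sigma_0^2 I$, for which every quantity has a closed form. Throughout I use that the index set is finite (of size $N$), so the (squared) RKHS norm is just the quadratic form $\|M_\star\|_k^2 = M_\star^\top K^{-1} M_\star$ appearing as $2\|M_\star\|_k$ in Lemma~\ref{lemma-gp-concentration}, and each observation of entry $i$ is a one-dimensional Gaussian update of that entry.

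For the RKHS-norm bound I would substitute the worst-case kernel $K=\sigma_0^2 I$ directly, giving $\|M_\star\|_k^2 = M_\star^\top(\sigma_0^2 I)^{-1}M_\star = \sigma_0^{-2}\|M_\star\|_2^2$; for a general kernel one writes $\|M_\star\|_k^2=\sum_i (u_i^\top M_\star)^2/\lambda_i$ in the eigenbasis of $K$ and controls it through the extreme eigenvalues, which are constrained by $\operatorname{tr}K = N\sigma_0^2$. For the information-gain bound I would invoke the standard identity for the mutual information between jointly Gaussian observations and the latent function, $I(y_A;M)=\tfrac12\log\det(I+\sigma_A^{-2}K_A)$, together with the monotonicity and submodularity of information gain in the observed set: any collection of $t$ noisy observations reveals no more about $M$ than fully resolving all $N$ coordinates, whose information content is $\tfrac12\log\det(I+\sigma_A^{-2}K)$. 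This yields $\gamma_t^\star \le \tfrac12\log\det(I+\sigma_A^{-2}K)$ for every kernel $K$.

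For the posterior-variance bound I would first note that under $\pi_{\mathrm{IGM}}$ the per-step information gain from querying entry $i$ is $\tfrac12\log(1+\sigma_{t,i}^2/\sigma_A^2)$, an increasing function of the current posterior variance $\sigma_{t,i}^2$; so on the independent kernel $\pi_{\mathrm{IGM}}$ is exactly the variance-greedy rule that always queries the entry of largest current variance. Because that rule keeps the query counts balanced to within one across the $N$ entries, the least-queried entry, which is precisely the one carrying the maximal posterior variance, has been sampled at least $\lfloor T/N\rfloor \ge T/N-1$ times. Finally the single-entry Gaussian update with prior variance $\sigma_0^2$, noise variance $\sigma_A^2$ and $n$ observations gives posterior variance $(\sigma_0^{-2}+n\sigma_A^{-2})^{-1}=\sigma_0^2\sigma_A^2/(\sigma_A^2+n\sigma_0^2)$; substituting $n\ge T/N-1$ and taking the square root delivers the claimed bound on $\sigma_t^{\max}$.

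The main obstacle I anticipate is making the ``correlations only help'' reduction rigorous for the posterior-variance and information-gain claims: one must argue that running $\pi_{\mathrm{IGM}}$ under the \emph{true} correlated kernel attains a maximal posterior standard deviation no larger than that of round-robin under the independent kernel, rather than merely asserting it. This requires a monotonicity statement for the Gaussian conditioning map, a Schur-complement argument showing that conditioning on additional correlated coordinates cannot increase any marginal posterior variance, combined with a careful treatment of the greedy schedule and its tie-breaking. A secondary subtlety is the information-gain upper bound when the same entry is observed repeatedly, where the multiset Gram matrix is singular and the determinant identity must be applied to the effective per-entry precision $n_i\sigma_A^{-2}$ rather than to raw repeated rows.
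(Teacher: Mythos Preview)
Your approach is essentially the paper's: for the RKHS norm and the posterior-variance bounds the paper simply asserts that ``the independent kernel represents the worst case'' and computes under $K=\sigma_0^2 I$, exactly as you do. Your derivation of the variance formula via balanced greedy counts under the independent kernel is the intended argument, just spelled out in more detail than the paper provides.

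The one genuine difference is the information-gain bound. You go through monotonicity/submodularity, arguing that $t$ queries can reveal at most as much as observing every coordinate once, and you flag the repeated-observation subtlety yourself. The paper instead writes the information gain directly as a log-determinant ratio,
\[
\gamma_t^\star \;=\; \tfrac12 \log \frac{\det(I+\sigma_A^{-2}K)}{\det(I+\sigma_A^{-2}\Sigma)} \;\le\; \tfrac12 \log \det(I+\sigma_A^{-2}K),
\]
where $\Sigma$ is the posterior covariance after the $t$ observations, and drops the denominator because $I+\sigma_A^{-2}\Sigma \succeq I$. This is shorter and sidesteps the repeated-observation issue entirely, since the posterior $\Sigma$ already absorbs all repetitions. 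Either route works, but the paper's is the cleaner one here.

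Your stated obstacle, making ``correlations only help'' rigorous via a Schur-complement monotonicity argument, is a fair concern; the paper does not address it either and simply takes the worst-case claim as evident.
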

\begin{proof}
The inequalities for posterior variance and the RHKS norm are obtained by using the independent kernel, which represents the worst-case. The inequality for information gain follows by writing
\begin{gather}
\gamma^\star_t = \frac12 \log \frac{\det(I + \sigma_A^{-2} K)}{\det(I + \sigma_A^{-2} \Sigma)} \leq \frac12 \log \det(I + \sigma_A^{-2} K). 
\end{gather}
The inequality follows since the denominator is greater than one. Here, we denoted the prior covariance with $K$ and the posterior covariance with $\Sigma$. 
\end{proof}

\begin{corollary}
\label{cor-sample}
For a strategy that maximizes the payoff information gain and for any time-step $T$, we have: 
\begin{gather}
P \left[\vert M_t - M_\star \vert_\infty \leq \frac{\Delta}{2}  \right] \geq 1 - T e^{g(T)}, \;\; \text{where} \;\; g(T) = \mathcal{O}(-\sqrt[3]{\Delta^2 T} )
\nonumber
\end{gather}
Specifically,
\begin{gather}
    g(T) = -\sqrt[3]{\frac{\left( \frac{\Delta}{2} \right)^2 
    \frac{  \sigma_A^2 + (\frac{T}{N} - 1) \sigma_0^2} { \sigma^2_A  \sigma^2_0}
        -2 \frac1{\sigma_0^{-2}} \| M_\star \|_2^2 }{300 \frac12 \log \det(I + \sigma_A^{-2} K) }}.
\nonumber
\end{gather}
\end{corollary}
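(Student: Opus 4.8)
The plan is to obtain the corollary directly by plugging the worst-case constants of Lemma \ref{lemma-cosntants} into the Gaussian Process concentration bound of Lemma \ref{lemma-gp-concentration} (with $t = T$), and then reading off the asymptotic order. Lemma \ref{lemma-gp-concentration} already supplies
\[
P\!\left[\vert M_t - M_\star \vert_\infty \leq \tfrac{\Delta}{2}\right] \geq 1 - T\,e^{-\sqrt[3]{X}}, \qquad X = \frac{\left(\frac{\Delta}{2\sigma_t^{\text{max}}}\right)^2 - 2\|M_\star\|_k}{300\,\gamma_t^\star},
\]
so the entire task reduces to replacing the three problem-dependent quantities $\sigma_t^{\text{max}}$, $\|M_\star\|_k$ and $\gamma_t^\star$ by the explicit bounds from Lemma \ref{lemma-cosntants} while keeping the right-hand side a valid lower bound on the probability.

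The first — and only delicate — step is tracking the inequality directions. Writing the target exponent as $g(T) = -\sqrt[3]{Y}$, I need $Y \leq X$, so that $-\sqrt[3]{Y} \geq -\sqrt[3]{X}$, hence $e^{g(T)} \geq e^{-\sqrt[3]{X}}$ and $1 - T e^{g(T)} \leq 1 - T e^{-\sqrt[3]{X}} \leq P[\,\cdots]$. To make $Y$ a lower bound on $X$, I lower-bound its numerator and upper-bound its denominator. Since $\left(\frac{\Delta}{2\sigma_t^{\text{max}}}\right)^2$ is decreasing in $\sigma_t^{\text{max}}$, substituting the \emph{upper} bound $\sigma_t^{\text{max}} \leq \sigma_A\sigma_0/\sqrt{\sigma_A^2 + (T/N-1)\sigma_0^2}$ yields the lower bound $\left(\frac{\Delta}{2}\right)^2\frac{\sigma_A^2+(T/N-1)\sigma_0^2}{\sigma_A^2\sigma_0^2}$; and since $-2\|M_\star\|_k$ is decreasing in $\|M_\star\|_k$, substituting $\|M_\star\|_k \leq \sigma_0^2\|M_\star\|_2^2$ lower-bounds that term by $-2\sigma_0^2\|M_\star\|_2^2$. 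For the denominator, substituting $\gamma_t^\star \leq \tfrac12\log\det(I+\sigma_A^{-2}K)$ gives the required upper bound. Plugging these in reproduces exactly the stated explicit form of $g(T)$.

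The remaining step is the asymptotic claim $g(T) = \mathcal{O}(-\sqrt[3]{\Delta^2 T})$. The denominator of $Y$, equal to $150\log\det(I + \sigma_A^{-2}K)$, is a constant independent of $T$, since the kernel $K$ ranges only over the $N$ fixed indices. In the numerator, the dominant term $\left(\frac{\Delta}{2}\right)^2\frac{T/N-1}{\sigma_A^2}$ grows linearly in $T$ with a coefficient proportional to $\Delta^2$, while the $-2\sigma_0^2\|M_\star\|_2^2$ term is constant; hence $Y = \Theta(\Delta^2 T)$ and $g(T) = -\sqrt[3]{Y} = \mathcal{O}(-\sqrt[3]{\Delta^2 T})$. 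I expect no obstacle beyond the bookkeeping of inequality directions in the middle paragraph; once those are pinned down the corollary is a mechanical substitution. The one subtlety worth flagging is ensuring the numerator of $Y$ stays positive (so that $\sqrt[3]{Y}$ is real and $g(T)<0$), which is automatic once $T$ is large enough for the linear term to dominate the constant $-2\sigma_0^2\|M_\star\|_2^2$, precisely the $T\to\infty$ regime of interest.
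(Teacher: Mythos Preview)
Your proposal is correct and matches the paper's intended derivation: the paper states Corollary~\ref{cor-sample} immediately after Lemmas~\ref{lemma-gp-concentration} and~\ref{lemma-cosntants} without a separate proof, so the argument is precisely the direct substitution of the worst-case constants into the concentration bound that you carry out. Your careful tracking of the inequality directions (so that the substituted $Y$ satisfies $Y \leq X$ and hence $1 - Te^{g(T)}$ remains a valid lower bound) and your remark on the eventual positivity of the numerator add rigor that the paper leaves implicit.
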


\begin{lemma}[Upper Bound on Entropy]
\label{lemma-entropy-ub}
For any  discrete random variable $x$ with $n$ outcomes, we have, for each outcome $i$
\begin{gather*}
 \entropy{x}{} \leq h_b( p_i ) + (1 - p_i) \log( n - 1).
\end{gather*}
\end{lemma}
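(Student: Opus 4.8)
The plan is to prove this Fano-type inequality by introducing an auxiliary binary random variable that records whether $x$ equals the distinguished outcome $i$. Specifically, I would set $y = \mathbbm{1}[x = i]$, so that $y$ is a Bernoulli random variable with parameter $p_i$, and crucially $y$ is a deterministic function of $x$. The whole argument then hinges on decomposing the entropy of $x$ along this indicator.

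First, since $y$ is determined by $x$, the joint entropy collapses: $\entropy{(x,y)}{} = \entropy{x}{}$. Applying the chain rule for entropy in the opposite order gives $\entropy{(x,y)}{} = \entropy{y}{} + \entropy{x}{y}$, where the second term is the conditional entropy of $x$ given $y$. Combining the two identities yields the decomposition $\entropy{x}{} = \entropy{y}{} + \entropy{x}{y}$.

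Next I would evaluate each piece. The marginal entropy of the Bernoulli variable is exactly $\entropy{y}{} = h_b(p_i)$ by definition of $h_b$. For the conditional term, I would average over the two values of $y$: when $y = 1$ (i.e. $x = i$) the value of $x$ is fully determined, so that conditional entropy vanishes, and when $y = 0$ the variable $x$ is supported on the remaining $n-1$ outcomes, whose entropy is at most $\log(n-1)$ by the maximum-entropy bound for a uniform distribution on $n-1$ symbols. This gives $\entropy{x}{y} \le p_i \cdot 0 + (1 - p_i)\log(n-1)$.

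Substituting both evaluations into the decomposition produces the claimed bound $\entropy{x}{} \le h_b(p_i) + (1-p_i)\log(n-1)$. The argument presents no substantial obstacle; the only point requiring care is recognizing that the correct auxiliary object is the indicator of the single outcome $i$ (rather than conditioning on the full outcome at once), together with the observation that because $y$ is a function of $x$ the chain rule introduces no surplus entropy. The remaining ingredients are simply the Bernoulli-entropy identity and the uniform maximum-entropy bound on $n-1$ symbols.
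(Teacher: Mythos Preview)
Your proof is correct and complete. It is the classical Fano-style argument: introduce the indicator $y=\mathbbm{1}[x=i]$, use that $y$ is a function of $x$ so that $\entropy{x}{}=\entropy{y}{}+\entropy{x}{y}$, evaluate $\entropy{y}{}=h_b(p_i)$, and bound the conditional term by $(1-p_i)\log(n-1)$ via the maximum-entropy bound on $n-1$ symbols.

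The paper takes a slightly different, more computational route. It works directly with the sum $\entropy{x}{}=-p_i\log p_i-\sum_{j\neq i}p_j\log p_j$ and applies Jensen's inequality to the convex function $t\mapsto t\log t$ on the $(n-1)$ remaining terms, obtaining $-\sum_{j\neq i}p_j\log p_j\leq -(1-p_i)\log\frac{1-p_i}{n-1}$, which rearranges to the claimed bound. So the paper never introduces an auxiliary random variable or invokes the chain rule; it replaces your ``maximum entropy on $n-1$ outcomes'' step with an explicit Jensen computation (which is of course how that maximum-entropy bound is proven anyway). Your packaging is arguably cleaner and more information-theoretic, highlighting the bound as a special case of the grouping/chain-rule identity; the paper's version is more self-contained in that it uses only convexity and no conditional-entropy machinery. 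Both are equally short and standard.
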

\begin{proof}
Without loss of generality, assume $i=1$.
\begin{align*}
\entropy{x}{} 
&= -p_1 \log p_1 - \sum_{j > 1} p_j \log(p_j) \\
&= -p_1 \log p_1 - (n-1) \sum_{j > 1} \frac{1}{n-1} p_1 \log(p_j) \\
& \overset{(a)}{\leq} -p_1 \log p_1 - (n-1) \left( \sum_{j > 1} \frac{p_j}{n-1} \right) \log \left( \sum_{j > 1} \frac{p_j}{n-1} \right)  \\
&= -p_1 \log p_1 - (1 - p_1) \log \left(  \frac{1 - p_1}{n-1} \right)  \\
&= -p_1 \log p_1 - (1 - p_1) \log \left(  \frac{1 - p_1}{n-1} \right)  \\
&= h_b( p_j ) + (1 - p_j) \log( n - 1 )
\end{align*}
There, (a) follows from Jensen's inequality applied to the function $x \log x$.
\end{proof}

\end{document}